\newtheorem{theorem}{Theorem}
\numberwithin{theorem}{section}
\newtheorem{corollary}[theorem]{Corollary}
\newtheorem{lemma}[theorem]{Lemma}
\newtheorem{proposition}[theorem]{Proposition}
\theoremstyle{definition}
\newtheorem{definition}[theorem]{Definition}
\newtheorem{example}[theorem]{Example}
\newtheorem{remark}[theorem]{Remark}
\newcommand{\atp}{\otimes_\alpha}
\newcommand{\ptp}{\otimes_\pi}
\DeclareMathOperator{\ptpc}{\widehat{\otimes}_\pi}
\newcommand{\Bil}{\mathsf{Bil}\hspace{0.3mm}}
\newcommand{\Bm}{\hspace{0.2mm}\mathfrak{B}\hspace{0.2mm}}
\newcommand{\rk}{\mathrm{rk}\hspace{0.3mm}}
\newcommand{\fua}{\hspace{0.2mm}\mathfrak{X}\hspace{0.2mm}}
\newcommand{\fub}{\hspace{0.2mm}\mathfrak{Y}\hspace{0.2mm}}
\newcommand{\Qe}{\mathbb{Q}_{p,\mu}}
\newcommand{\Q}{\mathbb{Q}}
\newcommand{\R}{\mathbb{R}}
\newcommand{\N}{\mathbb{N}}
\newcommand{\Hi}{\mathcal{H}}
\newcommand{\Ki}{\mathcal{K}}
\newcommand{\JJ}{\mathcal{V}}
\newcommand{\Ba}[2]{\mathcal{B}(#1,#2)}
\newcommand{\Bad}[2]{\mathcal{B}_{\mathrm{ad}}(#1,#2)}
\newcommand{\M}{\mathsf{M}_\infty}
\newcommand{\CH}{\mathbb{H}}
\newcommand{\T}{\mathcal{T}}
\newcommand{\C}{\mathcal{C}}
\newcommand{\Cad}{\mathcal{C}_{\mathrm{ad}}}
\newcommand{\W}{\mathscr{W}}
\newcommand{\X}{\mathscr{X}}
\newcommand{\Y}{\mathscr{Y}}
\newcommand{\Ji}{\mathcal{L}}
\newcommand{\Ws}{\mathcal{W}}
\newcommand{\AU}{\overline{\mathcal{U}}(\Hi)}
\newcommand{\hi}{\mbox{\tiny $\Hi$}}
\newcommand{\ki}{\mbox{\tiny $\Ki$}}
\newcommand{\U}{\mathcal{U}(\Hi)}
\newcommand{\invset}{\mathfrak{S}}
\newcommand{\teniso}{\mathscr{S}}
\newcommand{\bmid}{\,\,\big\lvert\,\,}
\newcommand{\Bmid}{\,\,\Big\lvert\,\,}
\newcommand{\bbmid}{\,\,\bigg\lvert\,\,}
\newcommand{\scom}{`}
\newcommand{\dcom}{``}
\newcommand{\card}[1]{\mathrm{card}(#1)}
\newcommand{\bra}[1]{\langle#1\rvert} 
\newcommand{\ket}[1]{\lvert#1\rangle}
\newcommand{\Id}{\mathrm{Id}}
\newcommand{\inprd}{\langle\hspace{0.32mm}\cdot\hspace{0.6mm},\cdot\hspace{0.32mm}\rangle}
\newcommand{\HSprod}{\langle\hspace{0.32mm}\cdot\hspace{0.6mm},\cdot\hspace{0.32mm}\rangle_{\mbox{\tiny $\T(\Hi,\Ki)$}}}
\newcommand{\hsbraket}[2]{\langle#1,#2\rangle_{\mbox{\tiny $\T(\Hi,\Ki)$}}}
\newcommand{\braket}[2]{\langle#1,#2\rangle}
\newcommand{\func}[1]{\langle#1,\cdot\hspace{0.4mm}\rangle}
\newcommand{\No}[1]{\|#1\|}
\newcommand{\PNo}[1]{\|#1\|_\pi}
\newcommand{\abs}[1]{\left|#1\right|}
\newcommand{\smnor}{\mbox{\tiny $\|\hspace{-0.5mm}\cdot\hspace{-0.5mm}\|$}}
\newcommand{\nperp}{\overset{\smnor}{\perp}}
\newcommand{\ihs}{W}
\newcommand{\Noh}[1]{\No{#1}_{\hi}}
\newcommand{\Nok}[1]{\No{#1}_{\ki}}
\newcommand{\braketh}[2]{\braket{#1}{#2}_{\hi}}
\newcommand{\braketk}[2]{\braket{#1}{#2}_{\ki}}
\newcommand{\op}{\mathrm{op}}
\newcommand{\dom}{\mathrm{dom}}
\newcommand{\ran}{\mathrm{ran}}
\newcommand{\ketbra}[2]{|#1\rangle\langle#2|}
\newcommand{\hot}{\otimes}
\newcommand{\isoc}{\mathcal{P}}
\newcommand{\con}{\overline{(\,\cdot\,)}}
\newcommand{\J}[1]{\mathcal{J}_{\mbox{\tiny $#1$}}}
\newcommand{\tr}{\mathrm{tr}}
\newcommand{\B}{\mathscr{B}}
\DeclareMathOperator{\opE}{\mathnormal E}
\newcommand{\E}[3]{\hspace{-0.7mm}\sideset{^{#1}\hspace{-0.5mm}}{_{#3}^{#2}}\opE}
\DeclareMathOperator{\opT}{\mathnormal T}
\newcommand{\Top}[1]{\sideset{^{#1}}{}\opT}
\begin{document}

\title{The tensor product of $p$-adic Hilbert spaces}

\author{Paolo Aniello,$^{1,2}$\thanks{Email: paolo.aniello@na.infn.it} \hspace{1.2mm}
Lorenzo Guglielmi,$^{3,4}$\thanks{Email: lorenzo.guglielmi@unicam.it} \hspace{1.2mm}
Stefano Mancini,$^{3,4}$\thanks{Email: stefano.mancini@unicam.it} \hspace{0.6mm}
and Vincenzo Parisi\,$^{3,4}$\thanks{Email: vincenzo.parisi@unicam.it. (Present address: \textit{CONCEPT Lab, Fondazione Istituto Italiano di~Tecnologia,
via E.~Melen 83, Genova, 16152, Italy})}}

\date{
{\normalsize
$^1$\textit{Dipartimento di Fisica  ``Ettore Pancini'', Universit\`a  di Napoli ``Federico II'',\\
Complesso Universitario di Monte S.~Angelo, via Cintia, I-80126 Napoli, Italy}\\	
\vspace{3mm}
$^2$\textit{Istituto Nazionale di Fisica Nucleare, Sezione di Napoli,\\
Complesso Universitario di Monte S.~Angelo, via Cintia, I-80126 Napoli, Italy}\\
\vspace{3mm}
$^3$\textit{School of Science and Technology, University of Camerino,\\
via Madonna delle Carceri, 9, Camerino, I-62032, Italy}\\
\vspace{3mm}
$^4$ \textit{Istituto Nazionale di Fisica Nucleare, Sezione di Perugia,\\
via A.~Pascoli, I-06123 Perugia, Italy}
}}
\maketitle
	
\begin{abstract}
\noindent
In the framework of quantum mechanics over a quadratic extension of the ultrametric field of $p$-adic numbers, we introduce a notion of tensor product of $p$-adic Hilbert spaces. To this end, following a standard approach, we first consider the algebraic tensor product of $p$-adic Hilbert spaces. We next define a suitable norm on this linear space. It turns out that, in the $p$-adic framework, this norm is the analogue of the projective norm associated with the tensor product of real or complex normed spaces. Eventually, by metrically completing the resulting $p$-adic normed space, and equipping it with a suitable inner product, we obtain the tensor product of $p$-adic Hilbert spaces. That this is indeed the correct $p$-adic counterpart of the tensor product of complex Hilbert spaces is also certified by establishing a natural isomorphism between this $p$-adic Hilbert space and the corresponding Hilbert-Schmidt class. Since the notion of subspace of a $p$-adic Hilbert space is highly nontrivial, we finally study the tensor product of subspaces, stressing both the analogies and the significant differences with respect to the standard complex case. These findings should provide us with the mathematical foundations necessary to explore quantum entanglement in the $p$-adic setting, with potential applications in the emerging field of $p$-adic quantum information theory.
\end{abstract}

\tableofcontents

 	
\section{Introduction}

 During the final part of the 20th century, $p$-adic mathematical physics emerged as an effort to elaborate a non-Archimedean approach to space-time at very small scales. In fact, it is a general belief that standard quantum mechanics over the field of complex numbers is no more applicable below a scale of the order of Planck’s length ($l_\text{P}=\sqrt{\hbar G/c^3}\approx 10^{-35}\text{m}$), which is conjectured to be the smallest measurable length~\cite{peres1960quantum}. This belief led some theoretical physicists to consider the formulation of a non-Archimedean quantum physics as a viable description of quantum gravity and string theory~\cite{volovich1987p,volovich1987p2,aref1991beyond,khrennikov1991real}. The adoption of the field $\mathbb{Q}_p$ of $p$-adic numbers, where $p$ is a prime number, is strongly favored because it is, up to isomorphisms, the only complete non-Archimedean field obtainable by metrically extending rational numbers~\cite{gouvea1997p,robert2013course}.
 
 According to the models proposed so far, the field $\mathbb{Q}_p$ can be invoked at two different levels when formulating a $p$-adic quantum mechanical theory. On the one hand, one may simply resort to wave functions of the form $\psi\colon\mathbb{Q}_p\to \mathbb{C}$ (or, more generally, complex wave functions on a $p$-adic manifold), in such a way as to incorporate the non-Archimedean behavior of the theory directly into the space-time coordinates~\cite{ruelle1989quantum,meurice1989quantum, khrennikov1991p,albeverio1996p,vourdas2013quantum,vourdas2017finite,zuniga2024p1,zuniga2024p2}. On the other hand, $\mathbb{Q}_p$ --- or rather, for technical reasons related, e.g., to the role played by a suitable conjugation map in the theory, a \emph{quadratic extension} of this field  --- may be adopted as the base field of the carrier vector space of physical states~\cite{khrennikov1993statistical,albeverio1997representation,albeverio1997spectrum,albeverio1997fourier,albeverio1998regularization,albeverio2009p}. This is both a more subtle and more significant modification of the standard formulation of quantum mechanics that involves, among other aspects, the development of new mathematical tools and a suitable interpretation of the physically observable quantities~\cite{aniello2023trace,aniello2024quantum,aniello2023states}.

In a preceding paper~\cite{aniello2023trace}, a new approach to quantum mechanics over the non-Archimedean field of $p$-adic numbers has been proposed that mimics, to some extent, the algebraic formulation of standard quantum mechanics~\cite{strocchi2008introduction}. The main novelty of this approach lies in a suitable definition of a $p$-adic Hilbert space and its remarkable consequences.
A central aspect of our definition of a $p$-adic Hilbert space is that the existence of orthonormal bases is part of the fundamental axioms rather than an effect of these axioms. This peculiar fact is due to the very special interplay between the non-Archimedean norm and the scalar product~\cite{aniello2023trace}, which has no parallel in the standard complex case.

As mentioned previously, in our formulation of $p$-adic quantum mechanics, physical states are defined as linear functionals on a $*$-algebra of observables over a quadratic extension $\mathbb{Q}_{p,\mu}$ of the field of $p$-adic numbers, where $\mu$ is a parameter labeling the various possible quadratic extensions of $\mathbb{Q}_p$ (for generalities regarding the quadratic extensions of $p$-adic numbers, see~\cite{gouvea1997p,robert2013course}). At this stage, as pointed out in~\cite{aniello2023trace}, an important step forward in developing a complete theory of $p$-adic quantum mechanics is to define the \emph{tensor product} of two $p$-adic Hilbert spaces $\Hi$ and $\Ki$, to provide a mathematical description of a \emph{composite physical system}. In our present contribution, we address this issue precisely.

Let us briefly outline our strategy. First, following the usual approach, we define the \emph{algebraic} tensor product~\cite{ryan2002introduction} $\Hi\atp\Ki$ of two $p$-adic Hilbert spaces, which amounts to considering them as mere linear spaces over the field $\mathbb{Q}_{p,\mu}$. Next, a non-Archimedean norm is introduced, suitable for achieving a $p$-adic normed space $\Hi\ptp\Ki$ out of the initial algebraic tensor product $\Hi\atp\Ki$. This norm may be regarded as the ultrametric analogue of the standard \emph{projective norm}~\cite{ryan2002introduction} associated with the algebraic tensor product of two real or complex Banach spaces  --- which explains the meaning of the symbol $\Hi\ptp\Ki$ we adopt here --- and, accordingly, the associated Banach space completion will be denoted by $\Hi\ptpc\Ki$. Eventually, by endowing the $p$-adic Banach space $\Hi\ptpc\Ki$ with a suitable inner product, a $p$-adic Hilbert space is obtained; i.e., the ($p$-adic Hilbert space) tensor product $\Hi\otimes\Ki$ of the original Hilbert spaces $\Hi$ and $\Ki$. In particular, an orthonormal basis is explicitly constructed in such a way as to satisfy a fundamental axiom of a $p$-adic Hilbert space (i.e., the existence of such a basis). We also show that the tensor product $p$-adic Hilbert space $\Hi\otimes\Ki$ is isomorphic to the space of trace class operators --- which in turn, endowed with a suitable scalar product, may be identified with the space of Hilbert-Schmidt operators (the so-called $p$-adic trace/Hilbert-Schmidt class) --- from $\Hi$ into $\Ki$. Obtaining this result may be regarded as a sort of \emph{litmus test} that our construction (which, differently form the standard complex setting, involves the projective tensor product) provides indeed the correct $p$-adic analogue of the usual tensor product of two complex Hilbert spaces.

Finally, we study how the tensor product $\Hi\otimes\Ki$ behaves w.r.t.\ the subspaces of the original $p$-adic Hilbert spaces $\Hi$ and $\Ki$. In this regard, among many other technical issues, a fundamental point that should be considered is the following. The notion of subspace itself is not as simple and straightforward in the $p$-adic setting as it is in the complex case~\cite{aniello2024quantum}. Accordingly, the analogies --- as well as several nontrivial differences --- with respect to the standard complex setting, are analyzed throughout the article.

The paper is structured as follows. In Section~\ref{sec.2}, we recall the main definitions and properties of $p$-adic Hilbert spaces, as well as some basic facts about linear operators between $p$-adic Hilbert spaces that are relevant for our purposes. Section~\ref{sec.3} is the core part of the paper, since it is devoted to the construction of the tensor product of two $p$-adic Hilbert spaces, along the lines traced above. Next, in Section~\ref{sec.4}, we prove the aforementioned isomorphism between the tensor product $\Hi\otimes\Ki$ and the $p$-adic trace/Hilbert-Schmidt class of operators from $\Hi$ into $\Ki$. The implementation of this isomorphism involves a notion of anti-unitary operator suitable for the $p$-adic framework. In Section~\ref{sec.5}, we analyze the properties of the tensor product of two $p$-adic Hilbert spaces w.r.t.\ the subspaces of the two factor spaces. Finally, in Section~\ref{sec.6}, conclusions are drawn,
and we take a look at some research prospects.


\section{Preliminaries}
\label{sec.2}

In this section, we collect some basic results and tools that will be relevant for our later purposes. To begin with, we discuss the $p$-adic normed and Hilbert spaces over a \emph{quadratic extension} $\Qe$ of the field $\Q_p$ of $p$-adic numbers (for a comprehensive discussion of $\Q_p$ and its quadratic extensions, the reader may refer, e.g., to~\cite{gouvea1997p,robert2013course}). We next focus on the classes of bounded, adjointable, and trace class/Hilbert-Schmidt operators between $p$-adic Hilbert spaces.

Even if the reader may find a more detailed account in the preceding papers~\cite{aniello2023trace,aniello2024quantum,aniello2023states} --- where, for the sake of simplicity, linear operators defined on a \emph{single} Hilbert space are considered --- 
it is worth discussing here how the main results can be extended to operators acting, in general, between \emph{two different} $p$-adic Hilbert spaces --- say, $\Hi$ and $\Ki$ --- as this generalization will provide us with an essential tool for studying, in Section~\ref{sec.4}, the relation between the $p$-adic trace/Hilbert-Schmidt class operators from $\Hi$ to $\Ki$ and the tensor product $\Hi\otimes\Ki$ of the two Hilbert spaces.

\subsection{\texorpdfstring{$p$}{Lg}-adic Hilbert spaces}\label{subsec.2.1}

We recall that a \emph{$p$-adic normed space} is a pair $(X,\No{\cdot})$, where $X$ is a vector space over $\Qe$, and $\No{\cdot}$ an \emph{ultrametric} (or non-Archimedean) norm defined on $X$, namely, a map $\No{\cdot}\colon X\rightarrow\R^+$ that is positive definite, absolute-homogeneous and satisfies the so-called \emph{strong triangle inequality}, i.e., 
\begin{equation}\label{1}
\No{x+y}\leq \max(\No{x}\,,\No{y}),\quad \forall x,y\in X.
\end{equation}
(More generally, we say that $\No{\cdot}\colon X\rightarrow\R^+$ is an \textit{ultrametric seminorm} if it is absolute-homogeneous and satisfies relation~\eqref{1}.)
If $X$ is complete w.r.t.\ the \emph{(ultra)-metric} induced by the norm $\No{\cdot}$, the pair $(X,\No{\cdot})$ is called a \emph{$p$-adic Banach space}~\cite{robert2013course,aniello2023trace,van1978non,aniello2024quantum,perez2010locally,narici71,schikhof2007ultrametric}.
In the following, we will consider \emph{separable} $p$-adic Banach spaces only.

\begin{definition}\label{def.orth}
Let $(X,\No{\cdot})$ be a $p$-adic normed space. We say that two vectors $x,y$ in $X$ are (mutually) \emph{norm-orthogonal} --- and we write $x\nperp y$ --- if the condition 
 \begin{equation}\label{eq.2n}
    \No{\alpha x+\beta y}=\max(\No{\alpha x}\,,\No{\beta y}) 
 \end{equation}
 holds for every $\alpha,\beta\in\Qe$. We say that $\{x,y\}\subset X\setminus\{0\}$ is a \emph{norm-orthogonal system} if $x$ and $y$ are mutually norm-orthogonal; in particular, we say that $\{x,y\}$ is a \emph{normal system} if $x$ and $y$ are mutually norm-orthogonal, and, in addition, they are \emph{normalized}, i.e., $\No{x}=\No{y}=1$.
\end{definition}

\begin{remark}
The previous notion of orthogonality of vectors is equivalent to the so-called \emph{Birkhoff-James orthogonality}~\cite{Birkhoff,James}; i.e., $x\nperp y$ iff $\No{x}\le\No{x+\alpha y}$ for all $\alpha\in\Qe$ (clearly, in a real or complex Hilbert space this notion is equivalent to the usual one). The geometric intuition behind the latter definition is the following: The vector $x$ is orthogonal to the vector $y$ iff every triangle with one side equal to $x$ and another side parallel to $y$ has its third side that is not shorter than $x$.
Note that we have: $0\nperp x$, for all $x\in X$; $x\nperp x\;\implies\;x=0$; if $x\nperp y$, then $\alpha x\nperp \beta y$ for all $\alpha,\beta\in\Qe$.
\end{remark}

More generally, given a $p$-adic normed space $(X,\No{\cdot})$, we say that a (nonempty) set $\mathfrak{A}\subset X\setminus\{0\}$ is a \emph{norm-orthogonal system} if either $\mathfrak{A}$ is a singleton or, for any $x,y\in\mathfrak{A}$, with $x\neq y$, $\{x,y\}$ is a norm-orthogonal system; we say that $\mathfrak{A}$ is a \emph{normal system} if either $\mathfrak{A}=\{x\}$, with $\No{x}=1$, or, for any $x,y\in\mathfrak{A}$, with $x\neq y$, $\{x,y\}$ is a normal system.
Otherwise stated, a (nonempty) set $\mathfrak{A}\subset X\setminus\{0\}$ is a normal system if it consists of normalized vectors that are pairwise norm-orthogonal.

\begin{definition}\label{def.basis}
A countable subset $\mathfrak{B}$ of a (separable) $p$-adic Banach space $(X,\No{\cdot})$ is called a \emph{norm-orthogonal basis} (alternatively, a \emph{normal basis}) if $\mathfrak{B}$ is a norm-orthogonal (respectively, a normal) system and, moreover, if for each $x\in X$ there exists a map $\alpha_x\colon \mathfrak{B}\rightarrow \Qe$ such that
$x$ can be expressed --- in a unique way, up to reordering of the summands --- as the sum of an unconditionally norm-convergent series of the form
\begin{equation}
x=\sum_{b\in\mathfrak{B}}\alpha_x(b)\hspace{0.6mm} b.
\end{equation}
\end{definition}

\begin{remark}
Note that a norm-orthogonal basis in a $p$-adic Banach space is an \emph{unconditional} Schauder basis~\cite{aniello2023trace}.    
\end{remark}

If $\mathfrak{B}$ is a norm-orthogonal basis in a $p$-adic Banach space $(X,\No{\cdot})$, then,
for every vector $x\in X$, we have~\cite{aniello2023trace}:
\begin{equation} \label{norvec}
\|x\|=\sup_{b\in\mathfrak{B}}|\alpha_x(b)|\hspace{0.6mm} \|b\|
=\max_{b\in\mathfrak{B}}|\alpha_x(b)|\hspace{0.6mm} \|b\|.
\end{equation}

One can prove that a (separable) $p$-adic Banach space $(X,\No{\cdot})$ always admits a norm-orthogonal basis, while a normal basis exists in $X$ iff the additional condition $\No{X}=\abs{\Qe}$ is satisfied~\cite{robert2013course,aniello2023trace,perez2010locally} --- where $\No{X}$ and $\abs{\Qe}$ are subsets of $\R^+$ defined by
\begin{equation}
\No{X}\coloneqq\{\No{x}\mid x\in X\},\quad\abs{\Qe}\coloneqq\{\abs{\alpha}\mid\alpha\in\Qe\}.
\end{equation}
In the following, we will call a (separable) $p$-adic Banach space $(X,\No{\cdot}$) admitting a normal basis (equivalently, such that $\No{X}=\abs{\Qe}$) a \emph{normal $p$-adic Banach space}.
\begin{remark}\label{rem.2.1n}
A less stringent notion of orthogonality in a $p$-adic normed space $(X,\No{\cdot})$ is given by the so-called \emph{$t$-orthogonality}~\cite{perez2010locally}. Specifically, we say that two vectors $x,y$ in $X$ are \emph{$t$-orthogonal} --- and we write $x\nperp_t y$ --- if the condition
\begin{equation}\label{eq.5n}
\No{\alpha x+\beta y}\geq t\,\max\big\{\No{\alpha x},\No{\beta y}\big\}, \quad t\in (0,1],
\end{equation}
holds for every $\alpha,\beta\in\Qe$. Note that $x\nperp_t y$ iff $y\nperp_t x$. In the case where $t=1$, relation~\eqref{eq.5n} entails the orthogonality condition~\eqref{eq.2n}; that is, $1$-orthogonality coincides with the  --- previously introduced --- notion of orthogonality in Definition~\ref{def.orth}. We say that $\{x,y\}\subset X\setminus \{0\}$ is a $t$-orthogonal system if $x$ and $y$ are (mutually) $t$-orthogonal vectors, while a (nonempty) subset $\mathfrak{N}$ of $X\setminus\{0\}$ is $t$-orthogonal, if either $\mathfrak{N}$ is a singleton or, for any $x,y\in\mathfrak{N}$, with $x\neq y$, $\{x,y\}$ is a $t$-orthogonal system. Finally, in the same vein of  Definition~\ref{def.basis}, a system $\bm{h}\equiv\{h_i\}_{i\in I}$ in $X$ is called a $t$-orthogonal basis (of X), if $\bm{h}$ is a $t$-orthogonal system and, additionally, if any element $x$ in $X$ can be expressed, in a unique way, as the sum of a norm-convergent series of the form $x=\sum_{i\in I}\alpha_ih_i$, where $\alpha_i\in\Qe$.
\end{remark}
\begin{remark}\label{rem.2.1}
Let $(X,\No{\cdot})$ be a $p$-adic Banach space. We recall that the set $c_0(I,X)$ --- where $I$ is a countable index set, i.e., $I=\{1,\ldots,n\}$ or $I=\N$ --- of \emph{zero-convergent} sequences in $X$ is defined as~\cite{robert2013course,aniello2023trace,aniello2024quantum}
\begin{equation}
    c_0(I,X)\coloneqq\{\{x_i\}_{i\in I}\mid x_i\in X,\;\mbox{$\lim_i x_i=0$}\}
\end{equation}
--- in the case where $I$ is finite, we set $\lim_i x_i\equiv 0$. It is possible to prove that $c_0(I,X)$, endowed with the so-called \emph{sup-norm} $\No{\cdot}_\infty$, defined for any $x=\{x_i\}_{i\in I}\in c_0(I,X)$ as
\begin{equation}
    \No{x}_\infty\coloneqq \sup_{i\in I}\No{x_i},
\end{equation}
is a $p$-adic Banach space~\cite{robert2013course,aniello2023trace,aniello2024quantum}. In particular, with $X=\Qe$, viewed as a one dimensional vector space equipped with the norm $\abs{\;\cdot\;}$, we have the Banach space 
\begin{equation}
c_0(I,\Qe)\coloneqq\{\{\alpha_i\}_{i\in I}\mid \alpha_i\in \Qe,\;\mbox{$\lim_i \alpha_i=0$}\}.
\end{equation}
\end{remark}
We next recall the notion of a $p$-adic Hilbert space~\cite{aniello2023trace,aniello2024quantum}. As in the standard complex setting, this notion stems from a suitable definition of an inner product, \emph{but}, remarkably, in the $p$-adic setting the norm and the inner product are not related in the usual way.

Let $(X,\No{\cdot})$ be a $p$-adic Banach space (or, more generally, a $p$-adic normed space) over a quadratic extension $\Qe$ of $\Q_p$. We call a map $\inprd\colon X\times X \rightarrow \Qe$ a \emph{non-Archimedean inner product} on $X$, if it is \emph{sesquilinear}, \emph{Hermitian}, and satisfies the \emph{Cauchy-Schwarz inequality}, i.e., 
\begin{equation}
\abs{\braket{x}{y}}\leq \No{x}\No{y}, \quad \forall x,y\in X.
\end{equation} 
 We will call a triple $(X,\No{\cdot},\inprd)$ --- where $(X,\No{\cdot})$ is a $p$-adic Banach space, and $\inprd$ a non-Archimedean inner product --- an \emph{inner-product $p$-adic Banach space}. 
\begin{definition}
Let $(X,\No{\cdot},\inprd)$ be an inner-product $p$-adic Banach space. We say that two vectors $x,y$ in $X$ are \emph{inner-product-orthogonal} (in short, IP-orthogonal), and we write $x\perp y$, 
if $\braket{x}{y}=0$. 
\end{definition}
\begin{remark}
It is worth observing that in an inner-product $p$-adic Banach space, in general,  $\No{x}\neq\sqrt{\abs{\braket{x}{x}}}$, i.e., the ultrametric norm does not stem directly from the non-Archimedean inner product~\cite{aniello2023trace,aniello2024quantum}. Accordingly, in the $p$-adic setting, we have two distinct natural notions of orthogonality: the IP-orthogonality and the (previously introduced) norm-orthogonality. 
 \end{remark}
 A countable set of vectors $\Psi\equiv\{\psi_i\}_{i\in I}$ in an inner-product $p$-adic Banach space $(X,\No{\cdot})$ is said to be an \emph{orthonormal basis}, if $\Psi$ is a normal basis in $X$ (see Definition~\ref{def.basis}) and if, moreover, $\braket{\psi_i}{\psi_j}=\delta_{ij}$, for every pair of indices $i,j\in I$. 
In an inner product $p$-adic Banach space, the existence of an orthonormal basis is \emph{not} guaranteed. Nevertheless, when such a basis exists, several salient properties of complex Hilbert spaces have analogs in the p-adic setting.
It is precisely this observation that led us to set the following~\cite{aniello2023trace,aniello2024quantum}:
\begin{definition}\label{hilbertspace}
We call a quadruple $(\Hi, \No{\cdot},\inprd,\Psi)$, where $(\Hi, \No{\cdot},\inprd)$ is an inner-product $p$-adic Banach space, and $\Psi\equiv\{\psi_i\}_{i\in I}$ an orthonormal basis in $\Hi$, a \emph{$p$-adic Hilbert space}. We put $\dim(\Hi)\coloneqq\card{I}$.
\end{definition}
 \begin{remark}
 Let $(\Hi,\No{\cdot},\inprd,\Psi)$ be a $p$-adic Hilbert space ($\dim(\Hi)=\card{I}$). Owing to the continuity of the inner product w.r.t.\ both its arguments, it is not difficult to verify that any $x\in\Hi$
 admits a (unique) representation --- w.r.t.\ the orthonormal basis $\Psi$ --- as the sum on an unconditional norm convergent series of the form
 \begin{equation}\label{eq.7}
    x=\sum_{i\in I} \braket{\psi_i}{x}\psi_i.
\end{equation}
In particular, from relation~\eqref{eq.7}, one can derive the \emph{non-Archimedean Parseval} identity~\cite{aniello2023trace}:
\begin{equation}\label{eq.11n}
\No{x}=\max_{i\in I}\abs{\braket{\psi_i}{x}}.
\end{equation}
\end{remark}

\begin{example}\label{exa.2.4}
Consider the $p$-adic Banach space $(c_0(I,\Qe), \No{\cdot}_\infty)$ (see Remark~\ref{rem.2.1}). This space admits a normal basis $\bm{e}=\{e_i\}_{i\in I}$ consisting of all the sequences $e_i$ in $c_0(I,\Qe)$ of the form
\begin{equation}
    e_1=(1,0,0,\ldots),\quad e_2=(0,1,0,\ldots),\quad e_3=(0,0,1,\ldots),\quad\ldots\quad.
\end{equation}
 We can endow $c_0(I,\Qe)$ with the so-called \emph{canonical inner product} associated with $\bm{e}\equiv\{e_i\}_{i\in I}$, i.e., the map defined, for every pair of vectors $x=\sum_{i\in I}x_ie_i$, and $y=\sum_{j\in I}y_je_j$ in $c_0(I,\Qe)$, as
\begin{equation}\label{eq.10}
c_0(I,\Qe)\times c_0(I,\Qe)\ni (x,y)\mapsto\braket{x}{y}_c\coloneqq\sum_{i\in I}\overline{x}_iy_i\in \Qe,
\end{equation}
where the overline denotes the conjugate in $\Qe$.
By construction, we have: $\braket{e_i}{e_j}_c=\delta_{ij}$, $\forall i,j\in I$, i.e., $\bm{e}\equiv\{e_i\}_{i\in I}$ is an orthonormal basis. Therefore, $(c_0(I,\Qe),\No{\cdot}_\infty,\inprd_c,\bm{e})$ is a $p$-adic Hilbert space; in the literature~\cite{aniello2023trace,aniello2024quantum,diag2016,khren1990} it is sometimes referred to as the \emph{coordinate $p$-adic Hilbert space}, and denoted by $\mathbb{H}(I)$. 
\end{example}

\begin{remark}
If $\Hi$ is a $p$-adic Hilbert space, then $\No{\Hi}=\abs{\Qe}$, i.e., $\Hi$ is a normal $p$-adic Banach space. Conversely, given a normal $p$-adic Banach space $X$, and a normal basis $\bm{e}\equiv\{e_i\}_{i\in I}$, we can always endow $X$ with a suitable inner product --- the \emph{canonical inner product} associated with $\{e_i\}_{i\in I}$, see Example~\ref{exa.2.4} --- such that $\bm{e}$ becomes an orthonormal basis, and $X$ a $p$-adic Hilbert space~\cite{aniello2023trace,aniello2024quantum}.
\end{remark}

Let $(\Hi,\No{\cdot}_{\hi},\inprd_{\hi},\Phi)$ and $(\Ki,\No{\cdot}_{\ki},\inprd_{\ki},\Psi)$ be two $p$-adic Hilbert spaces over $\Qe$. We call a linear map $\ihs\colon \Hi\rightarrow \Ki$ an \emph{isomorphism} of $p$-adic Hilbert spaces if it is surjective, and satisfies the following conditions:
\begin{enumerate}[label=\tt{(A\arabic*)}]

\item $\No{\ihs x}_{\ki}=\No{x}_{\hi},\quad \forall x\in\Hi$;

\item $\braketk{\ihs x}{\ihs y}=\braketh{x}{y},\quad \forall x,y\in \Hi$;

\end{enumerate}
i.e., $W$ is an inner-product-preserving surjective isometry~\cite{aniello2023trace,aniello2024quantum}. In complete analogy with the standard complex setting, it is possible to prove that two $p$-adic Hilbert spaces $\Hi$ and $\Ki$ are isomorphic iff $\dim(\Hi)=\dim(\Ki)$~\cite{aniello2023trace,aniello2024quantum}. 

\begin{example}\label{exa.2.13}
Consider the coordinate $p$-adic Hilbert space $\mathbb{H}(I)$ introduced in Example~\ref{exa.2.4}. If $\Hi$ is a $p$-adic Hilbert space, and $\Psi=\{\psi_i\}_{i\in I}$ an orthonormal basis, it is not difficult to see that the map defined as
\begin{equation}\label{eq.12}
W_\Psi\colon \Hi\ni x=\sum_{i\in I}\braket{\psi_i}{x}\psi_i\mapsto W_\Psi(x)\coloneqq\{\braket{\psi_i}{x}\}_{i\in I}\equiv\breve{x}\in\mathbb{H}(I) 
\end{equation}
provides an isomorphism (the so-called \emph{canonical isomorphism}) of $\Hi$ onto $\mathbb{H}(I)$~\cite{aniello2023trace,aniello2024quantum}.
\end{example}
\begin{remark}\label{rem.2.4}
 Recall that the \emph{topological dual} $\Hi^\prime$ of a Hilbert space $\Hi$ is the set of all bounded linear functionals on $\Hi$. If $\Hi$ is a $p$-adic Hilbert space such that $\dim(\Hi)=\infty$, $\Hi^\prime$ is not a $p$-adic Hilbert space, but a $p$-adic Banach space isomorphic to the set $\ell^\infty(I,\Qe)$ of bounded sequences in $\Qe$~\cite{aniello2023trace,aniello2024quantum},
 \begin{equation}
     \ell^\infty(I,\Qe)\coloneqq\big\{\eta\equiv\{\eta_i\}_{i\in I}\mid \eta_i\in \Qe,\, |\eta_i|<\infty\big\}.
 \end{equation}
In fact, the mapping defined by
\begin{equation}
\J{\Hi}\colon \Hi\ni\psi\mapsto\J{\Hi}\psi\coloneqq\func{\psi}\in\Hi^\prime,
\end{equation}
provides a \emph{conjugate-linear isometry} of $\Hi$ into its dual $\Hi^\prime$, that is surjective iff $\dim(\Hi)<\infty$. The isomorphism between $\Hi^\prime$ and the ultrametric Banach space $\ell^\infty(I,\Qe)$ --- with $\card{I}=\dim(\Hi^\prime)$ --- is realized via the surjective isometry defined, for any given orthonormal basis $\Psi\equiv\{\psi_i\}_{i\in I}$ in $\Hi$, by
\begin{equation}
\mathcal{E}_{\Psi}\colon \ell^\infty(I,\Qe)\ni\xi=\{\xi_i\}_{i\in I}\mapsto\sum_{i\in I} \xi_i\func{\psi_i}=\sum_{i\in I}\xi_i\J{\Hi}\psi_i\in\Hi^\prime,
\end{equation}
where, if $I=\N$, the series converges in the weak$^\ast$-topology (see Proposition~$3.36$ in~\cite{aniello2023trace}).
\end{remark}

\subsection{Linear operators between \texorpdfstring{$p$}{Lg}-adic Hilbert spaces}\label{subsec.2.2}

We now turn our attention to some classes of bounded operators from a $p$-adic Hilbert space $\Hi$ into another $p$-adic Hilbert space $\Ki$. The case where
$\Hi=\Ki$ has already been analyzed in~\cite{aniello2023trace}, and the general case we are considering here can be treated by means of a straightforward modification of the methods developed therein.

Hereafter, for the sake of definiteness, it is worth assuming that $\dim(\Hi)=\dim(\Ki)=\infty$, and, accordingly, we will identify the index set $I$ of the previous subsection with $\N$. In all other cases --- $\dim(\Hi)<\dim(\Ki)=\infty$, etc.\ --- analogous results hold with rather obvious modifications. For the sake of notational simplicity, we will adopt the shorthand notations $c_0$, $\ell^\infty$ and $\CH$ for $c_0(\N,\Qe)$, $\ell^\infty(\N,\Qe)$ and $\CH(\N)$, respectively.

\subsubsection{Bounded and adjointable operators}\label{subsubsec.2.2.1}
As in the standard complex setting, linear operators between p-adic Hilbert spaces are also characterized by their domain --- a linear space that may not coincide with the original Hilbert space, though it is a linear subspace thereof --- which is typically specified prior to defining the operator itself.
Let $\Hi$ and $\Ki$ be two $p$-adic Hilbert spaces, and let $B\colon\dom(B)\rightarrow \Ki$, with $\dom(B)\subset\Hi$, be a linear operator. It is often useful to use the shorthand notation $B\colon\Hi\dashrightarrow\Ki$ to highlight the \emph{initial} space $\Hi$ and the \emph{final} space $\Ki$, rather than the \emph{domain} $\dom(B)\subset\Hi$ and \emph{range} $\ran(B)\coloneqq B(\dom(B))\subset\Ki$.

We say that $B\colon\Hi\dashrightarrow\Ki$ is a \emph{bounded} operator if
\begin{equation}\label{eq.15}
\No{B}\coloneqq\sup_{\dom(B)\ni x\neq 0}\frac{\No{Bx}}{\No{x}}<\infty.
\end{equation}
As in the complex setting, $B\colon\Hi\dashrightarrow\Ki$ is bounded iff it is \emph{continuous}~\cite{aniello2023trace,aniello2024quantum}; moreover, if $B$ is bounded, it can be extended by continuity to the norm-closure of its domain $\dom(B)$, which is a $p$-adic Banach space, but, in general, \emph{not} a $p$-adic Hilbert space. Therefore, in the $p$-adic setting (bounded) operators whose domain coincides with the full initial space deserve special attention, and we adopt an \emph{ad hoc} terminology and notation.

Consequently, we will say that a linear operator $B\colon\Hi\dashrightarrow\Ki$ is \textit{all-over}, if $\dom(B)=\Hi$, and, in such a case, we will write $B\colon\Hi\rightarrow\Ki$; otherwise stated, by writing $B\colon\Hi\rightarrow\Ki$ we mean precisely that $\dom(B)=\Hi$.

We will denote by $\Ba{\Hi}{\Ki}$ the set of all-over, bounded --- equivalently, continuous --- linear operators from $\Hi$ to $\Ki$; i.e., $\Ba{\Hi}{\Ki}\coloneqq\{B\colon\Hi\rightarrow\Ki\mid \mbox{$B$ bounded}\}$.

It turns out that, in the $p$-adic setting, the linear operators in $\Ba{\Hi}{\Ki}$ are best described and characterized as \emph{matrix operators} w.r.t.\ any given pair of orthonormal bases in the initial space $\Hi$ and in the final space $\Ki$. We then briefly introduce this (larger) class of linear operators.

Let $(A_{mn})$, $m,n\in\N$, be an infinite matrix in $\Qe$ (we will denote by $\M(\Qe)$ the set of all such matrices), and let $\Phi\equiv\{\phi_n\}_{n\in \N}$ and $\Psi\equiv\{\psi_m\}_{m\in \N}$ be orthonormal bases in $\Hi$ and $\Ki$, respectively. The matrix $(A_{mn})$ and the orthonormal bases $\Phi$, $\Psi$, determine
a matrix operator $\op_{\Phi,\Psi}(A_{mn})\colon\Hi\dashrightarrow\Ki$ defined by
\begin{align}\label{eq.13}
\dom(\op_{\Phi,\Psi}(A_{mn}))\coloneqq 
&
\Big\{\mbox{$\zeta=\sum_{n}z_n\phi_n\in\Hi\Bmid \sum_n A_{mn}z_n$ converges, $\forall m\in \N$, and}
\nonumber\\
&
\mbox{$\sum_{m}\big(\sum_n A_{mn}z_n\big)\psi_m$ converges too, i.e., $\big\{\sum_n A_{mn}z_n\big\}_{m\in \N}\in c_0$}\Big\},
\end{align}
\begin{equation}\label{eq.18s}
\op_{\Phi,\Psi}(A_{mn})\zeta\coloneqq \mbox{$\sum_m\big(\sum_n A_{mn}\braket{\phi_n}{\zeta}\big)\psi_m$},\quad \forall\zeta\in\dom(\op_{\Phi,\Psi}(A_{mn})).
\end{equation}

\begin{remark}
We stress that the matrix operator $\op_{\Phi,\Psi}(A_{mn})\colon\Hi\dashrightarrow\Ki$ is --- in general --- \emph{unbounded}.
\end{remark}

Let us focus on the case of a matrix operator $B=\op_{\Phi,\Psi}(B_{mn})\colon\Hi\rightarrow\Ki$, i.e., $\dom(B)=\Hi$. It is natural to wonder whether such an operator is bounded and, in the case where the answer is affirmative, whether \emph{every} bounded operator $B\in\Ba{\Hi}{\Ki}$ is a matrix operator of this form.

In analogy with the notation introduced in~\cite{aniello2023trace} --- where matrix operators acting on a single Hilbert space are considered --- we denote by $(\Hi,\Ki)_{\Phi,\Psi}$ the set of the \emph{all-over matrix operators} from $\Hi$ to $\Ki$, associated with the pair of orthonormal bases $\Phi\subset\Hi$ and $\Psi\subset\Ki$; i.e., we put 
\begin{equation}
(\Hi,\Ki)_{\Phi,\Psi}\coloneqq\big\{\op_{\Phi,\Psi}(B_{mn})\colon\Hi\dashrightarrow\Ki\mid (B_{mn})\in\M(\Qe),\,\,\dom(\op_{\Phi,\Psi}(B_{mn}))=\Hi\big\}.
\end{equation}
Actually, the set $(\Hi,\Ki)_{\Phi,\Psi}$ does not depend on a particular choice of the pair of orthonormal bases $\Phi\equiv\{\phi_n\}_{n\in\N}\subset\Hi$ and $\Psi\equiv\{\psi_m\}_{m\in\N}\subset\Ki$. Moreover, it can be proven~\cite{aniello2023trace} that the matrix operator $\op_{\Phi,\Psi}(B_{mn})$ is all-over iff 
\begin{equation}\label{eq.bondop}
    \lim_m B_{mn}=0,\;\forall n\in\N,\;\;\mbox{and}\;\;\sup_{m,n}\abs{B_{mn}}<\infty.
\end{equation}
Consider now a bounded operator $B\in\Ba{\Hi}{\Ki}$. Clearly, we have:
\begin{equation}\label{eq.bondop2}
\lim_m \braket{\psi_m}{B\phi_n}=0, \;\forall n\in \N,\,\,\mbox{and}\,\,\sup_{m,n}\abs{\braket{\psi_m}{B\phi_n}}\leq \No{B}.
\end{equation}
Therefore, setting $B_{mn}\equiv\braket{\psi_m}{B\phi_n}$, the matrix operator $\op_{\Phi,\Psi}(B_{mn})$ is all-over, and, in fact, one can easily check that 
\begin{equation}\label{eq.matop}
B=\sum_m\sum_nB_{mn}\braket{\phi_n}{\cdot}\psi_m=\sum_{n}\sum_{m}B_{mn}\func{\phi_n}\psi_m=\op_{\Phi,\Psi}(B_{mn}).
\end{equation}
Here, $(\braket{\phi_n}{\cdot}\psi_m)\zeta\coloneqq\braket{\phi_n}{\zeta}\psi_m$ --- i.e., $\braket{\phi_n}{\cdot}\psi_m\equiv \ketbra{\psi_m}{\phi_n}$ in Dirac's notation --- and the iterated series converge w.r.t.\ the strong operator topology of $\Ba{\Hi}{\Ki}$. Hence, we have that $\Ba{\Hi}{\Ki}\subset (\Hi,\Ki)_{\Phi,\Psi}$.

The crucial observation, at this point, is that the reverse inclusion holds too; i.e., one can prove the following result, which is a natural generalization of Theorem~$4.2$ in~\cite{aniello2023trace}:
\begin{theorem}\label{th.2.5}
Let $\Hi$ and $\Ki$ be $p$-adic Hilbert spaces. For every pair of orthonormal bases $\Phi=\{\phi_n\}_{n\in \N}\subset\Hi$ and $\Psi\equiv\{\psi_m\}_{m\in \N}\subset\Ki$, the following relation holds:
\begin{align}
\Ba{\Hi}{\Ki}
& =
(\Hi,\Ki)_{\Phi,\Psi}
\nonumber \\  \label{eq.20}
& =\big\{\op_{\Phi,\Psi}(B_{mn})\colon\Hi\dashrightarrow \Ki\mid \mbox{\rm $\lim_m B_{mn}=0$,  $n\in\mathbb{N}$, and  $\sup_{m,n}|B_{mn}|<\infty$}\big\}.
\end{align}
Moreover, every bounded operator $B\in\Ba{\Hi}{\Ki}$ can be expressed as in~\eqref{eq.matop}, and we have:
\begin{equation}\label{eq.21}
\No{B}=\sup_{m,n}\abs{B_{mn}}=\sup_n\No{B\phi_n}.
\end{equation}
\end{theorem}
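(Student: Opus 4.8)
The forward inclusion $\Ba{\Hi}{\Ki}\subset(\Hi,\Ki)_{\Phi,\Psi}$ together with the matrix representation \eqref{eq.matop} has already been secured in the discussion preceding the statement, so the plan is to supply the reverse inclusion and the norm identity \eqref{eq.21}; the middle equality in \eqref{eq.20} is merely the definition of $(\Hi,\Ki)_{\Phi,\Psi}$. For the reverse inclusion I would take an arbitrary all-over matrix operator $B=\op_{\Phi,\Psi}(B_{mn})\in(\Hi,\Ki)_{\Phi,\Psi}$ and prove that it is bounded. By the all-over criterion \eqref{eq.bondop}, the quantity $M\coloneqq\sup_{m,n}\abs{B_{mn}}$ is finite, and this single constant will turn out to be exactly $\No{B}$.

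The heart of the argument is a direct estimate of $\No{B\zeta}$ in which the non-Archimedean Parseval identity \eqref{eq.11n} converts operator-norm computations into suprema of scalar coefficients. Writing $\zeta=\sum_n z_n\phi_n\in\Hi$ with $z_n=\braket{\phi_n}{\zeta}$, the $m$-th coordinate of $B\zeta$ with respect to $\Psi$ is $\sum_n B_{mn}z_n$, whence \eqref{eq.11n} gives $\No{B\zeta}=\max_m\abs{\sum_n B_{mn}z_n}$. Now the strong triangle inequality \eqref{1} --- the feature that makes the ultrametric case cleaner than the complex one, since no summability hypothesis is needed --- yields $\abs{\sum_n B_{mn}z_n}\le\max_n\abs{B_{mn}}\abs{z_n}\le M\max_n\abs{z_n}$, and a second application of \eqref{eq.11n} identifies $\max_n\abs{z_n}=\No{\zeta}$. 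Hence $\No{B\zeta}\le M\No{\zeta}$ for every $\zeta$, so $B$ is bounded with $\No{B}\le M$; this proves $(\Hi,\Ki)_{\Phi,\Psi}\subset\Ba{\Hi}{\Ki}$ and, combined with the forward inclusion, the equality \eqref{eq.20}.

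It then remains to upgrade the inequality $\No{B}\le M$ to the full chain \eqref{eq.21}. Since $B\phi_n=\sum_m B_{mn}\psi_m$, Parseval gives $\No{B\phi_n}=\max_m\abs{B_{mn}}$, and taking the supremum over $n$ produces $\sup_n\No{B\phi_n}=\sup_{m,n}\abs{B_{mn}}=M$. For the reverse bound I would use that $B_{mn}=\braket{\psi_m}{B\phi_n}$ and invoke the Cauchy-Schwarz inequality together with the normalization $\No{\psi_m}=\No{\phi_n}=1$ to get $\abs{B_{mn}}\le\No{\psi_m}\,\No{B\phi_n}\le\No{B}$, hence $M\le\No{B}$. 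Chaining this with $\No{B}\le M$ from the previous step closes the loop and gives $\No{B}=\sup_{m,n}\abs{B_{mn}}=\sup_n\No{B\phi_n}$.

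I do not expect a serious obstacle: the non-Archimedean structure is precisely what makes the estimates transparent, the strong triangle inequality doing the work that Cauchy-Schwarz and absolute summability would do in the Archimedean setting. The only points requiring care are the well-definedness and convergence of the iterated series $\sum_m\big(\sum_n B_{mn}z_n\big)\psi_m$, which are guaranteed for $\zeta\in\Hi$ precisely by the all-over criterion \eqref{eq.bondop} (using $z_n\to 0$ and $\lim_m B_{mn}=0$), and the legitimacy of reading off $B\zeta$ coordinatewise, which rests on the unconditional norm-convergence of the expansion \eqref{eq.7}. As a byproduct, the identity \eqref{eq.21} shows that $\sup_{m,n}\abs{B_{mn}}$ is independent of the chosen pair of orthonormal bases, since the left-hand side $\No{B}$ manifestly is.
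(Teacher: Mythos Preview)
Your proof is correct and follows essentially the same route as the paper's: both establish the reverse inclusion via the Parseval-based estimate $\No{B\zeta}=\sup_m\big|\sum_n B_{mn}z_n\big|\le\sup_{m,n}|B_{mn}|\,\No{\zeta}$, and then obtain the norm identity~\eqref{eq.21} from $\No{B\phi_n}=\sup_m|B_{mn}|$ together with $\No{B}\ge\No{B\phi_n}$. The only cosmetic difference is that you invoke Cauchy--Schwarz to get $|B_{mn}|\le\No{B\phi_n}$, whereas this already follows from the Parseval identity you just computed; either way the argument closes identically.
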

\begin{proof}
See Appendix~\ref{app.a1}.
\end{proof}

By Theorem~\ref{th.2.5}, $\Ba{\Hi}{\Ki}$ coincides with the set $(\Hi,\Ki)_{\Phi,\Psi}$ of the all-over matrix operators from $\Hi$ into $\Ki$ --- independently of the choice of the orthonormal bases $\Phi\equiv\{\phi_n\}_{n\in\N}$ and $\Psi\equiv\{\psi_m\}_{m\in\N}$ --- and, moreover, the infinite matrices in $\M(\Qe)$ corresponding to bounded operators in $\Ba{\Hi}{\Ki}$ admit a complete characterization.

In the standard complex setting, every bounded operator admits a (Hilbert space) \emph{adjoint}, which is also a bounded operator. In the $p$-adic setting --- since the \emph{dual} $\Hi^\prime$ of $\Hi$ cannot be identified with $\Hi$ itself --- the notion of operator adjoint is not as straightforward as in the complex case. However, there is a distinguished class of bounded operators --- called \emph{adjointable} (bounded) operators --- that may be regarded as the true
$p$-adic counterpart of the bounded linear operators between complex Hilbert spaces.

Let us first recall that a bounded operator $B\in\Ba{\Hi}{\Ki}$ will admit a (generalized) \emph{adjoint} $B^\prime\in\Ba{\Ki^\prime}{\Hi^\prime}$ defined by
\begin{equation}
\Ki^\prime\ni \omega\mapsto B^\prime \omega \coloneqq\omega\circ B \in\Hi^\prime.
\end{equation}
However, it is possible to single out, within $\Ba{\Hi}{\Ki}$, a suitable class of bounded operators that admit a \emph{proper} Hilbert space adjoint; i.e., a suitable bounded operator in $\Ba{\Ki}{\Hi}$. In fact, with every $B\in\Ba{\Hi}{\Ki}$ we can associate a linear operator $B^\dagger\colon\Ki\dashrightarrow\Hi$  --- the so-called \emph{pseudo-adjoint} of $B$ --- defined as follows. 
We first put
\begin{equation}\label{eq.27}
\dom\big(B^\dagger\big)\coloneqq\{\psi\in\Ki\mid\mbox{
$\exists\, \zeta\in\Hi$
s.t.\ $\braket{\psi}{B\phi}=\braket{\zeta}{\phi}$, 
$\forall \phi\in\Hi=\dom(B)$} \}.
\end{equation}
It is clear that $\dom(B^{\dagger})$ is a linear subspace of $\Ki$; it can be regarded --- via the conjugate-linear isometry $\J{\Hi}$  (see Remark~\ref{rem.2.4}) --- as the set of all vectors $\psi\in\Ki$ such that the bounded functional $\braket{\psi}{B(\cdot)}\in\Hi^\prime$  can be identified with an element $\eta$ of $\Hi$, i.e., 
\begin{equation}
\psi\in\dom(B^\dagger)\;\;\iff\;\;\braket{\psi}{
B(\cdot)}=\func{\eta}=\J{\Hi}\eta.
\end{equation}
The condition 
\begin{equation}\label{eq.29}
  \big(\J{\Ki}\psi\big)(B\phi)=\braket{\psi}{B\phi}=\braket{\eta}{\phi}=\big(\J{\Hi}\eta\big)(\phi), \quad \forall\psi\in\dom(B^\dagger),\,\forall\phi\in\Hi
\end{equation}
 uniquely determines a linear operator $B^\dagger\colon\Ki\dashrightarrow\Hi$, whose domain $\dom\big(B^\dagger\big)$ is defined by~\eqref{eq.27}, by setting $B^\dagger\psi\coloneqq \eta$ (in fact, since $\J{\Hi}$ is a conjugate linear isometry, there exists a \emph{unique} vector $\eta\in\Hi$ such that $\J{\Hi}\eta=\big(\J{\Ki}\psi\big)\circ B$). Moreover, by~\eqref{eq.29}, one has that  
 \begin{equation}\label{eq.30}
\No{B^\dagger\psi}=\No{\eta}=\No{\J{\Hi}\eta}=\No{(\J{\Ki}\psi)\circ B}\leq\No{\J{\Ki}\psi}\,\No{B}=\No{\psi}\,\No{B},
\end{equation}
i.e., $\No{B^\dagger}\leq\No{B}$ and we see that $B^\dagger$ is a bounded operator.

One can show (along the lines traced in~\cite{aniello2023trace}) that either $\dom\big(B^\dagger\big)=\Ki$ or $\dom\big(B^\dagger\big)$ is a proper closed subspace of $\Ki$. Taking into account this fact, we now set the following: 
\begin{definition}
Let $B\in\Ba{\Hi}{\Ki}$. We say that $B$ is \emph{adjointable} if $\dom(B^\dagger)=\Ki$, i.e., if the bounded operator $B^\dagger$ --- the pseudo-adjoint of $B$ --- is all-over. In such a case, we write $B^\ast$ instead of $B^\dagger$ and call $B^\ast$ the \emph{adjoint} of $B$ in $\Ba{\Hi}{\Ki}$.    
\end{definition}
In what follows, we will denote by $\Bad{\Hi}{\Ki}$ the set of all adjointable bounded operators from $\Hi$ to $\Ki$. We now aim to provide a suitable matrix characterization of an adjointable operator. In particular, we state the following:
\begin{theorem}\label{th.2.7}
Let $B\in\Bad{\Hi}{\Ki}$, and let $B^\ast\colon\Ki\rightarrow \Hi$ be its adjoint. Then, $B^\ast\in\Bad{\Ki}{\Hi}$; i.e., $B^\ast$ is both bounded and adjointable. Specifically, given any pair of orthonormal bases $\Phi\equiv\{\phi_n\}_{n\in\N}$ in $\Hi$ and $\Psi\equiv\{\psi_m\}_{m\in\N}$ in $\Ki$, if $B=\op_{\Phi,\Psi}(B_{mn})\in\Bad{\Hi}{\Ki}$, then
\begin{equation}
B^\ast=\op_{\Psi,\Phi}(B^\ast_{mn}),
\end{equation}
where we have put $B_{mn}^\ast=\overline{B_{nm}}$, and the coefficients of the matrix $(B_{mn})$ associated with $B$ satisfy the following conditions:
\begin{enumerate}[label=\tt(\arabic*)]

\item \label{cond.ad.i}
$\sup_{m,n}\abs{B_{mn}}<\infty$;

\item \label{cond.ad.ii}
$\lim_m B_{mn}=0$, $\forall n\in\N$;

\item \label{cond.ad.iii}
$\lim_n B_{mn}=0$, $\forall m\in\N$.

\end{enumerate}
Moreover, it is clear that
\begin{equation}\label{norm.adj}
    \No{B^\ast}=\sup_{m,n}\abs{B_{mn}}=\No{B}.
\end{equation}

Conversely, if $B\colon\Hi\dashrightarrow\Ki$ is a matrix operator --- i.e., if $B=\op_{\Phi,\Psi}(B_{mn})$, for some pair of orthonormal bases $\Phi\equiv\{\phi_n\}_{n\in\N}\subset\Hi$ and $\Psi\equiv\{\psi_m\}_{m\in\N}\subset\Ki$ --- and, moreover, the coefficients of the matrix $(B_{mn})\in\M(\Qe)$  satisfy conditions \ref{cond.ad.i}--\ref{cond.ad.iii} above, then we have that $B\colon\Hi\rightarrow\Ki\in\Bad{\Hi}{\Ki}$, and, therefore, the adjoint $B^\ast\colon\Ki\rightarrow\Hi\in\Bad{\Ki}{\Hi}$ of $B$ can be expressed as a matrix operator of the form $B^\ast=\op_{\Psi,\Phi}(B^\ast_{mn})=\op_{\Psi,\Phi}(\overline{B_{nm}})$.
\end{theorem}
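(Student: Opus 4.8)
The plan is to reduce every assertion to the matrix characterization of all-over operators in Theorem~\ref{th.2.5} and its companion criterion \eqref{eq.bondop}, combined with the defining relation \eqref{eq.29} of the pseudo-adjoint. Since $B\in\Bad{\Hi}{\Ki}$, by definition $B^\ast=B^\dagger$ is all-over, and by the estimate \eqref{eq.30} it is bounded; applying Theorem~\ref{th.2.5} to $B^\ast\colon\Ki\to\Hi$ relative to $\Psi$ (initial) and $\Phi$ (final), $B^\ast$ is the matrix operator $\op_{\Psi,\Phi}(B^\ast_{mn})$ with entries $B^\ast_{mn}=\braket{\phi_m}{B^\ast\psi_n}$. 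Evaluating \eqref{eq.29} on the basis vectors $\phi_m$ and $\psi_n$ gives $\braket{\psi_n}{B\phi_m}=\braket{B^\ast\psi_n}{\phi_m}$, and the Hermitian property of the inner product then yields $B^\ast_{mn}=\overline{\braket{\psi_n}{B\phi_m}}=\overline{B_{nm}}$, as claimed.

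Next I would read off the three conditions together with the norm identity. Conditions \ref{cond.ad.i} and \ref{cond.ad.ii} are precisely the criterion \eqref{eq.bondop} applied to the all-over operator $B$ itself. Condition \ref{cond.ad.iii} instead follows by applying \eqref{eq.bondop} to the all-over operator $B^\ast$, whose matrix is $(\overline{B_{nm}})$: the requirement $\lim_m B^\ast_{mn}=0$ reads $\lim_m\overline{B_{nm}}=0$ for each $n$, i.e.\ --- relabelling indices and using that conjugation preserves absolute values --- $\lim_n B_{mn}=0$ for each $m$. The norm identity \eqref{norm.adj} is then immediate from \eqref{eq.21}, since $\sup_{m,n}\abs{B^\ast_{mn}}=\sup_{m,n}\abs{B_{mn}}$.

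To complete the claim $B^\ast\in\Bad{\Ki}{\Hi}$ it remains to verify that $B^\ast$ is itself adjointable, i.e.\ that $(B^\ast)^\dagger$ is all-over; in fact it equals $B$. This is the symmetry inherent in \eqref{eq.29}: conjugating the relation $\braket{B^\ast\psi}{\phi}=\braket{\psi}{B\phi}$ gives $\braket{\phi}{B^\ast\psi}=\braket{B\phi}{\psi}$ for all $\psi\in\Ki$, so that, for \emph{every} $\phi\in\Hi$, the vector $B\phi\in\Ki$ serves as the required witness placing $\phi$ in $\dom((B^\ast)^\dagger)$. Hence $\dom((B^\ast)^\dagger)=\Hi$, and $B^\ast$ is adjointable with $(B^\ast)^\ast=B$.

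For the converse, assume $(B_{mn})\in\M(\Qe)$ satisfies \ref{cond.ad.i}--\ref{cond.ad.iii}. By \ref{cond.ad.i}, \ref{cond.ad.ii} and \eqref{eq.bondop}, $B=\op_{\Phi,\Psi}(B_{mn})$ is all-over and bounded; likewise, by \ref{cond.ad.i}, \ref{cond.ad.iii} and \eqref{eq.bondop}, the matrix operator $C\coloneqq\op_{\Psi,\Phi}(\overline{B_{nm}})\colon\Ki\to\Hi$ is all-over and bounded. Writing $C_{mn}=\overline{B_{nm}}=\braket{\phi_m}{C\psi_n}$, the defining relation \eqref{eq.29} holds on basis vectors, since $\braket{C\psi_n}{\phi_m}=\overline{C_{mn}}=B_{nm}=\braket{\psi_n}{B\phi_m}$; extending this to arbitrary $\psi\in\Ki$, $\phi\in\Hi$ by sesquilinearity and continuity of the inner product --- via the unconditional convergence of the expansions in $\Psi$ and $\Phi$ --- shows that $C$ satisfies \eqref{eq.29}. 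As $C$ is all-over and the witness $\eta$ in \eqref{eq.29} is unique (the map $\J{\Hi}$ being injective), we conclude $\dom(B^\dagger)=\Ki$ and $B^\dagger=C$, whence $B\in\Bad{\Hi}{\Ki}$ with $B^\ast=\op_{\Psi,\Phi}(\overline{B_{nm}})$. Throughout, the structural content is furnished by Theorem~\ref{th.2.5} and \eqref{eq.bondop}, and the only genuinely delicate point is the sesquilinear/conjugation bookkeeping together with this passage from basis vectors to arbitrary vectors.
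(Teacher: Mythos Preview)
Your proof is correct and follows essentially the same route as the paper: both reduce the claims to the matrix characterization of all-over bounded operators (Theorem~\ref{th.2.5} and criterion~\eqref{eq.bondop}) and verify the adjoint relation~\eqref{eq.29} on basis vectors. The only noticeable difference is that the paper, in the forward direction, re-derives the matrix form of $B^\ast$ by an explicit $\ell^\infty/c_0$ argument (via the isomorphism $\mathcal{L}_\Phi$ and Proposition~3.36 of~\cite{aniello2023trace}), whereas you bypass this by invoking Theorem~\ref{th.2.5} directly on the already-known all-over bounded operator $B^\ast$; this is a legitimate and slightly cleaner shortcut.
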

\begin{proof}
See Appendix~\ref{app.a1}.
\end{proof}
\begin{remark}
From Theorem~\ref{th.2.7} it follows that, if $B=\op_{\Phi,\Psi}(B_{mn})$ 
is an adjointable operator, then $B^\ast$ is adjointable too, and $(B^\ast)^\ast=B$.
Moreover, $\Bad{\Hi}{\Ki}$ is a linear subspace of $\Ba{\Hi}{\Ki}$, as one can easily check by observing that for all $A,B\in\Bad{\Hi}{\Ki}$, and all $\alpha\in\Qe$, $\alpha A$ and $A+B$ are in $\Bad{\Hi}{\Ki}$.
\end{remark}

\subsubsection{The \texorpdfstring{$p$}{}-adic trace/Hilbert-Schmidt class}
\label{subsubsec.2.2.2}

As is well known, in the standard complex setting, the trace class and the Hilbert-Schmidt class, are two distinct operator spaces: a Banach space, in the first case, and a Hilbert space (endowed with the Hilbert-Schmidt scalar product) in the second~\cite{barry2015operator}, the former space being strictly contained in the latter in the case where the carrier Hilbert space is infinite-dimensional. In the $p$-adic setting, instead, we have a \emph{unique} operator space that plays the role of both the trace and the Hilbert-Schmidt class. However, we will first introduce this space as a $p$-adic Banach space --- the so-called \emph{$p$-adic trace class} $\T(\Hi,\Ki)$ --- and we will next argue that this space, endowed with a suitable scalar product, becomes a $p$-adic Hilbert space; i.e., the \emph{$p$-adic Hilbert-Schmidt class}. Another distinguishing feature of the $p$-adic trace class $\T(\Hi,\Ki)$ is that it is a Banach space w.r.t.\ the standard operator norm (rather than some `trace norm').

Let $\Hi$ and $\Ki$ be $p$-adic Hilbert spaces, with orthonormal bases $\Phi\equiv\{\phi_m\}_{m\in\N}$, and $\Psi\equiv\{\psi_n\}_{n\in\N}$, respectively. Extending, in a natural way, the definition~\cite{aniello2023trace,aniello2024quantum} of the trace class $\T(\Hi)$ of a single $p$-adic Hilbert space, we now introduce the set of \emph{trace class operators} from $\Hi$ to $\Ki$ as follows:
\begin{equation}\label{eq.37}
\T(\Hi,\Ki)\coloneqq\mbox{$\{\op_{\Phi,\Psi}(T_{mn})\colon\Hi\dashrightarrow\Ki\mid T_{mn}\in\M(\Qe)\,\,\text{s.t.}\,\lim_{m+n}T_{mn}=0\}$}.
\end{equation}

At this point, a further comment is in order. The reader will be familiar with the standard definition of a trace class operator on a \emph{complex} Hilbert space, which crucially relies on the notion of \emph{positive} operator and, accordingly, of the \emph{absolute value} of a bounded operator (see, e.g., Chapter~3 of~\cite{barry2015operator}). In the $p$-adic setting there is no meaningful notion of positivity, and the previous definition of a trace class operator turns out to be the most convenient one.

\begin{remark}
Note that the defining condition for a trace class operator in the $p$-adic setting --- i.e., $T=\op_{\Phi,\Psi}(T_{mn})\in\T(\Hi,\Ki)$ iff $\lim_{m+n}T_{mn}=0$ --- involves a pair of orthonormal bases $\Phi$, $\Psi$, because $T$ is defined as a matrix operator, but the property of being a trace class operator is actually \emph{independent} of the choice of this pair of bases; see below. Also note that, if $\lim_{m+n}T_{mn}=0$, then the sum $\sum_m T_{mm}$ of all `diagonal entries' is a convergent series in $\Qe$.
\end{remark}

Exploiting the defining condition~\eqref{eq.37}, and by suitably adapting to the present setting the line of reasoning in Sec.~VI of~\cite{aniello2023trace} (see, in particular, Propositions~$6.2$ and~$6.7$, and Corollary~$6.8$), the forthcoming result can be proved. We will denote by $\JJ$ any \emph{third} Hilbert space over $\Qe$.
\begin{theorem}
The definition~\eqref{eq.37} of the set $\T(\Hi,\Ki)$ does not depend on the choice of the pair of orthonormal bases $\Phi$, $\Psi$. Moreover, $\T(\Hi,\Ki)$ is a Banach subspace of $\Bad{\Hi}{\Ki}$, and 
\begin{align}
T\in\T(\Hi,\Ki)\subset\Bad{\Hi}{\Ki} &
\implies T^\ast\in\T({\Ki},{\Hi}),
\nonumber \\  \label{fundrels}
T\in\T(\Hi,\Ki), \ A\in\Bad{\Ki}{\JJ}, \ B\in\Bad{\JJ}{\Hi} &
\implies AT\in\T({\Hi},{\JJ}),\ TB\in\T({\JJ},{\Ki}).
\end{align}
Finally, every trace class operator $T=\op_{\Phi,\Psi}(T_{mn})\in\T(\Hi,\Ki)$ is \emph{traceable}; that is, for any pair of orthonormal bases $\Phi\equiv\{\phi_m\}_{m\in\N}$ in $\Hi$ and $\Psi\equiv\{\psi_n\}_{n\in\N}$ in $\Ki$, the \emph{trace} $\tr(T)$ of $T$ --- defined as the sum of the convergent series
\begin{equation}\label{eq.trac}
\tr(T)\coloneqq\sum_m\braket{\psi_m}{T\phi_m}=\sum_m T_{mm}\in\Qe,
\end{equation}
--- does not depend on the choice of the orthonormal bases $\Phi\equiv\{\phi_m\}_{m\in\N}$ and $\Psi\equiv\{\psi_n\}_{n\in\N}$.
\end{theorem}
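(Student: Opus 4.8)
The plan is to realize $\T(\Hi,\Ki)$ as the operator-norm closure of the finite-rank operators, and to deduce essentially every assertion from this single characterization together with the norm formula $\No{B}=\sup_{m,n}\abs{B_{mn}}$ of Theorem~\ref{th.2.5}. Write $\mathcal{F}(\Hi,\Ki)\coloneqq\spn\{\ketbra{k}{h}\mid h\in\Hi,\,k\in\Ki\}$ for the span of the rank-one operators $\ketbra{k}{h}\colon x\mapsto\braket{h}{x}k$; this is a manifestly basis-free subset of $\Bad{\Hi}{\Ki}$. First I would record two elementary facts. (a) Every $T=\op_{\Phi,\Psi}(T_{mn})\in\T(\Hi,\Ki)$ lies in $\Bad{\Hi}{\Ki}$, since $\lim_{m+n}T_{mn}=0$ immediately yields $\sup_{m,n}\abs{T_{mn}}<\infty$ and $\lim_m T_{mn}=\lim_n T_{mn}=0$, i.e.\ conditions \ref{cond.ad.i}--\ref{cond.ad.iii} of Theorem~\ref{th.2.7}. (b) $T$ is the norm-limit of its truncations $T^{(N)}=\sum_{m+n\le N}T_{mn}\ketbra{\psi_m}{\phi_n}\in\mathcal{F}(\Hi,\Ki)$: by~\eqref{eq.21}, $\No{T-T^{(N)}}=\sup_{m+n>N}\abs{T_{mn}}\to 0$. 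Hence $\T(\Hi,\Ki)\subseteq\overline{\mathcal{F}(\Hi,\Ki)}$.

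For the reverse inclusion and the Banach property I would first check that $\T(\Hi,\Ki)$ is a norm-closed linear subspace of $\Ba{\Hi}{\Ki}$. Linearity is immediate from the ultrametric inequality; closedness follows from the same estimate used above: if $T^{(j)}\to T$ in norm with $T^{(j)}\in\T(\Hi,\Ki)$, then for $\varepsilon>0$ pick $j$ with $\No{T-T^{(j)}}<\varepsilon$ and $N$ with $\abs{T^{(j)}_{mn}}<\varepsilon$ for $m+n>N$, whence $\abs{T_{mn}}\le\max(\abs{T_{mn}-T^{(j)}_{mn}},\abs{T^{(j)}_{mn}})<\varepsilon$ there. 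Completeness of $\Ba{\Hi}{\Ki}$ then makes $\T(\Hi,\Ki)$ a $p$-adic Banach subspace. Next, a rank-one operator $\ketbra{k}{h}$ has matrix entries $\braket{\psi_m}{k}\,\overline{\braket{\phi_n}{h}}$, and since the coordinate sequences of $k$ and $h$ are zero-convergent one gets $\lim_{m+n}\braket{\psi_m}{k}\,\overline{\braket{\phi_n}{h}}=0$; thus $\mathcal{F}(\Hi,\Ki)\subseteq\T(\Hi,\Ki)$, and by closedness $\overline{\mathcal{F}(\Hi,\Ki)}\subseteq\T(\Hi,\Ki)$. Therefore $\T(\Hi,\Ki)=\overline{\mathcal{F}(\Hi,\Ki)}$, and since the right-hand side involves no bases, the basis-independence of the definition~\eqref{eq.37} is immediate.

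Closure under composition and adjunction then drops out by continuity. Using submultiplicativity $\No{AB}\le\No{A}\No{B}$ (direct from~\eqref{eq.15}) and the fact that composites and adjoints of adjointable operators stay adjointable (with $(AB)^\ast=B^\ast A^\ast$), observe that for $T\in\T(\Hi,\Ki)$, $A\in\Bad{\Ki}{\JJ}$, $B\in\Bad{\JJ}{\Hi}$ one has $AT^{(N)}=\sum T_{mn}\ketbra{A\psi_m}{\phi_n}\in\mathcal{F}(\Hi,\JJ)$ and $T^{(N)}B=\sum T_{mn}\ketbra{\psi_m}{B^\ast\phi_n}\in\mathcal{F}(\JJ,\Ki)$, while $\No{AT-AT^{(N)}}\le\No{A}\No{T-T^{(N)}}\to0$ and $\No{TB-T^{(N)}B}\le\No{T-T^{(N)}}\No{B}\to0$; hence $AT\in\T(\Hi,\JJ)$ and $TB\in\T(\JJ,\Ki)$. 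Similarly $(T^{(N)})^\ast=\sum\overline{T_{mn}}\ketbra{\phi_n}{\psi_m}\in\mathcal{F}(\Ki,\Hi)$ and, by~\eqref{norm.adj}, $\No{T^\ast-(T^{(N)})^\ast}=\No{T-T^{(N)}}\to0$, giving $T^\ast\in\T(\Ki,\Hi)$.

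Finally, the trace. Convergence is free: $\lim_{m+n}T_{mn}=0$ forces $\lim_m T_{mm}=0$, and in the ultrametric a series converges iff its terms tend to $0$, so $\sum_m T_{mm}$ converges with $\abs{\tr(T)}\le\max_m\abs{T_{mm}}\le\No{T}$. The trace functionals attached to two pairs of bases are thus $1$-Lipschitz, so by the density $\T(\Hi,\Ki)=\overline{\mathcal{F}(\Hi,\Ki)}$ it suffices to prove their coincidence on finite-rank operators. Writing the transition coefficients $U_{km}=\braket{\psi_k}{\psi'_m}$ and $V_{ln}=\braket{\phi_l}{\phi'_n}$, the diagonal sum in the primed bases expands (after justifying the interchange of the two summations, which is harmless as only finitely many $T_{jk}$ are nonzero) as $\sum_m\braket{\psi'_m}{T\phi'_m}=\sum_{j,k}\big(\sum_m\overline{U_{jm}}V_{km}\big)T_{jk}$, so that invariance amounts to the completeness relation $\sum_m\overline{U_{jm}}V_{km}=\delta_{jk}$ for the transition matrices between the two pairs of complete orthonormal bases. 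I expect this last point --- the basis-independence of the trace, and in particular the verification of this completeness relation (which rests on the unitarity forced by completeness of the bases) --- to be the main obstacle; every other assertion is a soft consequence of the finite-rank density and the norm formula.
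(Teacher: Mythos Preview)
Your argument for basis-independence of $\T(\Hi,\Ki)$, for its being a Banach subspace of $\Bad{\Hi}{\Ki}$, and for closure under adjunction and two-sided composition is correct and is essentially the route the paper takes (cf.\ the characterization $\T(\Hi,\Ki)=\overline{\B_{\Phi,\Psi}}^{\No{\cdot}}$ stated just after this theorem, and the reference to~\cite{aniello2023trace}). Your use of the basis-free set $\mathcal{F}(\Hi,\Ki)$ in place of the block-finite operators $\B_{\Phi,\Psi}$ is a harmless variant: both have the same norm closure.

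There is, however, a genuine gap in your treatment of the trace, and it is not one you can close. You correctly reduce basis-independence of $\tr$ on finite-rank operators to the identity
\[
\sum_m \overline{U_{jm}}\,V_{km}=\delta_{jk},\qquad U_{jm}=\braket{\psi_j}{\psi'_m},\ \ V_{km}=\braket{\phi_k}{\phi'_m},
\]
but this relation is \emph{false} in general: $U$ is a unitary in $\Ki$ and $V$ a unitary in $\Hi$, and these can be chosen independently. Concretely, take $T=\ketbra{\psi_1}{\phi_1}$; with the bases $(\Phi,\Psi)$ one gets $\sum_m\braket{\psi_m}{T\phi_m}=1$, whereas with $(\Phi,\Psi')$, $\Psi'=(\psi_2,\psi_1,\psi_3,\dots)$, the same sum vanishes. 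So the ``completeness relation'' you were hoping unitarity would provide holds only when $U=V$, i.e.\ in the single-space situation $\Hi=\Ki$, $\Phi=\Psi$, $\Phi'=\Psi'$ --- which is exactly the case treated in~\cite{aniello2023trace} that the paper cites. Your instinct to flag this step as the main obstacle was right; the quantity $\sum_m\braket{\psi_m}{T\phi_m}$ for $T\colon\Hi\to\Ki$ genuinely depends on the chosen pairing of the two bases, and no argument will make it invariant. (Note that the paper only ever \emph{uses} the trace in the form $\tr(S^\ast T)$ with $S^\ast T\in\T(\Hi)$, where the usual single-space invariance applies.)
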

\begin{remark}
Let $\tr(\hspace{0.2mm}\cdot\hspace{0.2mm})\colon\T(\Hi,\Ki)\rightarrow\Qe$ be the map that associates, with every operator $T\in\T(\Hi,\Ki)$, its trace $\tr(T)$ as defined in~\eqref{eq.trac}. One can easily check that $\tr(\hspace{0.2mm}\cdot\hspace{0.2mm})$ is a linear functional on $\T(\Hi,\Ki)$; namely, for any $S,T\in\T(\Hi,\Ki)$ and $\alpha\in\Qe$,
\begin{equation}
\tr(S+T)=\tr(S)+\tr(T),\quad \tr(\alpha T)=\alpha\tr(T).
\end{equation}
Moreover, for every $T\colon\Hi\rightarrow\Ki\in\T(\Hi,\Ki)$, the adjoint $T^\ast\in\T(\Ki,\Hi)$ satisfies the usual relation $\tr(T^\ast)=\overline{\tr(T)}$.
\end{remark}

Exactly as in the standard complex setting, a linear operator $C\colon\Hi\rightarrow \Ki$ is said to be \emph{compact} if it maps the unit ball $\Hi_1\coloneqq\{\psi\in\Hi\mid\No{\psi}\leq 1\}$ of $\Hi$ onto a \emph{precompact} subset $C(\Hi_1)$ of $\Ki$; that is, if $\overline{C(\Hi_1)}^{\No{\cdot}}$ is a compact subset of $\Ki$~\cite{aniello2023trace,van1978non}. In the following, we will denote by $\C(\Hi,\Ki)\subset\Ba{\Hi}{\Ki}$ the set of all compact (bounded) operators from $\Hi$ to $\Ki$. 

A special class of compact operators is formed by the so-called \emph{block-finite operators}~\cite{aniello2023trace}. Precisely, a matrix operator $B=\op_{\Phi,\Psi}(B_{mn})\colon\Hi\dashrightarrow\Ki$ is said to be \emph{block-finite} w.r.t.\ a certain pair of orthonormal bases $\Phi\subset\Hi$, $\Psi\subset\Ki$ --- and we denote the set of all such operators by $\B_{\Phi,\Psi}$ --- if there exists some $k\in\N$ such that 
\begin{equation}
\max\{m,n\}>k\;\;\implies\;\;B_{mn}=0.
\end{equation}
A block-finite operator has finite rank; therefore, it is adjointable and of trace class. 
\begin{theorem}
The following characterization of the trace class $\T(\Hi,\Ki)$ holds true:
\begin{equation}
\T(\Hi,\Ki)=\Cad(\Hi,\Ki)\coloneqq\C(\Hi,\Ki)\cap\Bad{\Hi}{\Ki}.
\end{equation}
Moreover, for every pair of orthonormal bases $\Phi\equiv\{\phi_m\}_{m\in\N}$ in $\Hi$ and $\Psi\equiv\{\psi_n\}_{n\in\N}$ in $\Ki$, $\T(\Hi,\Ki)$ coincides with the uniform closure of the set $\B_{\Phi,\Psi}$ of block-finite operators w.r.t.\ the orthonormal bases $\Phi$, $\Psi$; i.e., 
\begin{equation}
\T(\Hi,\Ki)=\overline{\B_{\Phi,\Psi}}^{\No{\cdot}}.
\end{equation}
\end{theorem}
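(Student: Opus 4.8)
The statement has two parts, and I would prove the block-finite characterization $\T(\Hi,\Ki)=\overline{\B_{\Phi,\Psi}}^{\No{\cdot}}$ first, since it feeds the identification with $\Cad(\Hi,\Ki)$. Throughout I fix $\Phi\equiv\{\phi_n\}_{n\in\N}\subset\Hi$ and $\Psi\equiv\{\psi_m\}_{m\in\N}\subset\Ki$, write $T=\op_{\Phi,\Psi}(T_{mn})$ with $T_{mn}=\braket{\psi_m}{T\phi_n}$, and keep in mind that $\T(\Hi,\Ki)\subseteq\Bad{\Hi}{\Ki}$ is already available. For the inclusion $\overline{\B_{\Phi,\Psi}}^{\No{\cdot}}\subseteq\T(\Hi,\Ki)$ I would simply invoke that every block-finite operator has finite rank, hence is of trace class, together with the fact that $\T(\Hi,\Ki)$ is a Banach subspace of $\Bad{\Hi}{\Ki}$ and is therefore norm-closed. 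For the reverse inclusion I would approximate: given $T\in\T(\Hi,\Ki)$, let $T^{(k)}\coloneqq\op_{\Phi,\Psi}(T^{(k)}_{mn})$ be its truncation, with $T^{(k)}_{mn}=T_{mn}$ if $\max\{m,n\}\le k$ and $0$ otherwise, which is block-finite. Since $T-T^{(k)}\in\Ba{\Hi}{\Ki}$ as a difference of all-over bounded operators, the norm formula from Theorem~\ref{th.2.5}, $\No{A}=\sup_{m,n}\abs{A_{mn}}$, gives $\No{T-T^{(k)}}=\sup_{\max\{m,n\}>k}\abs{T_{mn}}$; because $m+n\ge\max\{m,n\}$, the defining condition $\lim_{m+n}T_{mn}=0$ forces this supremum to vanish as $k\to\infty$, so $T\in\overline{\B_{\Phi,\Psi}}^{\No{\cdot}}$.

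The inclusion $\T(\Hi,\Ki)\subseteq\Cad(\Hi,\Ki)=\C(\Hi,\Ki)\cap\Bad{\Hi}{\Ki}$ then follows immediately: by the first part every $T\in\T(\Hi,\Ki)$ is a norm-limit of block-finite (hence finite-rank, hence compact) operators, and since the space $\C(\Hi,\Ki)$ of compact operators is norm-closed in $\Ba{\Hi}{\Ki}$ (a standard fact in non-Archimedean functional analysis~\cite{van1978non}), we get $T\in\C(\Hi,\Ki)$; combined with $\T(\Hi,\Ki)\subseteq\Bad{\Hi}{\Ki}$ this yields the claim.

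The substantive direction is $\C(\Hi,\Ki)\cap\Bad{\Hi}{\Ki}\subseteq\T(\Hi,\Ki)$, and this is where I expect the real work. Let $B\in\C(\Hi,\Ki)\cap\Bad{\Hi}{\Ki}$ and suppose, for contradiction, that $\lim_{m+n}B_{mn}\neq0$; then there is $\epsilon>0$ for which the bad set $\mathcal{E}_\epsilon\coloneqq\{(m,n)\mid\abs{B_{mn}}\ge\epsilon\}$ is infinite. Adjointability, via conditions~\ref{cond.ad.ii}--\ref{cond.ad.iii} of Theorem~\ref{th.2.7}, tells us that every row and every column of $\mathcal{E}_\epsilon$ is finite; an infinite set all of whose rows and columns are finite contains, for any prescribed bounds, an element exceeding both, since only finitely many of its elements can have either index below a given bound. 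I would therefore extract inductively a sequence $(m_k,n_k)\in\mathcal{E}_\epsilon$ with $m_k$ and $n_k$ strictly increasing and with small cross terms: using $\lim_n B_{m_j n}=0$ one forces $\abs{B_{m_j n_k}}<\epsilon$ for $j<k$, and using $\lim_m B_{m n_j}=0$ one forces $\abs{B_{m_k n_j}}<\epsilon$ for $j<k$. Setting $u_k\coloneqq B\phi_{n_k}\in B(\Hi_1)$ and inspecting the $m_l$-th coordinate of $u_k-u_l$ for $k<l$, the strong triangle inequality upgrades $\abs{B_{m_l n_k}}<\epsilon\le\abs{B_{m_l n_l}}$ into $\abs{B_{m_l n_k}-B_{m_l n_l}}=\abs{B_{m_l n_l}}\ge\epsilon$, whence $\No{u_k-u_l}\ge\epsilon$. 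Thus $\{u_k\}$ is an infinite $\epsilon$-separated subset of $B(\Hi_1)$, contradicting the total boundedness of $B(\Hi_1)$ guaranteed by compactness. Hence $\lim_{m+n}B_{mn}=0$ and $B\in\T(\Hi,\Ki)$.

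The delicate point is the inductive extraction: one must control the cross terms in \emph{both} directions simultaneously, which is precisely what conditions~\ref{cond.ad.ii} and~\ref{cond.ad.iii} supply, and then it is the ultrametric (strong triangle) inequality --- rather than a mere triangle inequality --- that converts ``cross terms strictly smaller than the diagonal'' into genuine $\epsilon$-separation with no loss of constant. This non-Archimedean sharpening is what makes the argument cleaner than, yet structurally parallel to, the classical proof that a compact operator has a matrix decaying jointly in its two indices.
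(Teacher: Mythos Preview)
Your proof is correct. The paper itself does not give a detailed argument here --- it simply cites Theorems~6.34 and~6.35 of~\cite{aniello2023trace} and declares the present statement a ``straightforward adaptation'' --- so you have in fact supplied more than the paper does; your truncation argument for $\T(\Hi,\Ki)=\overline{\B_{\Phi,\Psi}}^{\No{\cdot}}$ is exactly the one that reappears in the Appendix proof of Theorem~\ref{th.tracla}, and your extraction of an $\epsilon$-separated sequence from a compact adjointable non-trace-class operator (exploiting the row/column decay from Theorem~\ref{th.2.7} and the ultrametric isosceles law) is the natural route and matches what the cited single-space results do.
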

\begin{proof}
The first claim is proved likewise Theorem~$6.35$ in~\cite{aniello2023trace}, whereas the second is a straightforward adaptation of
Theorem~$6.34$  in~\cite{aniello2023trace} to this specific case.
\end{proof}
As anticipated, in the concluding part of this subsection, we argue that $\T(\Hi,\Ki)$, endowed with a suitable sesquilinear form, has a natural structure of a $p$-adic Hilbert space, namely, the \emph{$p$-adic Hilbert-Schmidt space}. In fact, let us introduce the sesquilinear form
\begin{equation}\label{eq.46}
\T(\Hi,\Ki)\times \T(\Hi,\Ki)\ni (S,T)\rightarrow \hsbraket{S}{T}\coloneqq\tr(S^\ast T)\in\Qe.
\end{equation}
Here, note that, by relations~\eqref{fundrels}, if $S,T\in\T(\Hi,\Ki)$, then $S^\ast T\colon\Hi\rightarrow\Hi\in\T(\Hi)\equiv\T(\Hi,\Hi)$, so that the preceding mapping is well defined. It is clear that $\HSprod$ is Hermitian, as follows by observing that 
\begin{equation}
\hsbraket{S}{T}=\tr(S^\ast T)=\overline{\tr(T^\ast S)}=\overline{\braket{T}{S}}_{\mbox{\tiny $\T(\Hi,\Ki)$}}.
\end{equation}
Following the terminology adopted in~\cite{aniello2023trace}, we call the Hermitian sesquilinear form $\HSprod$ the \emph{Hilbert-Schmidt product} of $\T(\Hi,\Ki)$. 
    
The final step is the explicit construction of an orthonormal basis for the inner product $p$-adic Banach space $\big(\T(\Hi,\Ki),\HSprod\big)$. To this end, given any pair of orthonormal bases $\Phi\equiv\{\phi_m\}_{m\in\N}\subset\Hi$ and $\Psi\equiv\{\psi_n\}_{n\in\N}\subset\Ki$, we introduce the family of matrix operators $\big\{\E{jk}{\Phi,\Psi}{}\colon\Hi\rightarrow\Ki\big\}_{j,k\in\N}$ defined by
\begin{equation}\label{eq.48}
\E{jk}{\Phi,\Psi}{}\coloneqq\op_{\Phi,\Psi}\big(\E{jk}{}{mn}\big),\quad\mbox{where $\E{jk}{}{mn}=\delta_{jm}\delta_{kn}$},
\end{equation}
 i.e., $\E{jk}{\Phi,\Psi}{}=\func{\phi_k}\psi_j$, or equivalently $\E{jk}{\Phi,\Psi}{}=\ketbra{\psi_j}{\phi_k}$ in Dirac's notation.
 Note that, for every $j,k\in\N$, $\E{jk}{\Phi,\Psi}{}$, is a rank-one operator and, therefore, it is in the trace class $\T(\Hi,\Ki)$. 

Eventually, we are ready to formulate the following result, which is a natural generalization of Theorem~$6.38$ in~\cite{aniello2023trace}, that holds in the single Hilbert space setting. 
\begin{theorem}\label{th.tracla}
The quadruple
\begin{equation}
\Big(\T(\Hi,\Ki), \No{\cdot},\HSprod,\big\{\E{jk}{\Phi,\Psi}{}\big\}_{j,k\in\N}\Big)
\end{equation}
--- where $\HSprod$ is the Hilbert-Schmidt product defined in~\eqref{eq.46}, and $\big\{\E{jk}{\Phi,\Psi}{}\big\}_{j,k\in\N}$ the set of trace class operators introduced in~\eqref{eq.48}, that form an orthonormal basis in $\T(\Hi,\Ki)$ --- is a $p$-adic Hilbert space. 
\end{theorem}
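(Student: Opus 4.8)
The plan is to verify the three requirements of Definition~\ref{hilbertspace}: that $(\T(\Hi,\Ki),\No{\cdot})$ is a $p$-adic Banach space, that $\HSprod$ is a non-Archimedean inner product, and that $\big\{\E{jk}{\Phi,\Psi}{}\big\}_{j,k\in\N}$ is an orthonormal basis. The first point is already secured, since we established that $\T(\Hi,\Ki)$ is a Banach subspace of $\Bad{\Hi}{\Ki}$ with the operator norm. The cleanest route to the remaining two is to realize the whole structure as an isometric, inner-product-preserving copy of the coordinate $p$-adic Hilbert space $\CH(\N\times\N)$ of Example~\ref{exa.2.4}, built on the countable index set $\N\times\N$.

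Concretely, first I would fix the pair of orthonormal bases $\Phi,\Psi$ and introduce the coordinate map
\begin{equation*}
\Lambda\colon\T(\Hi,\Ki)\ni T=\op_{\Phi,\Psi}(T_{mn})\mapsto (T_{mn})_{m,n\in\N}\in\CH(\N\times\N).
\end{equation*}
The defining condition $\lim_{m+n}T_{mn}=0$ is exactly membership of $(T_{mn})$ in $c_0(\N\times\N,\Qe)$, so $\Lambda$ is a well-defined linear bijection; by Theorem~\ref{th.2.5} it is an isometry, since $\No{T}=\sup_{m,n}\abs{T_{mn}}$ coincides with the sup-norm on $\CH(\N\times\N)$. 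Moreover, $\Lambda$ carries $\E{jk}{\Phi,\Psi}{}$ to the canonical basis vector indexed by $(j,k)$.

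The key computation is to show that $\Lambda$ intertwines $\HSprod$ with the canonical inner product $\inprd_c$ on $\CH(\N\times\N)$. For $S=\op_{\Phi,\Psi}(S_{mn})$ and $T=\op_{\Phi,\Psi}(T_{mn})$, Theorem~\ref{th.2.7} gives $S^\ast=\op_{\Psi,\Phi}(\overline{S_{nm}})$, whence the matrix entries of $S^\ast T$ are $(S^\ast T)_{mn}=\sum_k\overline{S_{km}}T_{kn}$; taking the trace along the diagonal yields
\begin{equation*}
\hsbraket{S}{T}=\tr(S^\ast T)=\sum_m\sum_k\overline{S_{km}}T_{km}=\sum_{m,n}\overline{S_{mn}}T_{mn}=\braket{\Lambda S}{\Lambda T}_c.
\end{equation*}
Here the rearrangements are legitimate because the ultrametric absolute convergence of the relevant double series follows from $\lim_{m+n}S_{mn}=\lim_{m+n}T_{mn}=0$, which forces the summands $\overline{S_{km}}T_{km}$ to vanish as $k+m\to\infty$. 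Once this identity is in place, sesquilinearity, the already-noted Hermitian symmetry, and the Cauchy--Schwarz inequality for $\HSprod$ are inherited from the corresponding properties of $\inprd_c$ via the linear isometry $\Lambda$; in particular, orthonormality $\hsbraket{\E{jk}{\Phi,\Psi}{}}{\E{j'k'}{\Phi,\Psi}{}}=\delta_{jj'}\delta_{kk'}$ and the normalization $\No{\E{jk}{\Phi,\Psi}{}}=1$ are immediate.

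What remains is that $\big\{\E{jk}{\Phi,\Psi}{}\big\}_{j,k}$ is a normal basis in the sense of Definition~\ref{def.basis}: that it is norm-orthogonal and that every $T$ expands uniquely as an unconditionally norm-convergent series $T=\sum_{j,k}T_{jk}\,\E{jk}{\Phi,\Psi}{}$. Both follow by transporting, through the isometry $\Lambda$, the corresponding statements for the canonical basis $\bm{e}$ of $\CH(\N\times\N)$ established in Example~\ref{exa.2.4}: norm-orthogonality because $\No{\sum_{j,k}\alpha_{jk}\E{jk}{\Phi,\Psi}{}}=\max_{j,k}\abs{\alpha_{jk}}$, and the expansion with uniqueness because the coefficients are forced to equal $T_{jk}=\braket{\psi_j}{T\phi_k}$. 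I expect the main obstacle to be the trace computation above --- specifically, justifying the manipulation of the double series defining $\tr(S^\ast T)$ and confirming, via the product rule~\eqref{fundrels} together with Theorem~\ref{th.2.7}, that $S^\ast T$ is genuinely trace class, so that its trace is basis-independent and equals the diagonal sum; everything else reduces to the already-known structure of the coordinate $p$-adic Hilbert space.
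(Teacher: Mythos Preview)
Your proof is correct and takes a more conceptual route than the paper. The paper verifies each ingredient by hand: it shows non-degeneracy of $\HSprod$ by computing $\hsbraket{\E{jk}{\Phi,\Psi}{}}{T}=T_{jk}$, establishes the Cauchy--Schwarz inequality directly via $\abs{\tr(S^\ast T)}\le\No{S}\,\No{T}$, checks the orthonormality relations for the $\E{jk}{\Phi,\Psi}{}$, and then proves that this orthonormal system is a basis by truncating $T$ to block-finite approximants $\Top{l}$ and showing $\No{T-\Top{l}}\to 0$. You instead package everything into a single linear bijection $\Lambda$ onto the coordinate space $\CH(\N\times\N)$ that is simultaneously an isometry and intertwines the two inner products, and then transport all the Hilbert-space axioms at once from Example~\ref{exa.2.4}. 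Your approach is cleaner and explains \emph{why} the result holds --- $\T(\Hi,\Ki)$ is literally a relabeling of a known $p$-adic Hilbert space --- at the cost of one genuine computation, the trace identity $\tr(S^\ast T)=\sum_{m,n}\overline{S_{mn}}T_{mn}$, which you carry out correctly (and your anticipated ``obstacle'' dissolves: $S^\ast T\in\T(\Hi)$ by~\eqref{fundrels}, and the double series rearranges because the terms vanish as $k+m\to\infty$). The paper's direct argument is more self-contained and exhibits each property in isolation, but yours yields the full structure in one stroke.
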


\begin{proof}
See Appendix~\ref{app.a1}.
\end{proof}

    
\section{Construction of the tensor product}
\label{sec.3}

In this section, we proceed to the construction of the tensor product of $p$-adic Hilbert spaces. As in the real or complex setting, we first define the \emph{algebraic} tensor product of two vector spaces over $\Qe$. We stress that, even if we deal with $p$-adic Hilbert spaces, in this preliminary stage we do not need more than the mere linear properties of these spaces. Next, to endow the algebraic tensor product with a metric structure, we introduce an appropriate ultrametric norm. This is a nontrivial step, because what is the appropriate norm is not clear in advance, and, in fact, it turns out that the analogy with the tensor product of complex Hilbert spaces is not helpful in this case. Finally, after taking the completion w.r.t.\ the metric induced by this norm, we bring up a suitable inner product and an orthonormal basis, in such a way as to obtain eventually a \emph{new} $p$-adic Hilbert space; i.e., the tensor product of the initial $p$-adic Hilbert spaces. 

\begin{definition}
Let $X,Y,Z$ be vector spaces over the field $\Qe$. Recall that a map $\Bm$ from the Cartesian product $X\times Y$ into $Z$ is said to be \emph{bilinear} if it is linear w.r.t.\ each of its arguments; that is
\begin{align}
\Bm(\alpha_1 x_1+\alpha_2 x_2,y)
& =
\alpha_1 \Bm(x_1,y)+\alpha_2 \Bm(x_2,y), 
\nonumber \\
\Bm(x,\beta_1 y_1+\beta_2 y_2)
& =
\beta_1 \Bm(x,y_1)+\beta_2 \Bm(x,y_2),
\end{align}
for all vectors $x,x_1,x_2\in X$, $y,y_1,y_2\in Y$, and for all scalars 
$\alpha_i,\beta_i \in \Qe$, $i=1,2$. We will denote by $\Bil(X\times Y, Z)$ the vector space of all bilinear maps from $X\times Y$ to $Z.$ In the special case where $Z=\Qe$ (the one-dimensional $\Qe$-linear space), we put $\Bil(X\times Y)\equiv\Bil(X\times Y, \Qe)$ (the space of bilinear forms on $X\times Y$). 
\end{definition}

The algebraic tensor product $X\atp Y$ of the vector spaces $X$ and $Y$ can be defined as a space of linear functionals on $\Bil(X\times  Y)$, in the usual fashion~\cite{ryan2002introduction}. 

\begin{definition}
Let $X,Y$ be vector spaces over $\Qe$, and let $x\in X$, $y\in Y$. The \emph{simple tensor} $x\otimes y\colon\Bil( X\times Y)\rightarrow\Qe$ is the linear functional defined by
\begin{equation}
(x\otimes y)(\Bm)\coloneqq\Bm(x,y),
\end{equation}
for every bilinear form $\Bm\in\Bil(X\times Y)$. 

The \emph{algebraic tensor product} $X\atp Y$ is defined as the linear subspace of the \emph{algebraic dual} space $\Bil(X\times Y)^\vee$ spanned by all simple tensors $x\otimes y$. Therefore, a generic element $u\in X \atp Y$ can be expressed (not uniquely) as a finite linear combination of the form
\begin{equation} \label{lincomb}
u=\sum_{i=1}^n \lambda_i\, x_i \otimes y_i
\end{equation}
--- with $n\in \mathbb{N}$, $\lambda_i\in \Qe, x_i\in X$ and $y_i\in Y$, $i\in\{1,\ldots, n\}$ --- so that 
\begin{equation}
u(\Bm)=\sum_{i=1}^n \lambda_i\, \Bm(x_i, y_i)\in\Qe.
\end{equation}
\end{definition} 

Next result is the so-called \emph{universal property of the algebraic tensor product}~\cite{ryan2002introduction}.

\begin{theorem}
For every bilinear map $\Bm:X\times Y\to \Qe$, there exists a unique linear map $B\colon X \atp Y\to \Qe$ such that $\Bm(x,y)=B(x \otimes y)$, for all $x\in X$ and $y\in Y$. 
\end{theorem}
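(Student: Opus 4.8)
The plan is to construct $B$ directly on representations of elements of $X \atp Y$ as finite combinations of simple tensors, and then to verify that the prescription is independent of the chosen representation. Concretely, for $u = \sum_{i=1}^n \lambda_i\, x_i \otimes y_i \in X \atp Y$, I would set
\[
B(u) \coloneqq \sum_{i=1}^n \lambda_i\, \Bm(x_i,y_i) \in Z.
\]
Once well-definedness is secured, linearity of $B$ is immediate from the very form of this assignment (concatenating representations realizes $u+u'$, and scaling the coefficients realizes $\alpha u$), the identity $\Bm(x,y) = B(x \otimes y)$ follows by taking $n=1$ and $\lambda_1 = 1$, and uniqueness is automatic because the simple tensors span $X \atp Y$, so any linear map agreeing with $\Bm$ on simple tensors must coincide with $B$ everywhere.

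The crux --- and the only genuinely nontrivial step --- is well-definedness: I must show that whenever $\sum_{i=1}^n \lambda_i\, x_i \otimes y_i = 0$ holds in $X \atp Y$, the corresponding combination $\sum_{i=1}^n \lambda_i\, \Bm(x_i,y_i)$ vanishes in $Z$. By the definition of the algebraic tensor product as a subspace of $\Bil(X \times Y)^\vee$, the hypothesis means precisely that $\sum_{i=1}^n \lambda_i\, \varphi(x_i,y_i) = 0$ for every \emph{scalar-valued} bilinear form $\varphi \in \Bil(X \times Y)$. The difficulty is that $\Bm$ is $Z$-valued, so this cannot be applied directly; the idea is to reduce to the scalar case by testing against linear functionals on $Z$.

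To carry this out, let $f \in Z^\vee$ be an arbitrary element of the algebraic dual of $Z$. Then $f \circ \Bm \colon X \times Y \to \Qe$ is a scalar-valued bilinear form, i.e.\ $f \circ \Bm \in \Bil(X \times Y)$, so the hypothesis gives
\[
f\Big(\sum_{i=1}^n \lambda_i\, \Bm(x_i,y_i)\Big) = \sum_{i=1}^n \lambda_i\, (f \circ \Bm)(x_i,y_i) = 0.
\]
Since $f \in Z^\vee$ was arbitrary, and the algebraic dual of a vector space over $\Qe$ separates points (every nonzero vector is detected by some coordinate functional, obtained by extending it along a Hamel basis), it follows that $\sum_{i=1}^n \lambda_i\, \Bm(x_i,y_i) = 0$, as required. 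This closes the one delicate point, and the remaining assertions follow as indicated above.
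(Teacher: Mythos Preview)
Your proof is correct and follows essentially the same approach as the paper: define $B$ on representations and establish well-definedness by composing $\Bm$ with arbitrary linear functionals on $Z$ to reduce to scalar-valued bilinear forms, then invoke the fact that the algebraic dual separates points. Your write-up is in fact somewhat more explicit than the paper's, as you spell out the uniqueness argument and the separation-of-points step.
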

\begin{proof}
Let $\Bm:X\times Y\to \Qe$ be a bilinear map and define 
$B:X\atp Y\to \Qe$ by 
\begin{equation}
B\Big(\sum_{i=1}^n x_i \otimes y_i\Big)=\sum_{i=1}^n \Bm(x_i,y_i).
\end{equation}
Suppose now that $\sum_{i=1}^n x_i \otimes y_i=0.$ Then, for each linear functional $\varphi$ on $\Qe$, the composition $\varphi \circ B$ is a bilinear functional on $X\times Y$ and 
\begin{equation}
\varphi\Big(\sum_{i=1}^n \Bm(x_i,y_i)\Big)=\sum_{i=1}^n\varphi\circ\Bm(x_i,y_i)=0.
\end{equation}
Therefore, $B$ is well defined and the bilinear map $\Bm$ associated with the linear map $B$ yields an identification of $\Bil(X\times Y)$ and linear maps from $X\atp Y$ to $\Qe$.
\end{proof}

We now list some useful properties of the tensor product. 
\begin{proposition} \label{propro}
The mapping $X \times Y\ni(x,y)\mapsto x\otimes y\in X \atp Y$ satisfies the following properties:
\begin{enumerate}[label=\rm(\roman*)]

\item  \label{i}
$(x_1+x_2)\otimes y=x_1 \otimes y+x_2 \otimes y$;

\item  \label{ii}
$x\otimes (y_1+y_2)=x\otimes y_1+x\otimes y_2$;

\item  \label{iii}
$\lambda(x\otimes y)=(\lambda x)\otimes y=x\otimes (\lambda y)$;

\item  \label{iv}
$0 \otimes y=x\otimes 0=0$;

\item \label{v} if $E\subset X$ and $F\subset Y$ are linearly independent sets, then $\{x \otimes y: x\in E, y\in F\}$ is a linearly independent subset of $X\atp Y$;

\item  \label{vi}
if $x_1,\dots ,x_n\in X$ and $y_1,\dots ,y_n\in Y$ --- with $y_1,\dots ,y_n$ linearly independent and $x_i\ne 0$, for all $i\in\{1,\ldots, n\}$ --- then $\{x_1\otimes y_1,\dots, x_n\otimes y_n\}$ is a linearly independent set.
 \end{enumerate}
\end{proposition}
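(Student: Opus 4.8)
The plan is to exploit the defining identity $(x\otimes y)(\mathfrak{B})=\mathfrak{B}(x,y)$: two elements of $X\atp Y$ coincide precisely when they take the same value on every bilinear form $\mathfrak{B}\in\Bil(X\times Y)$. Since the entire construction is purely algebraic — neither a norm nor a topology enters at this stage — everything reduces to elementary linear algebra over the field $\Qe$, and the ultrametric nature of $\Qe$ plays no role whatsoever.

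Properties \ref{i}--\ref{iv} I would dispatch at once by evaluating each side on an arbitrary $\mathfrak{B}$. For \ref{i}, one has $\big((x_1+x_2)\otimes y\big)(\mathfrak{B})=\mathfrak{B}(x_1+x_2,y)=\mathfrak{B}(x_1,y)+\mathfrak{B}(x_2,y)=(x_1\otimes y)(\mathfrak{B})+(x_2\otimes y)(\mathfrak{B})$, using linearity of $\mathfrak{B}$ in its first argument; as $\mathfrak{B}$ is arbitrary, the two functionals agree. Items \ref{ii} and \ref{iii} are obtained the same way, invoking linearity of $\mathfrak{B}$ in its second argument and scalar homogeneity in each argument, respectively, while \ref{iv} is the case $\lambda=0$ of \ref{iii}.

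The heart of the matter is the linear-independence statement \ref{v}, for which the single tool I need is a standard separation fact: if $E\subset X$ is linearly independent, then for each $x\in E$ there is a linear functional $f_x$ on $X$ with $f_x(x')=\delta_{x,x'}$ for all $x'\in E$ (extend $E$ to a Hamel basis of $X$ and take the associated coordinate functional; this is valid over any field). Choosing analogous functionals $g_y$ for $F\subset Y$, I form the product bilinear form $\mathfrak{B}_{x,y}(u,v)\coloneqq f_x(u)\,g_y(v)$, which indeed lies in $\Bil(X\times Y)$. Given a finite vanishing combination $\sum_{x\in E,\,y\in F}c_{x,y}\,(x\otimes y)=0$, evaluation on $\mathfrak{B}_{x_0,y_0}$ collapses the double sum to its single surviving term $c_{x_0,y_0}$, so every coefficient must vanish.

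Finally, \ref{vi} is a slight variant handled by the same device. Starting from a relation $\sum_{i=1}^n c_i\,(x_i\otimes y_i)=0$ and fixing an index $j$, linear independence of $y_1,\dots,y_n$ furnishes a functional $g_j$ with $g_j(y_i)=\delta_{ij}$, while $x_j\neq 0$ furnishes a functional $f$ with $f(x_j)\neq 0$; evaluating the relation on $\mathfrak{B}(u,v)\coloneqq f(u)\,g_j(v)$ gives $c_j\,f(x_j)=0$, hence $c_j=0$. The only step deserving care — and the one place the argument might appear to stall — is the guaranteed existence of these separating functionals; but as it rests merely on selecting a Hamel basis, rather than on any metric or completeness property, the $p$-adic setting presents no extra obstacle here.
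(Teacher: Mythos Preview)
Your proof is correct and follows the standard route: the paper itself does not give details, dismissing \ref{i}--\ref{iv} as ``basic calculations'' and deferring \ref{v}--\ref{vi} to Ryan's and Perez-Garcia--Schikhof's texts, whose arguments are precisely the separating-functional method you carry out. In effect you have filled in the details the paper omits, using the same approach.
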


\begin{proof}
Properties~\ref{i}--\ref{iv} hold by basic calculations. The proof of~\ref{v} and~\ref{vi} follow by a slight adaptation of the analogous results from Proposition~{1.1} in~\cite{ryan2002introduction} and Theorem~{10.1.6} in~\cite{perez2010locally}.
\end{proof}

By~\eqref{lincomb} and~\ref{iii}, every tensor $u\in X\atp Y$ can be written in the form 
\begin{equation}\label{eq110}
 u=\sum_{i=1}^n x_i \otimes y_i ,
\end{equation}
for some $n\in \N$. Moreover, in the case where $X$ and $Y$ are finite-dimensional vector spaces, by~\ref{v} we have that 
\begin{equation}
\dim (X \atp Y)=\dim(X)\dim(Y) .
\end{equation}

\begin{remark}
We stress that, even if decomposition~\eqref{eq110} is not unique, there is a smallest number $\rk(u)$ --- called the \textit{rank} of the tensor $u\in X\atp Y$ --- of simple tensors (rank-one tensors) that appear in this decomposition.
\end{remark}

In addition to the properties~\ref{i}--\ref{vi} listed in Proposition~\ref{propro}, we have the following further useful fact, namely, a characterization of the null vector in $X \atp Y$.

\begin{proposition} \label{proptens}
Let $X,Y$ be vector spaces over $\Qe$, and let $X^\vee,Y^\vee$ be the associated algebraic dual spaces. Then, the following statements are equivalent for $u=\sum_{i=1}^n x_i \otimes y_i \in X\atp Y$:
\begin{enumerate}[label=\rm(vii-\arabic*)]

\item \label{vii}
$u=0$;

\item \label{viii}
$\sum_{i=1}^n \fua (x_i)\, \fub (y_i)=0$, for every $\fua\in X^\vee$, $\fub\in Y^\vee$;

\item \label{ix}
$\sum_{i=1}^n \fua (x_i)\,y_i=0$, for every $\fua\in X^\vee$;

\item \label{x}
$\sum_{i=1}^n \fub(y_i)\,x_i=0$, for every $\fub\in Y^\vee$.

\end{enumerate}
\end{proposition}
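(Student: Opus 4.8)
The plan is to observe at the outset that the whole statement is \emph{purely algebraic}: it concerns $X\atp Y$ only as a linear space over the field $\Qe$, so the non-Archimedean structure plays no role and the argument is verbatim the one valid over an arbitrary field. The single engine driving every step is the elementary fact that the algebraic dual separates points, i.e.\ if $v\neq 0$ in a $\Qe$-vector space $V$ then $\varphi(v)\neq 0$ for some $\varphi\in V^\vee$ (extend $\{v\}$ to a Hamel basis and take the associated coordinate functional). Concretely, I would establish the cycle \ref{vii}\,$\Rightarrow$\,\ref{viii}\,$\Rightarrow$\,\ref{ix}\,$\Rightarrow$\,\ref{vii}, and then fold \ref{x} into the equivalence class by the symmetric pair \ref{viii}\,$\Rightarrow$\,\ref{x}\,$\Rightarrow$\,\ref{vii} (obtained by interchanging the roles of the two factors).

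The two easy links come straight from the definition of $u$ as a functional on bilinear forms, $u(\Bm)=\sum_i \Bm(x_i,y_i)$. For \ref{vii}\,$\Rightarrow$\,\ref{viii}, given $\fua\in X^\vee$ and $\fub\in Y^\vee$ the map $(x,y)\mapsto\fua(x)\,\fub(y)$ is a bilinear form, so $u=0$ forces $\sum_i\fua(x_i)\,\fub(y_i)=0$. For \ref{viii}\,$\Rightarrow$\,\ref{ix}, fix $\fua\in X^\vee$ and set $v\coloneqq\sum_i\fua(x_i)\,y_i\in Y$; then \ref{viii} says $\fub(v)=0$ for every $\fub\in Y^\vee$, whence $v=0$ by point-separation in $Y^\vee$. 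The implication \ref{viii}\,$\Rightarrow$\,\ref{x} is identical with the two factors swapped.

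The one substantive step is \ref{ix}\,$\Rightarrow$\,\ref{vii} (and, symmetrically, \ref{x}\,$\Rightarrow$\,\ref{vii}). Here I would first reduce to a linearly independent family of first-factor vectors: choose a maximal linearly independent subset and relabel so that it is $\{x_1,\dots,x_r\}$, write $x_j=\sum_{l\le r}a_{jl}\,x_l$ for $j>r$, and use the bilinearity identities \ref{ii}--\ref{iii} of Proposition~\ref{propro} to recast $u=\sum_{l=1}^r x_l\otimes y_l'$ with $y_l'\coloneqq y_l+\sum_{j>r}a_{jl}\,y_j$. A one-line computation shows that $\sum_i\fua(x_i)\,y_i=\sum_{l=1}^r\fua(x_l)\,y_l'$ for every $\fua\in X^\vee$, so hypothesis \ref{ix} transfers to the reduced representation as $\sum_{l\le r}\fua(x_l)\,y_l'=0$ for all $\fua$. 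Since $x_1,\dots,x_r$ are independent, I extend them to a basis of $X$ and take coordinate functionals $f_1,\dots,f_r\in X^\vee$ with $f_s(x_l)=\delta_{sl}$; applying the reduced hypothesis with $\fua=f_s$ yields $y_s'=0$ for each $s$, and therefore $u=\sum_{l=1}^r x_l\otimes y_l'=0$ by property \ref{iv}.

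I expect the main obstacle to be precisely this linear-independence reduction: one must verify that passing from the given representation to one with independent $x_i$ genuinely preserves hypothesis \ref{ix} (the displayed identity above is what makes this work), and that the coordinate functionals really extend to honest elements of the full dual $X^\vee$ rather than functionals on the span of the $x_i$ alone. Once these two points are handled, everything else is formal: the remaining implications are immediate from the action of $u$ on the bilinear forms $(x,y)\mapsto\fua(x)\,\fub(y)$ together with point-separation by the algebraic duals $X^\vee$ and $Y^\vee$.
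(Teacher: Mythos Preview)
Your argument is correct and is precisely the standard one the paper defers to (it merely cites Proposition~1.2 in Ryan's \emph{Introduction to Tensor Products of Banach Spaces} rather than spelling it out). The linear-independence reduction together with point-separation by the algebraic duals is exactly the route intended.
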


\begin{proof}
The proof is again a natural adaptation of that in the complex setting, see e.g.\ Proposition 1.2 in~\cite{ryan2002introduction}.
\end{proof}

 Until now, we did not need any metric property of the $p$-adic vector spaces $X$ and $Y$, because the construction of the algebraic tensor product $X\atp Y$  is independent of any notion of norm. Since, however, we are interested in the tensor product of $p$-adic \emph{Hilbert} spaces, from now on we will assume that we actually deal with a pair of $p$-adic Hilbert spaces, for which we use the usual shorthand notation $\Hi$, $\Ki$ instead of the more cumbersome $(\Hi,\No{\cdot}_{\Hi},\inprd_{\Hi},\Phi)$, $(\Ki,\No{\cdot}_{\Ki},\inprd_{\Ki},\Psi)$ (see Definition~\ref{hilbertspace}).

Therefore, the next step of our construction is to endow the algebraic tensor product $\Hi\atp \Ki$ with a suitable \emph{ultrametric norm}, thereby obtaining, upon completion, a suitable $p$-adic Banach space. 

Recall that in the standard setting where the tensor product $\mathscr{H}\otimes\mathscr{K}$ of two \emph{complex} Hilbert spaces $\mathscr{H}$, $\mathscr{K}$ is considered, the norm of the Hilbert space $\mathscr{H}\otimes\mathscr{K}$ is precisely the one canonically associated with the scalar product of this space~\cite{Moretti}. 
 
In other words, a suitable scalar product is defined on the algebraic tensor product $\mathscr{H}\atp\mathscr{K}$, yielding a pre-Hilbert space; the Hilbert tensor product $\mathscr{H}\otimes\mathscr{K}$ is then obtained by completing this space.
Such an approach cannot be pursued in the $p$-adic setting, because, as noted previously, the norm and the scalar product are not related in the usual fashion --- but only by the Cauchy-Schwarz inequality. It is also worth noting that the canonical (Hilbert space) norm on $\mathscr{H}\otimes\mathscr{K}$ \emph{does not coincide} with the \emph{projective norm}~\cite{ryan2002introduction} on the algebraic tensor product $\mathscr{H}\atp\mathscr{K}$ (the linear spaces $\mathscr{H}$, $\mathscr{K}$ being endowed with their respective Hilbert space norms); namely, the (Hilbert space) norm of $\mathscr{H}\otimes\mathscr{K}$, when restricted to $\mathscr{H}\atp\mathscr{K}$, does not coincide with the following:
\begin{equation}
\PNo{u}\coloneqq\inf \bigg\{ \sum_{i=1}^n \No{x_i}_{\mathscr{H}}\, \No{y_i}_{\mathscr{K}}\bbmid u=\sum_{i=1}^n x_i\otimes y_i,\,x_i\in\mathscr{H},y_i\in\mathscr{K}\bigg\}.
\end{equation}
Here, the \emph{infimum} is taken over all possible representations of $u\in\mathscr{H}\atp\mathscr{K}$ (i.e., all possible decompositions of $u$ as a sum of elementary tensors).

A natural non-Archimedean counterpart of this norm, compatible with the strong triangle inequality characterizing the $p$-adic setting, is given by the following candidate norm:
\begin{equation}\label{RNo}
\PNo{u}\coloneqq \inf\bigg\{\max_{0\le i\le n}\Noh{x_i} \, \Nok{y_i}\bbmid u=\sum_{i=1}^n x_i\otimes y_i,\,x_i\in\Hi,y_i\in\Ki\bigg\}.
\end{equation}
Here, the \emph{infimum} is taken over all possible representations of $u\in\Hi\atp\Ki$. 

We now demonstrate that~\eqref{RNo} indeed defines a non-Archimedean norm. To this end, we first prove that the mapping $\Hi\atp\Ki\ni u \mapsto\PNo{u}\in\mathbb{R^+}$ in~\eqref{RNo} is an ultrametric seminorm on the algebraic tensor product $\Hi\atp\Ki.$

\begin{proposition}
Let $\Hi$ and $\Ki$ be two $p$-adic Hilbert spaces. Then, $\PNo{\cdot}$ is an ultrametric seminorm on the algebraic tensor product $\Hi\atp\Ki$. 
\end{proposition}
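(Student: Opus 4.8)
The plan is to verify directly the two defining properties of an ultrametric seminorm, namely absolute-homogeneity and the strong triangle inequality \eqref{1}; positive-definiteness is \emph{not} part of the seminorm axioms and will be taken up separately when upgrading $\PNo{\cdot}$ to a genuine norm. Before that, I would record that the infimum in \eqref{RNo} is taken over a nonempty set of nonnegative reals, since every $u\in\Hi\atp\Ki$ admits at least one finite representation by \eqref{eq110}; hence $\PNo{u}$ is a well-defined element of $\R^+$, and in particular finiteness is never in question. Taking the representation $0=0\otimes y$ and using property~\ref{iv} of Proposition~\ref{propro} gives $\PNo{0}=0$.

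For absolute-homogeneity I would fix $\lambda\in\Qe$. When $\lambda=0$ the identity $\PNo{0\cdot u}=\abs{0}\,\PNo{u}$ is immediate from $\PNo{0}=0$. For $\lambda\neq 0$, given any representation $u=\sum_i x_i\otimes y_i$, property~\ref{iii} of Proposition~\ref{propro} rewrites $\lambda u=\sum_i(\lambda x_i)\otimes y_i$; since $\Noh{\lambda x_i}=\abs{\lambda}\,\Noh{x_i}$ by absolute-homogeneity of the norm on $\Hi$, the associated cost scales exactly by $\abs{\lambda}$, whence $\PNo{\lambda u}\le\abs{\lambda}\,\PNo{u}$. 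Applying the same estimate to $\lambda^{-1}(\lambda u)$ yields the reverse inequality, and equality follows.

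The strong triangle inequality is the step where the use of the $\max$ rather than a sum in \eqref{RNo} becomes decisive. Given $u,v\in\Hi\atp\Ki$ and $\varepsilon>0$, I would select representations $u=\sum_i x_i\otimes y_i$ and $v=\sum_j x_j'\otimes y_j'$ whose costs lie within $\varepsilon$ of $\PNo{u}$ and $\PNo{v}$, respectively. Concatenating the two lists of simple tensors produces a single representation of $u+v$, and the cost of this combined representation is precisely the maximum of the two individual maxima, hence at most $\max(\PNo{u},\PNo{v})+\varepsilon$. Since $\varepsilon>0$ is arbitrary, this delivers $\PNo{u+v}\le\max(\PNo{u},\PNo{v})$, i.e.\ \eqref{1}.

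The argument is essentially routine, and I do not expect a genuine obstacle; the one conceptual point worth stressing is exactly this last step, which also explains why the classical projective norm cannot simply be transcribed. Summing the costs of the two representations would yield only the ordinary triangle inequality, whereas replacing the sum by the $\max$ is precisely what promotes it to the ultrametric inequality \eqref{1}, making $\PNo{\cdot}$ compatible with the non-Archimedean setting.
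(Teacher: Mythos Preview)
Your proof is correct and follows essentially the same approach as the paper's own argument: both establish absolute-homogeneity via the rescaling trick $u=\lambda^{-1}(\lambda u)$, and both handle the strong triangle inequality by concatenating near-optimal representations and observing that the resulting cost is the maximum of the two individual maxima. Your additional remarks on well-definedness and on why the $\max$ in~\eqref{RNo} is essential for the ultrametric inequality are sound but not present in the paper's terser version.
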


\begin{proof}
Let us start with absolute homogeneity. We need to prove that, for every scalar $\lambda$ and every vector $u=\sum_{i=1}^n x_i\otimes y_i\in\Hi\atp\Ki$, $\PNo{\lambda u}=\abs{\lambda}\PNo{u}$. For $\lambda=0$, this is trivial. If $\lambda\neq 0,$ then $\lambda u=\sum_{i=1}^n(\lambda x_i)\otimes y_i$ is a representation of $\lambda u$; therefore, we have the following: 
\begin{equation}\label{eq106}
\begin{split}
\PNo{\lambda u}\le \max_i \Noh{\lambda x_i}\, &\Nok{y_i}=\max_i \abs{\lambda}\Noh{x_i}\,\Nok{y_i}=\abs{\lambda}\max_i \Noh{x_i}\,\Nok{y_i}\\ &\implies \PNo{\lambda u}\le \abs{\lambda} \PNo{u}.
\end{split}
\end{equation}
On the other hand, we can write $\PNo{u}=\PNo{\lambda^{-1}\lambda u}$, so that
\begin{equation}\label{eq.107}
\PNo{\lambda^{-1}\lambda u}\le \abs{\lambda}^{-1}\PNo{\lambda u}\implies \abs{\lambda} \PNo{u}\le \PNo{\lambda u}.
\end{equation}
Thus, by~\eqref{eq106} and~\eqref{eq.107} we actually have that $\PNo{\lambda u}=\abs{\lambda} \PNo{u}$. 

It remains to prove that $\PNo{\cdot}$ satisfies the strong triangle inequality. In fact, for $\epsilon>0$, let $u=\sum_{i=1}^n x_i\otimes y_i$ be a representation of $u$ and $v=\sum_{i=n+1}^{n+m} x_i\otimes y_i$ be representation of $v$ such that 
\begin{equation}
\max_{0\le i\le n} \Noh{x_i}\, \Nok{y_i}\le \PNo{u}+\epsilon \ \ \ \text{and} \ \ \ \max_{n+1\le i \le n+m} \Noh{x_i}\, \Nok{y_i}\le \PNo{v}+\epsilon.
\end{equation} 
Observe that $\sum_{i=1}^{n+m} x_i\otimes y_i$ is a representation of $u+v$ and, therefore, 
\begin{equation}
\begin{split}
\PNo{u+v}\le \max_{1 \le i\le n+m}\Noh{x_i}\, \Nok{y_i}&=\max\{\max_{0\le i\le n}\Noh{x_i}\, \Nok{y_i}, \max_{n+1\le i \le n+m}\Noh{x_i}\, \Nok{y_i}\}\\ &\le \max\{\PNo{u}+\epsilon,\PNo{v}+\epsilon\}.
\end{split}
\end{equation}
 Since this is true for every $\epsilon>0$, we have that $\PNo{u+v}\le \max\{\PNo{u},\PNo{v}\}$, as we wanted to prove.
\end{proof}

Next, to establish that $\PNo{\cdot}$ is an ultrametric norm, we still need to prove that $\PNo{x}=0$ iff $x=0$. To this end,  we start with the following:

\begin{lemma}\label{lemmaequiv}
If $\{x_i\}_{i\in I}$ is a finite norm-orthogonal system in $\Hi$, then 
\begin{equation}
\bigg\|\sum_{i\in I}x_i\otimes y_i\bigg\|_\pi= \max_{i\in I}\Noh{x_i}\,\Nok{y_i},
\end{equation}
for any choice of the set $\{y_i\}_{i\in I}\subset\Ki.$ 
\end{lemma}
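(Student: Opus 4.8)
The statement is an equality of two non-negative reals, so the plan is to prove the two inequalities separately. The bound $\PNo{\sum_{i\in I}x_i\otimes y_i}\le\max_{i\in I}\Noh{x_i}\Nok{y_i}$ is immediate, since $\sum_{i\in I}x_i\otimes y_i$ is itself one of the representations competing in the infimum~\eqref{RNo} defining $\PNo{\cdot}$. All the content lies in the reverse inequality, and this is exactly where the hypothesis enters: the property I will use is the full orthogonality relation $\Noh{\sum_{i}\gamma_i x_i}=\max_{i}\abs{\gamma_i}\Noh{x_i}$ for $\gamma_i\in\Qe$, characteristic of a norm-orthogonal system.

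The idea for the lower bound is to avoid the full duality between $\Hi\ptp\Ki$ and bounded bilinear forms, and instead test the (representation-independent) functional $u$ against a single coordinate functional on $\Ki$, turning the tensor identity into an \emph{exact} norm computation in $\Hi$. First I would fix an index $i_0\in I$ realizing $R:=\max_{i\in I}\Noh{x_i}\Nok{y_i}=\Noh{x_{i_0}}\Nok{y_{i_0}}$. Using the orthonormal basis $\{\psi_j\}$ of $\Ki$ and the Parseval identity~\eqref{eq.11n}, I would choose $j_0$ with $\abs{\braketk{\psi_{j_0}}{y_{i_0}}}=\Nok{y_{i_0}}$ and set $g:=\braketk{\psi_{j_0}}{\,\cdot\,}\in\Ki^\vee$. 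The Cauchy-Schwarz inequality together with $\Nok{\psi_{j_0}}=1$ then gives $\No{g}\le 1$, while by construction $\abs{g(y_{i_0})}=\Nok{y_{i_0}}$.

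Now let $u=\sum_j a_j\otimes b_j$ be an arbitrary representation. Since $\sum_{i\in I}x_i\otimes y_i-\sum_j a_j\otimes b_j=0$ in $\Hi\atp\Ki$, statement~\ref{x} of Proposition~\ref{proptens}, applied with the functional $g$, yields
\begin{equation}
\sum_{i\in I}g(y_i)\,x_i=\sum_j g(b_j)\,a_j .
\end{equation}
Taking $\Noh{\,\cdot\,}$ of both sides, on the left I would invoke norm-orthogonality of $\{x_i\}$, and on the right the strong triangle inequality together with $\abs{g(b_j)}\le\No{g}\Nok{b_j}\le\Nok{b_j}$, to obtain
\begin{equation}
\max_{i\in I}\abs{g(y_i)}\,\Noh{x_i}=\Big\|\sum_{i\in I}g(y_i)x_i\Big\|_{\hi}=\Big\|\sum_j g(b_j)a_j\Big\|_{\hi}\le\max_j\Noh{a_j}\,\Nok{b_j}.
\end{equation}
Since the left-hand side dominates its $i_0$-term $\abs{g(y_{i_0})}\Noh{x_{i_0}}=\Nok{y_{i_0}}\Noh{x_{i_0}}=R$, every representation satisfies $\max_j\Noh{a_j}\Nok{b_j}\ge R$; passing to the infimum gives $\PNo{u}\ge R$, which together with the easy bound proves the lemma.

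I expect the main step to be the conceptual reduction just described, rather than any computation: the key is recognizing that a single scalar functional $g$ on $\Ki$ suffices, because norm-orthogonality upgrades the tensor relation $\sum_i g(y_i)x_i=\sum_j g(b_j)a_j$ into the \emph{equality} $\max_i\abs{g(y_i)}\Noh{x_i}=R$ (the reverse inequality $\le R$ of that max also holds since $\abs{g(y_i)}\le\Nok{y_i}$), instead of a mere estimate. The only analytic inputs are Parseval and Cauchy-Schwarz to produce $g$ with $\No{g}\le 1$ attaining the norm of $y_{i_0}$; everything else is the strong triangle inequality.
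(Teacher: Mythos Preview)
Your proof is correct and takes a genuinely different route from the paper's. The paper follows the classical argument (essentially Lemma~10.2.9 in Perez-Garcia--Schikhof): it chooses a norm-orthogonal basis $\{e_k\}$ for the linear span of the $y_i$ (and, implicitly, the $y'_j$), expands both representations of $u$ in $\{e_k\}$ on the $\Ki$-side, matches the $\Hi$-valued coefficients in front of each $e_k$, and then runs a chain of max-inequalities using orthogonality of both $\{x_i\}$ and $\{e_k\}$. Your argument instead contracts against a \emph{single} functional $g=\braketk{\psi_{j_0}}{\,\cdot\,}$, chosen via Parseval to norm $y_{i_0}$, and uses Proposition~\ref{proptens}\ref{x} to transport the tensor identity into a vector identity in $\Hi$; norm-orthogonality of $\{x_i\}$ then turns that identity into the exact equality $\max_i|g(y_i)|\Noh{x_i}$, and the $i_0$-term already recovers $R$. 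Your approach is shorter and sidesteps the bookkeeping of the basis expansion on the $\Ki$-side; it also exploits the Hilbert structure of $\Ki$ (Parseval) to produce the norming functional explicitly, whereas the paper's argument is purely a normed-space argument that would work verbatim for any pair of ultrametric normed spaces admitting orthogonal bases. Both proofs ultimately hinge on the same ultrametric phenomenon: norm-orthogonality upgrades the strong triangle inequality to an equality, so that the $\Hi$-norm of the transported identity is computable exactly.
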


\begin{proof}
First note that the inequality $\PNo{\sum_{i\in I}x_i\otimes y_i}\le \max_{i\in I}\Noh{x_i}\Nok{y_i}$ holds by the definition of the ultrametric seminorm $\PNo{\cdot}$. To prove the reverse inequality --- that is, $\max_{i\in I}\Noh{x_i}\,\Nok{y_i}\le \PNo{\sum_{i\in I}x_i\otimes y_i}$ --- we need to show that 
\begin{equation}\label{eq38}
\max_{i\in I}\Noh{x_i}\,\Nok{y_i}\le \max_{j\in J}\Noh{x'_j}\,\Nok{y'_j},
\end{equation}
where $\{x'_j\}$ and $\{y'_j\}$ is any pair of finite subsets of $\Hi$ and $\Ki$, respectively, satisfying the equation $\sum_{j\in J}x'_j\otimes y'_j=\sum_{i\in I}x_i\otimes y_i\equiv u$; then, taking the infimum over all possible representations of $u$ on the r.h.s.\ of the inequality~\eqref{eq38} yields the result. In fact, let $Y\subset \Ki$ be the linear span of the vectors $\{y_i\}_{i\in I}\cup\{y'_j\}_{j\in J}$, and fix a norm-orthogonal basis $\{e_k\}_{k\in K}$ in the finite-dimensional subspace $Y$ (which always exists, see Theorem~{2.3.22} in~\cite{perez2010locally}). Then, we have that
\begin{equation}
y_i=\sum_{k}\lambda_{ik}e_k \: \text{ and }\: y'_j=\sum_{k}\lambda'_{jk}e_k,
\end{equation}
with $\lambda_{ik},\lambda'_{jk}\in \Qe.$ By linearity of the tensor product, the following chain of equations holds:
\begin{equation}
\sum_{k\in K} \bigl(\sum_{i\in I}\lambda_{ik}x_i\bigr)\otimes e_k=\sum_{i\in I} x_i\otimes y_i=\sum_{j\in J} x'_j\otimes y'_j=\sum_{k\in K} \bigl(\sum_{j\in J}\lambda'_{jk}x'_j\bigr)\otimes e_k.
\end{equation} 
Since $\{e_k\}_{k\in K}$ is a linearly independent set, it follows that
\begin{equation}
\sum_{i\in I}\lambda_{ik}x_i=\sum_{j\in J}\lambda'_{jk}x'_j,
\end{equation}
for all $k\in K.$ Finally, by a simple estimate, we get the desired result:
\begin{align}
\max_{j\in J}\Noh{x'_j}\,\Nok{y'_j}
& =
\max_{j\in J}\Noh{x'_j}\Big(\max_{k\in K}\abs{\lambda'_{jk}}\Nok{e_k}\Big)
\nonumber \\
& \ge
\max_{k\in K} \Big(\Nok{e_k} \, \Noh{\mbox{$\sum_{j\in J}$}\lambda'_{jk}x'_j}\Big)
\notag\\
& =
\max_{k\in K}\Big( \Nok{e_k}\,\Noh{\mbox{$\sum_{i\in I}$} \lambda_{ik}x_i}\Big)
\nonumber \\
& =
\max_{k\in K} \Nok{e_k}\Big(\max_{i\in I}\abs{\lambda_{ik}}\Noh{x_i}\Big)
= \max_{i\in I} \Noh{x_i} \Nok{y_i}.
\end{align}
Here, we have used the fact that the sets $\{e_k\}_{k\in K}$ and $\{x_i\}_{i\in I}$ are norm-orthogonal systems and the strong triangle inequality.
\end{proof}

\begin{theorem}
The seminorm $\PNo{\cdot}$ is, actually, a norm on $\Hi\atp\Ki$, and therefore the pair $\Hi\ptp\Ki\equiv(\Hi\atp\Ki,\PNo{\cdot})$ is a $p$-adic normed space.
\end{theorem}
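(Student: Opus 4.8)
The plan is to isolate the one missing property. Since $\PNo{\cdot}$ has already been shown to be an ultrametric seminorm (absolute-homogeneous and satisfying the strong triangle inequality), the statement reduces to verifying positive definiteness, i.e.\ that $\PNo{u}=0$ forces $u=0$; the reverse implication $\PNo{0}=0$ is immediate by taking the trivial representation and invoking property~\ref{iv}. So I would fix an arbitrary $u\in\Hi\atp\Ki$ and show that $\PNo{u}=0$ entails $u=0$, the whole difficulty being to bring the computation into a form where Lemma~\ref{lemmaequiv} applies.

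First I would rewrite $u$ so that its left tensor factors constitute a \emph{norm-orthogonal system}. Starting from a representation $u=\sum_{i=1}^n x_i\otimes y_i$ as in~\eqref{eq110}, the vectors $x_1,\dots,x_n$ span a finite-dimensional subspace $X\subset\Hi$, which admits a norm-orthogonal basis $\{e_k\}_{k\in K}$ (existence in finite dimension; Theorem~2.3.22 in~\cite{perez2010locally}, also used in the proof of Lemma~\ref{lemmaequiv}). Expanding $x_i=\sum_{k\in K}\mu_{ik}e_k$ with $\mu_{ik}\in\Qe$ and using bilinearity (properties~\ref{i}--\ref{iii}), I would regroup
\begin{equation}
u=\sum_{i=1}^n\Big(\sum_{k\in K}\mu_{ik}e_k\Big)\otimes y_i=\sum_{k\in K}e_k\otimes w_k,\qquad w_k\coloneqq\sum_{i=1}^n\mu_{ik}y_i\in\Ki.
\end{equation}
Because $\{e_k\}_{k\in K}$ is, by construction, a finite norm-orthogonal system in $\Hi$, Lemma~\ref{lemmaequiv} applies verbatim to this representation and yields $\PNo{u}=\max_{k\in K}\Noh{e_k}\,\Nok{w_k}$.

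The conclusion is then immediate. Each basis vector $e_k$ is nonzero, so $\Noh{e_k}>0$; hence $\PNo{u}=0$ forces $\Nok{w_k}=0$, i.e.\ $w_k=0$, for every $k\in K$, and therefore $u=\sum_{k}e_k\otimes w_k=0$ by property~\ref{iv}. (Equivalently, since $\{e_k\}_{k\in K}$ is linearly independent, Proposition~\ref{proptens}, item~\ref{ix}, shows directly that $u\ne 0$ as soon as some $w_k\ne 0$, by choosing $\fua\in\Hi^\vee$ dual to the $e_k$.) The step I expect to carry the real weight is the orthogonalization: one must ensure that $\spn\{x_1,\dots,x_n\}$ genuinely admits a norm-orthogonal basis consisting of nonzero vectors, so that the hypothesis of Lemma~\ref{lemmaequiv} (a norm-orthogonal \emph{system}, not merely an orthogonal family possibly containing $0$) is met; once this is in place, the remainder is routine bookkeeping with bilinearity and the seminorm axioms already established.
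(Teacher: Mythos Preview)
Your proof is correct and follows essentially the same strategy as the paper: orthogonalize one family of tensor factors via a norm-orthogonal basis of their span, then invoke Lemma~\ref{lemmaequiv} to compute $\PNo{u}$ exactly and read off positive definiteness. The only cosmetic difference is that you orthogonalize the left factors (matching the hypothesis of Lemma~\ref{lemmaequiv} as stated) and argue $\PNo{u}=0\Rightarrow u=0$ directly, whereas the paper orthogonalizes the right factors and argues the contrapositive $u\neq 0\Rightarrow\PNo{u}>0$; your version is arguably slightly cleaner since it avoids the implicit appeal to symmetry of the lemma.
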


\begin{proof}
We only need to prove that $u\neq 0 \implies \PNo{u}\neq 0$. Let $0\neq u\in \Hi\atp \Ki$, and express $u$ as a finite sum of the form
\begin{equation} \label{sumsim}
u=\sum_{i\in I}x_i\otimes y_i    
\end{equation}
where the vectors $x_i\in \Hi$ and $y_i\in \Ki$ are all nonzero. Let $Y\subset \Ki$ be the linear span of the vectors $\{y_i\}_{i\in I}$, and pick a norm-orthogonal basis for $Y$.  Possibly by expanding the vectors $\{y_i\}_{i\in I}$ with respect to this basis, and next replacing the vectors $\{x_i\}_{i\in I}$ with suitable linear combinations of these vectors, we see that in~\eqref{sumsim} one can always assume that $\{y_i\}_{i\in I}$ is a norm-orthogonal system. Therefore, with this assumption, by Lemma~\ref{lemmaequiv} we conclude that $\PNo{u}= \max_{i\in I} \Noh{x_i}\Nok{y_i}>0$. 
\end{proof}

By taking the completion of the $p$-adic normed space $\Hi\ptp\Ki\equiv(\Hi\atp \Ki,\PNo{\cdot})$ w.r.t.\ the metric induced by the norm $\PNo{\cdot}$, we obtain the $p$-adic Banach space $\big(\Hi\ptpc\Ki,\PNo{\cdot}\big)$, that will be synthetically denoted by $\Hi\ptpc\Ki$.

\begin{theorem}\label{th.3.9}
If $\{x_i\}_{i\in I}$ is a norm-orthogonal system in $\Hi$ and $\{y_j\}_{j\in J}$ is a norm-orthogonal system in $\Ki,$ then $\{x_i\otimes y_j\mid i\in I,\, j\in J\}$ is a norm-orthogonal system in $\Hi\ptpc\Ki$. In particular, if $\Phi=\{\phi_i\}_{i\in I}$ and $\Psi=\{\psi_j\}_{j\in J}$ are norm-orthogonal (normal) bases in $\Hi$ and $\Ki$, respectively, then $\Xi\coloneqq\{\xi_k\}_{k\in K}\equiv\{\phi_i\otimes \psi_j\}_{i\in I,\, j\in J}$ is a norm-orthogonal (normal) basis in $\Hi\ptpc\Ki$.
\end{theorem}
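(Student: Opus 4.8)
The statement comprises two assertions: that the tensor family $\{x_i\otimes y_j\}$ is norm-orthogonal, and --- in the basis case --- that it is complete. The core is the orthogonality, from which the rest follows, and the plan is to establish at once the stronger \emph{family} identity
\[
\Bigl\|\,\sum_{i,j}\alpha_{ij}\,x_i\otimes y_j\,\Bigr\|_{\pi}=\max_{i,j}\abs{\alpha_{ij}}\,\Noh{x_i}\,\Nok{y_j}
\]
for every finitely supported array $\{\alpha_{ij}\}\subset\Qe$. First I would regroup the double sum as $\sum_i x_i\otimes w_i$, where $w_i\coloneqq\sum_j\alpha_{ij}y_j\in\Ki$; since only finitely many indices $i$ occur and $\{x_i\}$ is norm-orthogonal, Lemma~\ref{lemmaequiv} applies to this finite subsystem and gives $\PNo{\sum_i x_i\otimes w_i}=\max_i\Noh{x_i}\,\Nok{w_i}$ for this arbitrary choice of the $w_i$. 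I would then use norm-orthogonality of $\{y_j\}$ in $\Ki$ to evaluate $\Nok{w_i}=\max_j\abs{\alpha_{ij}}\,\Nok{y_j}$ and substitute; the nested maxima collapse into a single maximum over the pairs $(i,j)$, which is the claimed identity. Restricting to two-element supports recovers the pairwise condition~\eqref{eq.2n}, and the vectors $x_i\otimes y_j$ are nonzero because $\Noh{x_i}\,\Nok{y_j}>0$; hence $\{x_i\otimes y_j\}$ is a norm-orthogonal system. (The cross-norm equality $\PNo{x\otimes y}=\Noh{x}\,\Nok{y}$ used here is the singleton case of Lemma~\ref{lemmaequiv}.)

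For the basis claim, orthogonality is now in hand, and in the normal case $\PNo{\phi_i\otimes\psi_j}=\Noh{\phi_i}\,\Nok{\psi_j}=1$, so the family is normal. It remains to prove \emph{completeness}, i.e.\ that the closed linear span of $\{\phi_i\otimes\psi_j\}$ is all of $\Hi\ptpc\Ki$. As $\Hi\atp\Ki$ is dense in its completion, it suffices to approximate an arbitrary $u=\sum_{k=1}^n x_k\otimes y_k$ by finite combinations of the $\phi_i\otimes\psi_j$. Expanding each factor in its basis, $x_k=\sum_i a_{ik}\phi_i$ and $y_k=\sum_j b_{jk}\psi_j$, I would introduce the truncations $P_N x_k\coloneqq\sum_{i\le N}a_{ik}\phi_i$ and $Q_M y_k\coloneqq\sum_{j\le M}b_{jk}\psi_j$, so that $\sum_k P_N x_k\otimes Q_M y_k$ lies in the algebraic span of the tensor family. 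Writing
\[
x_k\otimes y_k-P_N x_k\otimes Q_M y_k=(x_k-P_N x_k)\otimes y_k+P_N x_k\otimes(y_k-Q_M y_k),
\]
the cross-norm equality together with the strong triangle inequality bounds the total error by $\max_k\max\{\Noh{x_k-P_N x_k}\,\Nok{y_k},\;\Noh{P_N x_k}\,\Nok{y_k-Q_M y_k}\}$. Since $\Noh{x_k-P_N x_k}=\max_{i>N}\abs{a_{ik}}\,\Noh{\phi_i}\to 0$ (the tail of a convergent norm-orthogonal expansion) and likewise $\Nok{y_k-Q_M y_k}\to 0$, while $\Noh{P_N x_k}\le\Noh{x_k}$, this bound tends to $0$ as $N,M\to\infty$; thus the span is dense.

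Finally I would promote ``norm-orthogonal system with dense span'' to ``norm-orthogonal basis''. This is the standard $p$-adic principle (cf.~\cite{perez2010locally}): given $u\in\Hi\ptpc\Ki$ and span-approximants $u_N\to u$, orthogonality forces each coordinate to be Cauchy in $\Qe$ --- in the normal case $\abs{\gamma^{(N)}_{ij}-\gamma^{(M)}_{ij}}\le\PNo{u_N-u_M}$ --- so the coordinates converge to scalars $\gamma_{ij}$; one checks that $\gamma_{ij}\,\PNo{\phi_i\otimes\psi_j}\to 0$ and that $u=\sum_{i,j}\gamma_{ij}\,\phi_i\otimes\psi_j$, uniqueness and unconditional convergence being automatic from the max-formula (this exhibits $\Hi\ptpc\Ki$ as isometric to a $c_0$-space over $I\times J$). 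I expect this completeness step to be the main obstacle: because $\Hi\ptpc\Ki$ is defined only as an abstract metric completion, the real work is to show that the concrete doubly-truncated combinations are dense and then to convert density of the orthogonal family into a genuine unconditionally convergent $c_0$-expansion. By comparison, the orthogonality identity is essentially immediate once the double sum is regrouped and Lemma~\ref{lemmaequiv} is invoked.
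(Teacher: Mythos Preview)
Your argument is correct and follows essentially the same route as the paper: the norm-orthogonality identity is obtained by regrouping $\sum_{i,j}\lambda_{ij}\,x_i\otimes y_j=\sum_i x_i\otimes\big(\sum_j\lambda_{ij}y_j\big)$, applying Lemma~\ref{lemmaequiv} to the orthogonal family $\{x_i\}$, and then using orthogonality of $\{y_j\}$ to unwind the inner norm --- exactly as in the paper. For the basis claim the paper simply invokes two external results from~\cite{perez2010locally} (Lemma~10.2.12 for density of the span and Theorem~9.2.1 for ``orthogonal with dense span $\Rightarrow$ basis''), whereas you spell out both steps directly via truncation and a coordinate-Cauchy argument; your version is more self-contained but not structurally different.
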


\begin{proof}
To establish the norm orthogonality of $\{x_i\otimes y_j\}_{i\in I,j\in J}$,
it suffices by definition to verify it for any finite subset; hence, we may assume the index sets $I$ and $J$ to be finite.
 Take $\lambda_{ij}\in \Qe$  for $i\in I$ and $j\in J$. Using Lemma \ref{lemmaequiv} with the orthogonality of the $\{y_j\}_{j\in J}$ and the property that $\PNo{x\otimes y}=\Noh{x}\cdot \Nok{y}$ for all $x\in \Hi,y\in \Ki$ we get 
\begin{equation}
\begin{split}
\bigg\|\sum_{i\in I,\, j\in J}\lambda_{ij}x_i\otimes y_j\bigg\|_\pi&=\bigg\|\sum_{i\in I}x_i\otimes (\sum_{j\in J}\lambda_{ij}y_j)\bigg\|_\pi=\max_{i\in I} \Noh{x_i}\, \bigg\|\sum_{j\in J} \lambda_{ij}y_j\bigg\|_{\mbox{\tiny $\Ki$}} \\&=\max_{i\in I} \Noh{x_i}\max_{j\in J} \abs{\lambda_{ij}}\, \Nok{y_j}=\max_{i\in I,j\in J}\abs{\lambda_{ij}} \PNo{x_i\otimes y_j}.
\end{split}
\end{equation}
Now, since $\Phi$ and $\Psi$ are normal basis for $\Hi$ and $\Ki$ respectively, by Lemma~$10.2.12$ in~\cite{perez2010locally} we have that the linear span of $\Xi$ is dense in $\Hi\ptpc\Ki$. Finally, by Theorem~$9.2.1$ in~\cite{perez2010locally} $\Xi$ is therefore a norm-orthogonal basis.
\end{proof}

Eventually, we can introduce a suitable inner product on the $p$-adic Banach space $\Hi\ptpc\Ki$, in such a way as to obtain a $p$-adic Hilbert space.

\begin{definition} \label{defsp}
We get a non-degenerate, Hermitian sesquilinear form $\inprd_{\alpha}$ on $\Hi\atp\Ki$ extending --- by linearity in the second argument and by anti-linearity
in the first argument --- the following definition:
\begin{equation}\label{inner}
\braket{x_1\otimes y_1}{x_2 \otimes y_2}_{\alpha}\coloneqq\braket{x_1}{x_2}_{\Hi}\,\braket{y_1}{y_2}_{\Ki}, \quad \forall x_1,x_2\in \Hi,y_1,y_2\in \Ki.
\end{equation}
\end{definition}

The previous definition rests on the following fact:

\begin{proposition}\label{prop.3.13}
There is a unique non-degenerate, Hermitian sesquilinear form $\inprd_{\alpha}$ on $\Hi\atp\Ki$ determined by~\eqref{inner}. In fact, we have that
\begin{enumerate}[label=\sf{(\roman*)}]

\item \label{wdef}
The quantity $\braket{u}{v}_{\alpha}$ is well defined, that is, when computed according to Definition~\ref{defsp}, it does not depend on the particular decompositions $u=\sum_{i\in I} x_i\otimes y_i$ and $v=\sum_{j\in J} x'_j\otimes y'_j$ of elements $u,v\in \Hi \atp \Ki$.

\item \label{sesq}
The mapping $\Hi\atp\Ki\ni (u,v)\mapsto\braket{u}{v}_{\alpha}\in\Qe$ is a Hermitian sesquilinear form.

\item \label{nondeg}
The sesquilinear form $\inprd_{\alpha}$ is non-degenerate.

\item \label{causch}
The Cauchy-Schwarz inequality holds, i.e.,
\begin{equation}\label{c-s}
\abs{\braket{u}{v}_{\alpha}}\leq\Noh{u}\,\Nok{v}, \quad \forall u,v\in \Hi \atp \Ki.
\end{equation}

\end{enumerate}
\end{proposition}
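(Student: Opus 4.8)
The plan is to dispatch the four assertions in the order \ref{wdef}, \ref{sesq}, \ref{nondeg}, \ref{causch}, with well-definedness as the cornerstone on which everything else rests. Uniqueness is then automatic: since simple tensors span $\Hi\atp\Ki$, any sesquilinear form agreeing with \eqref{inner} on them is forced on all of $\Hi\atp\Ki$ by (anti)linear extension, so once \ref{wdef} and \ref{sesq} are secured there is nothing more to prove for uniqueness.

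For \ref{wdef}, I would fix one representation $v=\sum_{j\in J}x'_j\otimes y'_j$ and show that, computed via \eqref{inner},
\[
\braket{u}{v}_{\alpha}=\sum_{i\in I}\sum_{j\in J}\braketh{x_i}{x'_j}\,\braketk{y_i}{y'_j}
\]
is independent of the chosen representation $u=\sum_{i\in I}x_i\otimes y_i$. The device is to trade the anti-linearity of $\inprd_{\Hi}$ and $\inprd_{\Ki}$ in their first slots for ordinary \emph{linear} functionals: for each fixed $j$, the maps $\fua\coloneqq\braketh{x'_j}{\cdot}$ and $\fub\coloneqq\braketk{y'_j}{\cdot}$ lie in the algebraic duals $\Hi^\vee$, $\Ki^\vee$, and by Hermiticity of the factor products $\braketh{x_i}{x'_j}\,\braketk{y_i}{y'_j}=\overline{\fua(x_i)\,\fub(y_i)}$. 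Hence the inner sum equals $\overline{\sum_i\fua(x_i)\fub(y_i)}$, and by the implication \ref{vii}$\Rightarrow$\ref{viii} of Proposition~\ref{proptens} this quantity annihilates every representation of $0$. Since the difference of two representations of $u$ is a representation of $0$, the value is unchanged; summing over $j$ gives independence of the representation of $u$. Independence in the $v$-slot then follows from the symmetric argument, or, once Hermiticity is known, from $\braket{u}{v}_{\alpha}=\overline{\braket{v}{u}_{\alpha}}$.

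Assertion \ref{sesq} is then routine: anti-linearity in the first argument, linearity in the second, and the Hermitian identity $\braket{u}{v}_{\alpha}=\overline{\braket{v}{u}_{\alpha}}$ all follow termwise from \eqref{inner} together with the corresponding properties of $\inprd_{\Hi}$ and $\inprd_{\Ki}$, the expressions now being representation-independent. For non-degeneracy \ref{nondeg} I would argue by contraposition: given $u\neq0$, reduce to a representation $u=\sum_{i=1}^n x_i\otimes y_i$ with $\{y_i\}$ linearly independent (pass to a basis of $\mathrm{span}\{y_i\}$ in that finite-dimensional subspace), so that not all $x_i$ vanish. Testing against simple tensors yields $\braket{u}{z\otimes w}_{\alpha}=\sum_i\braketh{x_i}{z}\,\braketk{y_i}{w}=\braketk{\sum_i\overline{\braketh{x_i}{z}}\,y_i}{w}$; if this vanishes for all $w$, non-degeneracy of $\inprd_{\Ki}$ forces $\sum_i\overline{\braketh{x_i}{z}}\,y_i=0$, whence $\braketh{x_i}{z}=0$ for every $i$ by linear independence of $\{y_i\}$; letting $z$ range over $\Hi$ and invoking non-degeneracy of $\inprd_{\Hi}$ gives every $x_i=0$, a contradiction. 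The non-degeneracy of $\inprd_{\Hi}$ and $\inprd_{\Ki}$ needed here is itself immediate from the defining orthonormal bases, since $\braket{x}{\psi_i}=0$ for all $i$ forces all coordinates of $x$ to vanish.

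Finally, \ref{causch} is where the ultrametric structure carries the argument. Fixing representations of $u$ and $v$, the strong triangle inequality in $\Qe$ gives
\[
\abs{\braket{u}{v}_{\alpha}}\le\max_{i,j}\abs{\braketh{x_i}{x'_j}}\,\abs{\braketk{y_i}{y'_j}},
\]
and applying the Cauchy--Schwarz inequalities of $\Hi$ and $\Ki$ factorwise, together with the elementary identity $\max_{i,j}a_ib_j=(\max_i a_i)(\max_j b_j)$ for nonnegative reals, bounds the right-hand side by $(\max_i\Noh{x_i}\Nok{y_i})(\max_j\Noh{x'_j}\Nok{y'_j})$. Since $\braket{u}{v}_{\alpha}$ no longer depends on the representations, taking the infimum over representations of $u$ and of $v$ produces $\abs{\braket{u}{v}_{\alpha}}\le\PNo{u}\,\PNo{v}$, the inequality \eqref{c-s}. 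I expect the well-definedness step \ref{wdef} to be the genuine obstacle: the conjugation bookkeeping required to bring Proposition~\ref{proptens} to bear is the one place where the sesquilinear, non-Archimedean subtleties truly intervene, whereas \ref{sesq}, \ref{nondeg} and \ref{causch} become comparatively mechanical once \ref{wdef} is in place.
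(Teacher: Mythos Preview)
Your proof is correct and follows essentially the same line as the paper's: well-definedness via Proposition~\ref{proptens} applied to representations of the zero tensor, Hermitian sesquilinearity by termwise verification, and Cauchy--Schwarz via the strong triangle inequality plus the factor Cauchy--Schwarz inequalities. The only substantive difference is in \ref{nondeg}: the paper invokes the implication \ref{viii}$\Rightarrow$\ref{vii} of Proposition~\ref{proptens} directly (tacitly using that functionals of the form $\braket{x}{\cdot}$ separate points on any finite-dimensional subspace), whereas you argue more explicitly by reducing to linearly independent $\{y_i\}$ and using non-degeneracy of the factor inner products---your route is slightly more self-contained, and your explicit passage to the infimum in \ref{causch} is also a bit cleaner than the paper's final equality.
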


\begin{proof}
To prove~\ref{wdef}, it is sufficient to argue that, if $\sum_{j\in J}x_j'\otimes y_j'$ is any representation of the \emph{zero} tensor $0\in\Hi\atp\Ki$, then, for any $\sum_{i\in I} x_i\otimes y_i\in \Hi\atp\Ki$, by~\eqref{inner} we have: 
\begin{equation} \label{tool}
\begin{split}
\sum_{i\in I,\, j\in J}\braket{x_i\otimes y_i}{x_j'\otimes y_j'}_{\alpha}
& = \sum_{i\in I}\sum_{j\in J}
\braket{x_i}{x_j'}_{\Hi}\,\braket{y_i}{y_j'}_{\Ki}
\\
& = 0 = \sum_{i\in I,\, j\in J}\braket{x_j'\otimes y_j'}{x_i\otimes y_i}_{\alpha}.
\end{split}
\end{equation}
Here, the second equality follows from the implication \ref{vii}~$\implies$~\ref{viii} in Proposition~\ref{proptens} (by considering the duals of the Hilbert spaces $\Hi$ and $\Ki$). At this point, by~\eqref{tool}, one easily checks that~\ref{wdef} holds true.

The proof of~\ref{sesq} is clear, by applying Definition~\ref{defsp}.

Let us now prove~\ref{nondeg}. Given $v=\sum_{j\in J}x_j'\otimes y_j'\in \Hi\atp\Ki$, if $\braket{u}{v}_{\Ki}=0$, for every simple tensor $u=x\otimes y\in \Hi\atp\Ki$, then
\begin{equation}
\begin{split}
0 & =\braket{u}{v}_{\Ki}
\\
& = \sum_{j\in J}\braket{x\otimes y}{x_j'\otimes y_j'}_{\alpha}
\\
& = \sum_{j\in J}\braket{x}{x_j'}_{\Hi}\,\braket{y}{y_j'}_{\Ki}, \quad
\forall x\in\Hi,\ \forall y\in\Ki ;
\end{split}
\end{equation}
hence, by the implication \ref{viii}~$\implies$~\ref{vii} in Proposition~\ref{proptens}, we conclude that $v=0$. It follows that the sesquilinear form $\inprd_{\alpha}$ is non-degenerate.

Let us finally prove~\ref{causch}. In fact, for every pair of vectors $u=\sum_{i\in I} x_i\otimes y_i$, $v=\sum_{j\in J}x_j'\otimes y_j'$ in the algebraic tensor product $\Hi\atp\Ki$, we have:
\begin{align}
\abs{\braket{u}{v}_{\alpha}}=\abs{\sum_{i\in I}\sum_{j\in J}\langle  x_i\otimes y_i,x_j'\otimes y_j'\rangle_{\alpha}}
& =
\abs{\sum_{i\in I}\sum_{j\in J}\braket{x_i}{x_j'}_{\Hi}\,\braket{y_i}{y_j'}_{\Ki}} \notag \\
& \le
\max_{i\in I,\, j\in J}\abs{\braket{x_i}{x_j'}_{\Hi}}\abs{\braket{y_i}{y_j'}_{\Ki}} \notag \\
&\le
\max_{i\in I,\, j\in J}\Noh{x_i}\Nok{y_i}\Noh{x_j'}\Nok{y_j'}=\PNo{u}\PNo{v} .
\end{align}
Here, the second inequality has been obtained by the Cauchy-Schwarz inequality in the $p$-adic Hilbert spaces $\Hi$ and $\Ki$.
        
The proof is complete.     
\end{proof}

\begin{remark}
By Proposition~\ref{prop.3.13}, $\inprd_{\alpha}$ is a well-defined non-Archimedean inner product on the normed algebraic tensor product $\Hi\ptp\Ki\equiv(\Hi\atp\Ki,\PNo{\cdot})$. In particular, it is bounded w.r.t.\ the ultrametric norm $\PNo{\cdot}$, by virtue of the Cauchy-Schwarz inequality~\eqref{c-s}.
\end{remark}

\begin{lemma} \label{leminn}
The inner product $\inprd_{\alpha}$ on the normed algebraic tensor product $(\Hi\atp\Ki,\PNo{\cdot})$ can be uniquely extended to a non-Archimedean inner product $\inprd_{\pi}$ on the $p$-adic Banach space $\Hi\ptpc\Ki$. Precisely, for every pair of vectors $u,v\in\Hi\ptpc\Ki$, we have that
\begin{equation} \label{definn}
\braket{u}{v}_{\pi}=\lim_{n\to\infty}\braket{u_n}{v_n}_{\alpha},
\end{equation}
where $\{u_n\}_{n\in \N}$ and $\{v_n\}_{n\in \N}$ are any two sequences in $\Hi\atp\Ki$ such that $u_n\to u$ and $v_n\to v$ (convergence w.r.t.\ the norm $\PNo{\cdot}$ being understood).
\end{lemma}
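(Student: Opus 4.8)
The plan is to extend $\inprd_\alpha$ from the dense subspace $\Hi\atp\Ki$ to all of $\Hi\ptpc\Ki$ by continuity, exploiting two facts already at our disposal: the algebraic tensor product $\Hi\atp\Ki$ is dense in its completion $\Hi\ptpc\Ki$ (by the very construction of the completion), and $\inprd_\alpha$ is bounded with respect to $\PNo{\cdot}$ by virtue of the Cauchy-Schwarz inequality~\eqref{c-s} of Proposition~\ref{prop.3.13}. Since the base field $\Qe$ is complete, formula~\eqref{definn} will produce a genuine limit, and the algebraic properties of $\inprd_\alpha$ will survive the passage to the limit.

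First I would fix $u,v\in\Hi\ptpc\Ki$ and choose, by density, sequences $\{u_n\}_{n\in\N}$ and $\{v_n\}_{n\in\N}$ in $\Hi\atp\Ki$ with $u_n\to u$ and $v_n\to v$. To show that $\{\braket{u_n}{v_n}_\alpha\}_{n\in\N}$ is a Cauchy sequence in $\Qe$, I would use the sesquilinearity of $\inprd_\alpha$ to write $\braket{u_n}{v_n}_\alpha-\braket{u_m}{v_m}_\alpha=\braket{u_n}{v_n-v_m}_\alpha+\braket{u_n-u_m}{v_m}_\alpha$, and then apply~\eqref{c-s} together with the strong triangle inequality to bound its absolute value by $\max\{\PNo{u_n}\,\PNo{v_n-v_m},\,\PNo{u_n-u_m}\,\PNo{v_m}\}$. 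Because norm-convergent sequences are bounded and $\{u_n\},\{v_n\}$ are in particular Cauchy, this right-hand side tends to $0$ as $m,n\to\infty$; completeness of $\Qe$ then guarantees that the limit in~\eqref{definn} exists.

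Next I would verify that the value of the limit is independent of the chosen approximating sequences: given alternative sequences $u_n'\to u$ and $v_n'\to v$, the same splitting and the estimate~\eqref{c-s} yield $\abs{\braket{u_n}{v_n}_\alpha-\braket{u_n'}{v_n'}_\alpha}\to 0$, since $\PNo{u_n-u_n'}\to 0$ and $\PNo{v_n-v_n'}\to 0$. This makes $\inprd_\pi$ a well-defined map on $\Hi\ptpc\Ki\times\Hi\ptpc\Ki$ that agrees with $\inprd_\alpha$ on $\Hi\atp\Ki$. Sesquilinearity and the Hermitian property then transfer from $\inprd_\alpha$ to $\inprd_\pi$ by taking limits, using the continuity of the field operations and of the conjugation on $\Qe$; and the Cauchy-Schwarz inequality $\abs{\braket{u}{v}_\pi}\le\PNo{u}\,\PNo{v}$ follows by passing~\eqref{c-s} to the limit, noting that $\PNo{u_n}\to\PNo{u}$ and $\PNo{v_n}\to\PNo{v}$ by continuity of the norm. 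Thus $\inprd_\pi$ is a non-Archimedean inner product. Uniqueness is then immediate, since any non-Archimedean inner product extending $\inprd_\alpha$ is bounded, hence jointly continuous, and two continuous maps coinciding on the dense set $\Hi\atp\Ki$ must coincide everywhere.

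The main obstacle is really the Cauchy estimate of the second step: everything hinges on the boundedness of $\inprd_\alpha$ encoded in~\eqref{c-s}, which converts the norm-smallness of the increments $u_n-u_m$ and $v_n-v_m$ into smallness of the corresponding scalar increments, and on the completeness of $\Qe$ to conclude existence of the limit. The ultrametric character in fact \emph{simplifies} matters here, since the strong triangle inequality lets one replace the usual sum of two error terms by their maximum.
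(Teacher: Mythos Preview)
Your proposal is correct and follows essentially the same route as the paper's proof: both establish the Cauchy property of $\{\braket{u_n}{v_n}_\alpha\}$ via a telescoping identity combined with the Cauchy--Schwarz inequality~\eqref{c-s}, then verify well-definedness, transfer the algebraic properties by limits, and deduce uniqueness from continuity on a dense subspace. The only cosmetic differences are that the paper uses the splitting $\braket{u_n-u_m}{v_n}_\alpha+\braket{u_m}{v_n-v_m}_\alpha$ and bounds with a sum rather than a $\max$, but this is immaterial.
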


\begin{proof}
Given any pair of vectors $u,v\in\Hi\ptpc\Ki$, let $\{u_n\}_{n\in \N}$ and $\{v_n\}_{n\in \N}$ be two sequences in $\Hi\atp\Ki$ such that $u_n\to u$ and $v_n\to v$ (w.r.t.\ the norm $\PNo{\cdot}$). We first prove that the limit $\braket{u}{v}_{\pi}\equiv\lim_{n\to\infty}\braket{u_n}{v_n}_{\alpha}$ exists and does not depend on the sequences that we have chosen. In fact, since 
\begin{equation}
|\braket{u_n}{v_n}_{\alpha}-\braket{u_m}{v_m}_{\alpha}|\le\PNo{u_n-u_m}\PNo{v_n}+\PNo{u_m}\PNo{v_n-v_m},
\end{equation}
we have that $\{\braket{u_n}{v_n}_{\alpha}\}_{n\in \N}$ is a Cauchy sequence and, therefore, is convergent. To prove that its limit is unique, let us take any other pair of sequences $\{u_n'\}_{n\in \N}$ and $\{v_n'\}_{n\in \N}$ such that $u_n'\to u$ and $v_n'\to v$. We have:
\begin{equation}
|\braket{u_n}{v_n}_{\alpha}-\braket{u_n'}{v_n'}_{\alpha}|\le\PNo{u_n-u_n'}\PNo{v_n}+\PNo{u_n'}\PNo{v_n-v_n'}\to 0.
\end{equation}

By the previous two points, for every pair of vectors $u,v\in\Hi\atp\Ki$, considering the sequences $\{u,u,\ldots\}$ and $\{v,v,\ldots\}$, we see that $\braket{u}{v}_{\alpha}=\braket{u}{v}_{\pi}$.

Moreover, one can easily check that the mapping $\Hi\ptpc\Ki\ni u,v\mapsto\braket{u}{v}_{\pi}\in\Qe$ is a Hermitian sesquilinear form, that, by the previous point, extends $\inprd_{\alpha}$ to the Banach space completion $(\Hi\ptpc\Ki,\PNo{\cdot})$ of the normed algebraic tensor product $(\Hi\atp\Ki,\PNo{\cdot})$.

Finally, let us observe that
\begin{equation}
\abs{\braket{u}{v}_{\pi}}=\lim_{n\to\infty}\abs{\braket{u_n}{v_n}_{\alpha}}\le\lim_{n\to\infty}\PNo{u_n}\PNo{v_n}=\PNo{u}\PNo{v},
\end{equation}
which proves that the Hermitian sesquilinear form $\inprd_{\pi}$ satisfies the Cauchy-Schwarz inequality too, hence, it is a non-Archimedean inner product on the $p$-adic Banach space $\Hi\ptpc\Ki$. Clearly, every sesquilinear form, that extends $\inprd_{\alpha}$ to the Banach space completion of the normed algebraic tensor product $(\Hi\atp\Ki,\PNo{\cdot})$ and satisfies the Cauchy-Schwarz inequality, is continuous and, therefore, is uniquely determined by a condition of the form~\eqref{definn}; otherwise stated, it must coincide with $\inprd_{\pi}$.
\end{proof}

\begin{theorem} \label{maintheo}
The quadruple $\big(\Hi\ptpc\Ki,\PNo{\cdot},\inprd_{\pi},\Xi\equiv\{\xi_k\}_{k\in K}\equiv\{\phi_i\otimes \psi_j\}_{i\in I,\, j\in J}\big)$ --- which, in the following, will be synthetically denoted by $\Hi\otimes\Ki$ --- is a $p$-adic Hilbert space.
\end{theorem}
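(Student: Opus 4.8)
The plan is to check the three defining ingredients of a $p$-adic Hilbert space listed in Definition~\ref{hilbertspace}, observing that almost all of them have already been secured by the preceding results, so that only one genuinely new verification remains. First, $(\Hi\ptpc\Ki,\PNo{\cdot})$ is a $p$-adic Banach space by its very construction, being the metric completion of the normed algebraic tensor product $(\Hi\atp\Ki,\PNo{\cdot})$. Second, by Lemma~\ref{leminn}, the form $\inprd_{\pi}$ is a bona fide non-Archimedean inner product on this Banach space: it is Hermitian and sesquilinear, and it satisfies the Cauchy--Schwarz inequality. Hence $(\Hi\ptpc\Ki,\PNo{\cdot},\inprd_{\pi})$ is already an inner-product $p$-adic Banach space, and the \emph{only} outstanding task is to prove that $\bm{e}=\{\phi_i\otimes\psi_j\}_{i\in I,\,j\in J}$ is an orthonormal basis in the sense required by the definition.

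For the basis structure itself I would simply invoke Theorem~\ref{th.3.9}: since $\Phi=\{\phi_i\}_{i\in I}$ and $\Psi=\{\psi_j\}_{j\in J}$ are normal bases of $\Hi$ and $\Ki$, that theorem already guarantees that $\bm{e}$ is a normal basis of $\Hi\ptpc\Ki$ --- that is, a system of normalized, pairwise norm-orthogonal vectors whose linear span is dense. Recalling that an orthonormal basis is, by definition, a normal basis satisfying in addition the inner-product condition $\braket{e_k}{e_{k'}}_{\pi}=\delta_{kk'}$, it therefore remains only to verify this last identity.

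This reduces to a direct computation on simple tensors. Writing $e_k=\phi_i\otimes\psi_j$ and $e_{k'}=\phi_{i'}\otimes\psi_{j'}$, with $k=(i,j)$ and $k'=(i',j')$ ranging over $K\cong I\times J$, I would use Lemma~\ref{leminn} to replace $\inprd_{\pi}$ by $\inprd_{\alpha}$ on these elements of $\Hi\atp\Ki$, and then apply Definition~\ref{defsp} to factor the inner product:
\begin{equation}
\braket{e_k}{e_{k'}}_{\pi}=\braket{\phi_i\otimes\psi_j}{\phi_{i'}\otimes\psi_{j'}}_{\alpha}=\braket{\phi_i}{\phi_{i'}}_{\Hi}\,\braket{\psi_j}{\psi_{j'}}_{\Ki}=\delta_{ii'}\,\delta_{jj'}=\delta_{kk'},
\end{equation}
where the penultimate equality uses precisely that $\Phi$ and $\Psi$ are orthonormal bases of $\Hi$ and $\Ki$, respectively. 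Combining this orthonormality with the normal-basis property furnished by Theorem~\ref{th.3.9} shows that $\bm{e}$ is an orthonormal basis, which completes the verification.

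I do not anticipate a real obstacle here, since the substantive content --- the very definition and non-degeneracy of the norm $\PNo{\cdot}$ (via Lemma~\ref{lemmaequiv}), the norm-orthogonality of the product basis, and the extension of the inner product to the completion --- has already been discharged in the lemmas above. The only point demanding a little care is purely bookkeeping: identifying the product index set $K$ with $I\times J$ so that the Kronecker delta $\delta_{kk'}$ factors correctly as $\delta_{ii'}\,\delta_{jj'}$, and noting that Lemma~\ref{leminn} legitimately transfers the evaluation of $\inprd_{\pi}$ on these elements back to the algebraic form $\inprd_{\alpha}$, on which Definition~\ref{defsp} applies verbatim.
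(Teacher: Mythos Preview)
Your proposal is correct and follows essentially the same approach as the paper's proof: both invoke the Banach-space completion, Lemma~\ref{leminn} for the inner product, Theorem~\ref{th.3.9} for the normal-basis property of $\bm{e}$, and then verify $\langle e_k,e_{k'}\rangle_\pi=\delta_{kk'}$ via Definition~\ref{defsp}. Your write-up is in fact more explicit than the paper's, which compresses the last step into a single sentence.
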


\begin{proof}
The pair $(\Hi\ptpc\Ki,\PNo{\cdot})$ is a $p$-adic Banach space that, further endowed with the non-Archimedean inner product $\inprd_{\pi}$ defined by relation~\eqref{definn} (Lemma~\ref{leminn}), becomes a inner product $p$-adic Banach space. Finally, to conclude our construction, we observe that the norm-orthogonal basis $\Xi\equiv\{\xi_k\}_{k\in K}\equiv\{\phi_i\otimes\psi_j\}_{i\in I,\, j\in J}$ in $\Hi\ptpc\Ki$ is also IP-orthogonal w.r.t.\ the inner product $\inprd_{\pi}$, and, moreover, $\langle e_k,e_l\rangle_{\pi}=\delta_{kl}$.
\end{proof}

\begin{remark}
It should be clear that the synthetic symbol $\Hi\otimes\Ki$ denoting \emph{the tensor product of the $p$-adic Hilbert spaces $\Hi$ and $\Ki$} (i.e., a $p$-adic Hilbert space itself) mimics the usual symbol $\mathscr{H}\otimes\mathscr{K}$ for the tensor product  of two \emph{complex} Hilbert spaces $\mathscr{H}$ and $\mathscr{K}$.
\end{remark}

\begin{corollary}
Given orthonormal bases $\{\phi_i\}_{i\in I}$ and $\{\psi_j\}_{j\in J}$ in $\Hi$ and $\Ki$, respectively, let $\{\phi_i\otimes\psi_j\}_{i\in I,\, j\in J}$ be the associated orthonormal basis in the $p$-adic Hilbert space $\Hi\otimes\Ki$. Then, for every vector $u=\sum_{i\in I,\, j\in J}\lambda_{ij}\, \phi_i\otimes\psi_j\in\Hi\otimes\Ki$, we have that $\lambda_{ij}=\langle\phi_i\otimes\psi_j, u\rangle_{\pi}$ and
\begin{equation}
\PNo{u}=\max_{i\in I,\, j\in J} |\lambda_{ij}|.
\end{equation}
\end{corollary}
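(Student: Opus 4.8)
The plan is to observe that this Corollary is nothing but the specialization of the two general structural facts about $p$-adic Hilbert spaces---the unconditional expansion~\eqref{eq.7} and the non-Archimedean Parseval identity~\eqref{eq.11n}---to the concrete $p$-adic Hilbert space $\Hi\otimes\Ki$ furnished by Theorem~\ref{maintheo}. Since that theorem certifies that $\bm{e}\equiv\{e_k\}_{k\in K}\equiv\{\phi_i\otimes\psi_j\}_{i\in I,\,j\in J}$ is an orthonormal basis of $\Hi\otimes\Ki$, with $\langle e_k,e_l\rangle_{\pi}=\delta_{kl}$, both desired identities follow once the inner-product coefficients are identified with the expansion coefficients $\lambda_{ij}$, after the harmless relabeling $k=(i,j)\in K=I\times J$.

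First I would record the orthonormality of the basis vectors in its doubly indexed form: since $\inprd_{\pi}$ extends $\inprd_{\alpha}$ on simple tensors (Lemma~\ref{leminn}), Definition~\ref{defsp} together with the orthonormality of $\Phi$ and $\Psi$ gives $\langle\phi_i\otimes\psi_j,\phi_k\otimes\psi_l\rangle_{\pi}=\braket{\phi_i}{\phi_k}_{\Hi}\,\braket{\psi_j}{\psi_l}_{\Ki}=\delta_{ik}\delta_{jl}$. Next, to compute $\langle\phi_i\otimes\psi_j,u\rangle_{\pi}$ I would expand $u=\sum_{k,l}\lambda_{kl}\,\phi_k\otimes\psi_l$ and pass the inner product through the (a priori infinite) sum in its second argument. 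This interchange is legitimate because the Cauchy-Schwarz inequality established in Lemma~\ref{leminn} makes $\inprd_{\pi}$ continuous in each argument, so that $\langle\phi_i\otimes\psi_j,u\rangle_{\pi}=\sum_{k,l}\lambda_{kl}\,\delta_{ik}\delta_{jl}=\lambda_{ij}$. Equivalently, one may invoke~\eqref{eq.7} directly, which yields $u=\sum_{i,j}\langle\phi_i\otimes\psi_j,u\rangle_{\pi}\,\phi_i\otimes\psi_j$, and then use the uniqueness of the coefficients in a norm-orthogonal basis (Definition~\ref{def.basis}) to match them with the given representation $u=\sum_{i,j}\lambda_{ij}\,\phi_i\otimes\psi_j$.

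With the coefficient identity $\lambda_{ij}=\langle\phi_i\otimes\psi_j,u\rangle_{\pi}$ in hand, the norm formula is immediate from the Parseval identity~\eqref{eq.11n} applied to $\Hi\otimes\Ki$:
\begin{equation}
\PNo{u}=\max_{i\in I,\,j\in J}\abs{\langle\phi_i\otimes\psi_j,u\rangle_{\pi}}=\max_{i\in I,\,j\in J}\abs{\lambda_{ij}}.
\end{equation}
I do not anticipate a genuine obstacle here, as the argument is essentially bookkeeping layered on top of Theorem~\ref{maintheo}; the only point requiring minor care is the justification of the term-by-term evaluation of $\inprd_{\pi}$ when $I$ or $J$ is infinite, which rests precisely on the continuity of $\inprd_{\pi}$ guaranteed by the Cauchy-Schwarz inequality of Lemma~\ref{leminn}.
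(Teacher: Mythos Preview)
Your proof is correct and matches the paper's own argument: the paper simply states that the result follows from Theorem~\ref{maintheo} together with the non-Archimedean Parseval identity, which is exactly the route you take (with a bit more detail on the coefficient identification via continuity of $\inprd_{\pi}$).
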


\begin{proof}
This result is directly derived from Theorem~\ref{maintheo} and the non-Archimedean Parseval identity in~\eqref{eq.11n}.
\end{proof}


\section{Relation with \texorpdfstring{$p$}{p}-adic Hilbert-Schmidt-class operators}
\label{sec.4}

As is well known, there exists an Hilbert space isomorphism between the complex tensor product Hilbert space $\mathscr{H} \otimes\mathscr{K}$, and the space of Hilbert-Schmidt operators from $\mathscr{H}$ to $\mathscr{K}$ \cite{kadi2005,bour2002,gudder2020operator}. Here we show that this picture admits a natural extension to the $p$-adic setting; namely, we prove that there exists an inner-product preserving surjective isometry --- i.e., an isomorphism of $p$-adic Hilbert spaces --- between $\T(\Hi,\Ki)$ and $\Hi \otimes \Ki$. However, as a preliminary step, we need to first introduce a suitable notion of antiunitary operator in a $p$-adic Hilbert space.
\subsection{Anti-unitary operators}\label{subsec.2.3}
We start with the following
\begin{definition}
Let $\Hi$ be a $p$-adic Hilbert space. We say that an operator $A\colon\Hi\dashrightarrow \Hi$
is \emph{conjugate-linear} (or, \emph{anti-linear}), if the relation
 \begin{equation}\label{eq.59} A(\alpha\psi+\beta\phi)=\overline{\alpha}A\phi+\overline{\beta}A\psi
 \end{equation}
holds for every $\alpha,\beta\in\Qe$, and $\phi,\psi\in\dom(A)$.
\end{definition}
Similarly to linear operators, a conjugate-linear operator $A\colon \Hi\dashrightarrow\Hi$ is  
continuous precisely when it is bounded; moreover, we say that $A$ is all-over if $\dom(A)=\Hi$.

\begin{definition}\label{def.2.13}
Let $\Phi\equiv\{\phi_m\}_{m\in\N}$ be an orthonormal basis in a $p$-adic Hilbert space $\Hi$. We call an all-over, bounded, conjugate-linear operator $J_\Phi\colon\Hi\rightarrow\Hi$, satisfying the condition 
\begin{equation}\label{eq.61n}
J_\Phi\,\phi_m=\phi_m,\quad\forall m\in\N, 
\end{equation}
a \emph{conjugation operator} associated with the orthonormal basis $\Phi$.
\end{definition}

\begin{remark}\label{rem.2.17n}
Every orthonormal basis $\Phi\equiv\{\phi_m\}_{m\in\N}$ in $\Hi$ can be associated, in a natural way, to a  conjugation operator $J_\Phi$ (see the discussion following Example~\ref{exa.2.24}). Moreover, it is not difficult to check that if $L$ is a linear operator in $\Hi$, the operator defined as $A\coloneqq J_\Phi L$ (or, equivalently, $A\coloneqq LJ_\Phi$), is a conjugate-linear operator and, conversely, any conjugate-linear operator $A$ in $\Hi$ can be expressed in this form, for some linear operator $L$ on $\Hi$.
\end{remark}
\begin{remark}\label{rem.4.4}
We recall that a linear operator $L\colon\Hi\dashrightarrow \Hi$ is \emph{norm-preserving} (N-preserving, in short), if it satisfies the condition: $\No{L\phi}=\No{\phi}$, for any $\phi\in\dom(L)$~\cite{aniello2023trace}.
An all-over N-preserving map in $\Hi$ is a linear \emph{isometry} of $\Hi$. A linear operator $L$ in $\Hi$ is \emph{inner-product-preserving} (IP-preserving), if 
\begin{equation}
 \braket{L\phi}{L\psi}=\braket{\phi}{\psi}, \quad \forall \phi,\psi\in\dom(L),   
\end{equation}
while $L$ is \emph{norm-orthogonality-preserving} (NO-preserving), if it is linear, and satisfies the condition~\cite{aniello2023trace}:
\begin{equation}
\phi,\psi\in\dom(L),\quad \phi\nperp\psi\;\implies\;L\phi\nperp L\psi.
\end{equation}
An IP-preserving surjective isometry (i.e., an automorphism) of $\Hi$ is a \emph{unitary operator}; adopting the notation already used in~\cite{aniello2023trace}, in the following we will denote the set of all such operators by $\U$. (For a complete characterization of unitary operators in a $p$-adic Hilbert space see, in particular, Theorem~$5.13$ in~\cite{aniello2023trace}). 
\end{remark}
Bearing in mind Remarks~\ref{rem.2.17n} and~\ref{rem.4.4}, we are now led to set the following
\begin{definition}
Let $A\colon\Hi\dashrightarrow \Hi$ be a conjugate-linear operator in a $p$-adic Hilbert space $\Hi$. We say that
\begin{enumerate}[label=\tt{(A\arabic*)}]
\item $A$ is an \emph{anti-isometric} operator, if $A$ is all-over and N-preserving --- i.e., if $\No{A\phi}=\No{\phi}$ for every  $\phi\in\dom(A)=\Hi$.

\item $A$ is a \emph{conjugate-linear topological isomorphism}, if $A$ is bounded and admits a (conjugate-linear) bounded inverse. 

\item $A$ is \emph{inner-product-conjugating}, if the condition
\begin{equation}
    \braket{A\phi}{A\psi}=\overline{\braket{\phi}{\psi}}
\end{equation}
is satisfied for every $\phi,\psi$ in $\dom(A)$.
\end{enumerate}
\end{definition}
\begin{proposition}\label{prop.2.14}
Let $\Hi$ be a $p$-adic Hilbert space, let $\Phi\equiv\{\phi_m\}_{m\in\N}$ be an orthonormal basis, and let $J_\Phi$ be the associated conjugation operator. Then, the following facts hold:
\begin{enumerate}[label=(\roman*)]
\item \label{prop.2.14_i} For every vector $\chi$ in $\Hi$, we have:
\begin{equation}\label{eq.61}
    J_\Phi\chi=J_\Phi\bigg(\sum_n\braket{\phi_n}{\chi}\phi_n\bigg)=\sum_n\overline{\braket{\phi_n}{\chi}}\phi_n.
\end{equation}
Moreover, $J_\Phi$ is IP-conjugating and anti-isometric.
\item \label{prop.2.14_ii} $J_\Phi$ is involutive, i.e., $J_\Phi^2=\Id$, bijective, bounded, and adjointable. In particular, $J_\Phi=J_\Phi^{-1}=J_\Phi^\ast$.
\end{enumerate}
\end{proposition}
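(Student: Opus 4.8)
The plan is to derive first the explicit action formula of part~\ref{prop.2.14_i}, since every remaining assertion then follows by a short computation. As $J_\Phi$ is bounded, it is continuous, so it may be applied termwise to the unconditionally norm-convergent expansion~\eqref{eq.7} of an arbitrary $\chi\in\Hi$. Using conjugate-linearity~\eqref{eq.59} on each summand together with the defining relation~\eqref{eq.61n}, $J_\Phi\phi_n=\phi_n$, one obtains
\[
J_\Phi\chi=\sum_n\overline{\braket{\phi_n}{\chi}}\,J_\Phi\phi_n=\sum_n\overline{\braket{\phi_n}{\chi}}\,\phi_n,
\]
which is precisely~\eqref{eq.61}. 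The only point requiring care is the interchange of $J_\Phi$ with the infinite sum: I would argue on finite partial sums, where conjugate-linearity is immediate, and then pass to the limit by continuity of $J_\Phi$, noting that the image series converges because its coefficients $\overline{\braket{\phi_n}{\chi}}$ have absolute values $\abs{\braket{\phi_n}{\chi}}$ tending to $0$ (conjugation on $\Qe$ preserving $\abs{\,\cdot\,}$).

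With formula~\eqref{eq.61} available, the two properties claimed in part~\ref{prop.2.14_i} are straightforward. For anti-isometry I would read off the coordinates $\braket{\phi_n}{J_\Phi\chi}=\overline{\braket{\phi_n}{\chi}}$ and invoke the non-Archimedean Parseval identity~\eqref{eq.11n},
\[
\No{J_\Phi\chi}=\max_n\abs{\braket{\phi_n}{J_\Phi\chi}}=\max_n\abs{\overline{\braket{\phi_n}{\chi}}}=\max_n\abs{\braket{\phi_n}{\chi}}=\No{\chi},
\]
the crucial middle step being again the invariance of $\abs{\,\cdot\,}$ under conjugation; since $\dom(J_\Phi)=\Hi$, this makes $J_\Phi$ anti-isometric. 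For the IP-conjugating property I would expand $\phi=\sum_n a_n\phi_n$ and $\psi=\sum_m b_m\phi_m$ with $a_n=\braket{\phi_n}{\phi}$ and $b_m=\braket{\phi_m}{\psi}$, and use orthonormality together with sesquilinearity to get $\braket{J_\Phi\phi}{J_\Phi\psi}=\sum_n a_n\overline{b_n}$, which equals $\overline{\braket{\phi}{\psi}}=\overline{\sum_n\overline{a_n}b_n}$; separate continuity of $\inprd$ in each slot legitimizes evaluating the double series.

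For part~\ref{prop.2.14_ii}, involutivity falls out by applying~\eqref{eq.61} twice: the $n$-th coordinate of $J_\Phi\chi$ is $\overline{\braket{\phi_n}{\chi}}$, so that of $J_\Phi^2\chi$ is its conjugate $\braket{\phi_n}{\chi}$, whence $J_\Phi^2\chi=\sum_n\braket{\phi_n}{\chi}\phi_n=\chi$ by~\eqref{eq.7}, i.e.\ $J_\Phi^2=\Id$. Bijectivity and $J_\Phi^{-1}=J_\Phi$ follow immediately, while boundedness is already secured by the anti-isometry just established (indeed $\No{J_\Phi}=1$). For adjointability and $J_\Phi=J_\Phi^\ast$, I would verify the defining symmetry of the adjoint appropriate to anti-linear maps, namely $\braket{J_\Phi\phi}{\psi}=\braket{J_\Phi\psi}{\phi}$ for all $\phi,\psi\in\Hi$: expanding as above, both sides evaluate to $\sum_n a_n b_n$, so $J_\Phi$ coincides with its own adjoint and, being all-over, is adjointable.

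I expect the main difficulties to be bookkeeping rather than conceptual: rigorously justifying the termwise passage of $J_\Phi$ and of $\inprd$ through the relevant series (continuity plus unconditional convergence), and making explicit the two arithmetic facts about the quadratic extension $\Qe$ on which the argument silently rests, namely that conjugation $\alpha\mapsto\overline{\alpha}$ is multiplicative and preserves the absolute value. The one genuinely delicate conceptual point is that $J_\Phi$ is conjugate-linear, so that \emph{adjointable} and $J_\Phi^\ast$ must be understood with respect to the adjoint suited to anti-linear operators (fixed by the relation $\braket{J_\Phi\phi}{\psi}=\braket{J_\Phi\psi}{\phi}$); once this convention is pinned down, self-adjointness reduces to the short computation above.
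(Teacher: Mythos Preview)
Your proposal is correct and follows essentially the same route as the paper: derive~\eqref{eq.61} from continuity plus conjugate-linearity, then read off anti-isometry via Parseval, IP-conjugation via the basis expansion, and involutivity by applying~\eqref{eq.61} twice. The only visible difference is in the adjointability step: the paper works on basis vectors, computing $\braket{\phi_n}{J_\Phi^\ast\phi_m}=\delta_{nm}$, whereas you verify the symmetry $\braket{J_\Phi\phi}{\psi}=\braket{J_\Phi\psi}{\phi}$ on arbitrary vectors directly; both amount to the same calculation, and your explicit flagging of the anti-linear adjoint convention is a point the paper leaves implicit.
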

\begin{proof}
 Relation~\eqref{eq.61} simply follows by anti-linearity from Definition~\ref{def.2.13}. Then,  we have:
\begin{equation}
\braket{J_\Phi\chi}{J_\Phi\eta}=\Big\langle\sum_n\overline{\braket{\phi_n}{\chi}}\phi_n,\,\sum_m\overline{\braket{\phi_m}{\eta}}\phi_m\Big\rangle=\sum_n\braket{\phi_n}{\chi}\overline{\braket{\phi_m}{\eta}}=\overline{\braket{\chi}{\eta}},
\end{equation}
i.e., $J_\Phi$ is IP-conjugating. Moreover,
by observing that
\begin{equation}
\No{J_\Phi\chi}=\No{\mbox{$\sum_n\overline{\braket{\phi_n}{\chi}}\phi_n$}}=\sup_n|\overline{\braket{\phi_n}{\chi}}|=\sup_n|\braket{\phi_n}{\chi}|=\No{\chi},
\end{equation}
and since, from~\eqref{eq.61}, $\dom(J_\Phi)=\Hi$, we see that $J_\Phi$ is an anti-isometric operator, i.e.,~\ref{prop.2.14_i} holds. Let us now prove
\ref{prop.2.14_ii}. Exploiting~\eqref{eq.61} twice, we have:
\begin{equation}
J_\Phi^2\chi=J_\Phi(J_\Phi\chi)=J_\Phi\bigg(\sum_n\overline{\braket{\phi_n}{\chi}}\phi_n\bigg)=\sum_n\braket{\phi_n}{\chi}\phi_n,
\end{equation}
 i.e., $J_\Phi^2=\Id$. Now, it is clear that $J_\Phi$ is surjective and, by point~\ref{prop.2.14_i}, an anti-isometric operator. Therefore, $J_\Phi$ is bijective, bounded, and $J_\Phi^{-1}=J_\Phi$. Moreover, by observing that
  \begin{equation}
  \braket{\phi_n}{J_\Phi^\ast\phi_m}=\overline{\braket{J_\Phi^{-1}\phi_n}{J_\Phi^\ast J^{-1}_\Phi\phi_m}}=\overline{\braket{J_{\Phi}J_\Phi^{-1}\phi_n}{J_\Phi^{-1}\phi_m}}=\overline{\braket{\phi_n}{\phi_m}}=\delta_{nm}, 
  \end{equation}
  we also see that $J_\Phi^\ast\phi_m=\phi_m$, for all $m\in\N$, i.e., $J_\Phi^{\ast}=J_\Phi$.
\end{proof}
As pointed out in Remark~\ref{rem.2.17n}, composing a linear operator $L$ in $\Hi$ with the conjugation operator $J_\Phi$ --- associated with the orthonormal basis $\Phi\equiv\{\phi_m\}_{m\in\N}\subset \Hi$ --- yields a conjugate-linear operator in $\Hi$;  conversely, every conjugate-linear operator in $\Hi$ can be expressed in this form. It is precisely this observation which motivates our next
\begin{definition}
Let $\Hi$ be a $p$-adic Hilbert space, and let $\Phi\equiv\{\phi_m\}_{m\in\N}$ be an orthonormal basis. We say that   $Z\colon\Hi\rightarrow \Hi$ is an \emph{anti-unitary operator}, if it can be expressed in the form
\begin{equation}
Z=J_\Phi U,\quad U\in\U,
\end{equation}
i.e., as the composition of the conjugation operator $J_\Phi$ --- associated with the orthonormal basis $\Phi\equiv\{\phi_m\}_{m\in\N}$ in $\Hi$ --- and some unitary operator $U\in\U$. We will denote the set of all the anti-unitary operators in $\Hi$ by $\AU$.
\end{definition}
Next result provides the conjugate-linear counterpart of Theorem~5.13 in~\cite{aniello2023trace}.

\begin{theorem}\label{th.2.17}
Given a conjugate-linear operator $Z$ in $\Hi$, the following facts are equivalent:
\begin{enumerate}[label=\tt{(Z\arabic*)}]
\item \label{th.2.17_w1} $Z$ is an anti-unitary operator, i.e., $Z=J_\Psi U$, for $J_\Psi$ the conjugation operator associated with the orthonormal basis $\Psi\equiv\{\psi_n\}_{n\in\N}\subset\Hi$, and $U$ some unitary operator in $\U$;

\item \label{th.2.17_w2} $Z$ is bounded and, for some pair of orthonormal bases $\Phi\equiv\{\phi_m\}_{m\in\N}$ and $\Psi\equiv\{\psi_n\}_{n\in\N}$ in $\Hi$, we have that $Z\phi_k=\psi_k$, $\forall k\in \N$;

\item \label{th.2.17_w3} $Z$ is bounded and adjointable,
 with $ZZ^\ast=Z^\ast Z=\Id$,  and $\No{Z}=1$;

\item \label{th.2.17_w4} $Z$ is a surjective, IP-conjugating, all-over operator, and $\No{Z}=1$;

\item \label{th.2.17_w5} $Z$ is an IP-conjugating, conjugate-linear topological isomorphism, and $\No{Z}=1=\No{Z^{-1}}$;

\item \label{th.2.17_w6} $Z$ is an anti-automorphism of the $p$-adic Hilbert space $\Hi$, i.e., an IP-conjugating surjective anti-isometry;

\item \label{th.2.17_w7} $Z$ is a surjective, IP-conjugating, NO-preserving, all-over operator;

\item \label{th.2.17_w8} $Z$ is bounded and transforms orthonormal bases into orthonormal bases. 

\end{enumerate}
\end{theorem}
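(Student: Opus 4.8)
The plan is to reduce the whole eightfold equivalence to the already-available characterization of \emph{unitary} operators (Theorem~$5.13$ in~\cite{aniello2023trace}, invoked in Remark~\ref{rem.4.4}) by exploiting the factorization supplied by Remark~\ref{rem.2.17n}. Fix once and for all an orthonormal basis $\Phi\equiv\{\phi_m\}_{m\in\N}$ together with its conjugation operator $J_\Phi$, and, given the conjugate-linear operator $Z$, set $L\coloneqq J_\Phi Z$. Since the composition of two conjugate-linear maps is linear, $L$ is a genuine linear operator, and, because $J_\Phi^2=\Id$ by Proposition~\ref{prop.2.14}\ref{prop.2.14_ii}, we recover $Z=J_\Phi L$. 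The entire strategy rests on transporting each property (Z1)--(Z8) across this factorization, using that $J_\Phi$ is, by Proposition~\ref{prop.2.14}, an involutive, bijective, bounded, IP-conjugating anti-isometry with $J_\Phi^\ast=J_\Phi$ and $\No{J_\Phi}=1$.

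First I would record the dictionary between the hypotheses on $Z$ and those on $L$. Because $J_\Phi$ is anti-isometric and all-over, $\No{Z\phi}=\No{L\phi}$ for every $\phi$, so $Z$ is all-over with $\No{Z}=1$ iff $L$ is; because $J_\Phi$ is bijective, $Z$ is surjective (respectively a topological isomorphism) iff $L$ is, with $\No{Z^{-1}}=\No{L^{-1}}$; because $J_\Phi$ is IP-conjugating, $\braket{Z\phi}{Z\psi}=\overline{\braket{L\phi}{L\psi}}$, so $Z$ is IP-conjugating iff $L$ is IP-preserving; and, since $J_\Phi$ is a bijective anti-isometry, it preserves norm-orthogonality in both directions, whence $Z$ is NO-preserving iff $L$ is. Finally, as $J_\Phi$ carries orthonormal bases bijectively onto orthonormal bases, $Z$ maps a basis $\{\phi_k\}$ to an orthonormal basis $\{\psi_k\}$ iff $L$ maps $\{\phi_k\}$ to the orthonormal basis $\{J_\Phi\psi_k\}$; this disposes of the basis-type conditions (Z2) and (Z8). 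Under this dictionary each of (Z2),\,(Z4),\,(Z5),\,(Z6),\,(Z7),\,(Z8) becomes literally the unitary analogue for $L$, and these analogues are mutually equivalent --- and equivalent to ``$L\in\U$'' --- by Theorem~$5.13$ in~\cite{aniello2023trace}.

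It then remains to glue in the two conditions that are not purely metric, namely (Z1) and (Z3). For (Z1), note that $Z=J_\Phi L$ is already of the required form as soon as $L\in\U$; conversely, if $Z=J_{\Phi'}U$ for some other basis $\Phi'$ and some $U\in\U$, then $L=J_\Phi Z=(J_\Phi J_{\Phi'})U$, and the product $J_\Phi J_{\Phi'}$ of two conjugation operators is IP-preserving (a double conjugation cancels the bars), surjective, and isometric, hence unitary, so $L\in\U$ again. Thus (Z1) holds iff $L$ is unitary, independently of the auxiliary basis $\Phi'$. For (Z3), I would use that the conjugate-linear adjoint satisfies $Z^\ast=L^\ast J_\Phi$, which follows from $J_\Phi^\ast=J_\Phi$; then $Z^\ast Z=L^\ast L$ and $ZZ^\ast=J_\Phi LL^\ast J_\Phi$, so that $ZZ^\ast=Z^\ast Z=\Id$ together with $\No{Z}=1$ is equivalent to $L^\ast L=LL^\ast=\Id$ and $\No{L}=1$, i.e.\ to $L$ being unitary in the adjointable sense. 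Assembling all these equivalences, every (Z$i$) collapses to the single statement ``$L=J_\Phi Z\in\U$'', which closes the theorem.

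The step I expect to be the genuine obstacle is (Z3): it requires the correct bookkeeping for the adjoint of a \emph{conjugate-linear} operator and a careful check that adjointability of $Z$ is equivalent to adjointability of $L$ with $Z^\ast=L^\ast J_\Phi$. The only other subtle point --- ensuring that the reduction does not secretly depend on the auxiliary basis $\Phi'$ entering (Z1) --- is resolved cleanly by the observation that a product of two conjugation operators is unitary; everything else is the routine transport of properties across the bijective anti-isometry $J_\Phi$.
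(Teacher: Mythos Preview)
Your proposal is correct and takes a genuinely different route from the paper's own proof. The paper establishes the equivalence by threading a chain of implications --- (Z1)\,$\Leftrightarrow$\,(Z2)\,$\Rightarrow$\,(Z3)\,$\Rightarrow$\,(Z4)\,$\Rightarrow$\,(Z5)\,$\Rightarrow$\,(Z6)\,$\Leftrightarrow$\,(Z7), then (Z6)\,$\Rightarrow$\,(Z8)\,$\Rightarrow$\,(Z2) --- computing each step explicitly on basis vectors and invoking, as needed, the separate auxiliary results Theorem~5.8, Theorem~5.11, Lemma~5.2 and Theorem~5.13 of~\cite{aniello2023trace}. You instead fix a single factorization $Z=J_\Phi L$ once and for all and transport every condition across $J_\Phi$ simultaneously, so that all eight statements collapse to the single assertion $L\in\U$. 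This is more economical and more conceptual: it makes transparent that the anti-unitary theory is literally the unitary theory twisted by a conjugation, whereas the paper's argument re-derives each implication by hand. The price you pay is a reliance on Theorem~5.13 of~\cite{aniello2023trace} containing the unitary analogue of \emph{each} of (Z2), (Z4)--(Z8); the paper's proof shows that in practice one also needs Theorem~5.8, Theorem~5.11 and Lemma~5.2 to cover all of them, so you should be prepared to cite those rather than Theorem~5.13 alone. Your identification of (Z3) as the delicate point is apt: the identity $Z^\ast=L^\ast J_\Phi$ does hold under the natural convention $\braket{Z\phi}{\psi}=\braket{Z^\ast\psi}{\phi}$ for conjugate-linear adjoints (which is the one the paper implicitly uses in Proposition~\ref{prop.2.14}), and from it $Z^\ast Z=L^\ast L$ and $ZZ^\ast=J_\Phi LL^\ast J_\Phi$ follow exactly as you outline; but since the paper never spells out this convention, you should state it explicitly before using it.
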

\begin{proof}
Let us first prove~\ref{th.2.17_w1} $\iff$ \ref{th.2.17_w2} $\implies$ \ref{th.2.17_w3}. Indeed, let $Z\in\AU$ and let $\Phi\equiv\{\phi_m\}_{m\in\N}$ and $\Psi\equiv\{\psi_n\}_{n\in\N}$ be two orthonormal bases in $\Hi$. Let $J_\Psi$ be the conjugation operator associated with $\Psi$ --- i.e., $J_\Psi\psi_n=\psi_n$, $\forall n\in\N$. Since $Z$ is anti-unitary, there exists a unitary operator $U$ in $\U$ such that $Z=J_\Psi U$. In particular, we can choose $U$ in such a way that
\begin{equation}
Z\phi_k=J_\Psi U\phi_k=J_\Psi\psi_k=\psi_k,\quad \forall k\in\N,
\end{equation}
i.e., as the unitary operator relating the orthonormal bases $\Phi$ and $\Psi$ in $\Hi$
(cf.\ property {\tt (U$2$)} of Theorem~$5.13$ in~\cite{aniello2023trace}). Moreover, it is clear that $Z=J_\Psi U$ is bounded; hence,~\ref{th.2.17_w1} $\implies$~\ref{th.2.17_w2}. Conversely, let $Z$ be as in~\ref{th.2.17_w2}. Since it is conjugate-linear, we can express $Z$ as $Z=J_\Psi L$, where $J_\Psi$ is the conjugation operator associated with the orthonormal basis $\Psi\equiv\{\psi_n\}_{n\in\N}$,  and $L$ is some linear operator in $\Hi$ (see Remark~\ref{rem.2.17n}). Now, since, by assumption, $Z$ is a bounded operator, $L$ is bounded too. Moreover, by observing that
\begin{equation}
J_\Psi\psi_k=\psi_k=Z\phi_k=J_\Psi L\phi_k,\quad\forall k\in\N, 
\end{equation}
we see that $L\phi_k=\psi_k$, for all $k\in\N$. Hence, owing again to~{\tt (U$2$)} of Theorem~$5.13$ in~\cite{aniello2023trace}, $L$ is unitary and, therefore, $Z$ is an anti-unitary operator. We next prove~\ref{th.2.17_w2} $\implies$~\ref{th.2.17_w3}. We have already shown that an anti-unitary operator is completely characterized by condition~\ref{th.2.17_w2}. Now, we have that
\begin{equation}\label{eq.73n}
    \braket{\phi_m}{Z^\ast\psi_k}=\braket{Z\phi_m}{\psi_k}=\braket{J_\Psi U\phi_m}{\psi_k}=\braket{J_\Psi\psi_m}{\psi_k}=\delta_{mk},
\end{equation}
where we have expressed $Z$ as $Z=J_\Psi U$, and used the fact that we can choose $U$ as the unitary operator which maps the orthonormal basis $\Phi=\{\phi_m\}_{m\in\N}$ into $\Psi\equiv\{\psi_n\}_{n\in\N}$. Therefore, from~\eqref{eq.73n}, it follows that $Z$ is adjointable, and $Z^\ast\psi_k=\phi_k$, $\forall k\in\N$, i.e., $Z^\ast$ is anti-unitary. This also entails that
\begin{equation}\label{eq.74n}
Z^\ast Z=\Id=Z Z^\ast.
\end{equation}
Moreover, since $Z=J_\Psi U$, and $J_\Psi$ is an anti-isometric operator, we observe that  
\begin{equation}\label{eq.76}
\No{Z}=\No{J_\Psi U}=\sup_{\chi\neq 0}\frac{\No{J_\Psi U\chi}}{\No{\chi}}=\sup_{\chi\neq 0}\frac{\No{U\chi}}{\No{\chi}}=1,
\end{equation}
and, hence, 
\begin{equation}
\No{Z}=\No{Z^\ast}=1.
\end{equation}
The implication~\ref{th.2.17_w3} $\implies$~\ref{th.2.17_w4} is clear since, if $Z$ is bounded, adjointable, and satisfies condition~\eqref{eq.74n}, then $Z$ is surjective and all-over. Moreover, if $\Psi\equiv\{\psi_n\}_{n\in\N}$ is an orthonormal basis in $\Hi$, by observing that
\begin{align}
\braket{Z\chi}{Z\eta}&=\Big\langle Z\sum_n\braket{\psi_n}{\chi}\psi_n,\sum_m\braket{\psi_m}{\eta}\psi_m\Big\rangle\nonumber\\
&=\Big\langle\sum_n\overline{\braket{\chi}{\psi_n}}Z\psi_n, \sum_m\overline{\braket{\psi_m}{\eta}}Z\psi_m\Big\rangle\nonumber\\
&=\sum_n\sum_m\braket{\psi_n}{\chi}\overline{\braket{\psi_m}{\eta}}\braket{Z\psi_n}{Z\psi_m}\nonumber\\
&=\sum_n\sum_m\braket{\psi_n}{\chi}\overline{\braket{\psi_m}{\eta}}\braket{Z^\ast Z\psi_n}{\psi_m}\nonumber\\
&=\sum_n\braket{\psi_n}{\chi}\overline{\braket{\psi_n}{\eta}}=\overline{\braket{\chi}{\eta}},
\end{align}
for all $\chi,\eta$ in $\Hi$, we see that $Z$ is IP-conjugating too and, hence,~\ref{th.2.17_w4} is verified.
 Next, let us prove~\ref{th.2.17_w4} $\implies$~\ref{th.2.17_w5}. Expressing $Z$ as $Z=J_\Psi L$ --- for a suitable linear operator $L$ in $\Hi$, and with $J_\Psi$ the conjugation operator associated with the orthonormal basis $\Psi\equiv\{\psi_n\}_{n\in\N}$ --- since, by assumption, $Z$ is surjective and IP-conjugating, we see that $L$ must be surjective and IP-preserving. Then, Theorem~$5.8$ in~\cite{aniello2023trace} allows us to infer that $L$ is an adjointable top-linear isomorphism such that $L^\ast=L^{-1}$ and $\No{L}=\No{L^\ast}=\No{L^{-1}}$. Hence, we have:
\begin{equation}\label{eq.78}
L^\ast J_\Psi=L^{-1} J_\Psi\;\;\implies\;\;(J_\Psi L)^{\ast}=(J_\Psi L)^{-1},
\end{equation}
i.e., $Z^\ast=Z^{-1}$. (Note that, in the second equality in~\eqref{eq.78}, we have used property~\ref{prop.2.14_ii} of Proposition~\ref{prop.2.14}). Hence, $Z$ is an adjointable conjugate-linear topological isomorphism and
\begin{equation}
\No{Z}=\No{Z^\ast}=\No{Z^{-1}}=1.
\end{equation}
Now, if $Z$ satisfies~\ref{th.2.17_w5}, expressing again $Z$ as $Z=J_\Psi L$, for some linear operator $L$ on $\Hi$, we see that, by hypothesis, $L$ must be an IP-preserving top-linear isomorphism such that $\No{L}=\No{L^{-1}}$. Then, by Lemma~$5.2$ in~\cite{aniello2023trace}, $L$ is an IP-reserving surjective isometry. Therefore, $Z=J_\Psi L$ is an IP-conjugating surjective anti-isometry, i.e.,~\ref{th.2.17_w5} $\implies$~\ref{th.2.17_w6}. Let us now show~\ref{th.2.17_w6} $\iff$~\ref{th.2.17_w7}. Indeed, if $Z=J_\Psi L$ is anti-isometric, then $L$ is an isometry, and, by Theorem~$5.11$ in~\cite{aniello2023trace}, a NO-preserving all-over operator. Recalling that $J_\Psi$ is an anti-isometry (see~\ref{prop.2.14_i} in Proposition~\ref{prop.2.14}), we see that $Z$ is an IP-conjugating, surjective, NO-preserving, all-over operator, i.e.,~\ref{th.2.17_w7} holds. Conversely, assume that $Z=J_\Psi L$ is IP-conjugating, surjective and all-over. Then, $L$ is a surjective IP-preserving operator and, hence, $\No{L}=\No{L^{-1}}$ (cf.\ Theorem~$5.8$ in~\cite{aniello2023trace}). From this, it follows that
\begin{equation}\label{eq.cond.norm}
\No{Z}=\No{J_\Psi L}=\No{L}=\No{L^{-1}}=\No{L^{-1}J_\Psi^{-1}}=\No{Z^{-1}}.
\end{equation}
Next, let us further assume that $Z$ is NO-preserving. Since $J_\Psi$ is an anti-isometry, it must be true that $L$ is an all-over and NO-preserving linear operator; then, by Theorem~$5.11$ in~\cite{aniello2023trace}, $L$ must be a non-zero scalar multiple of an isometry, i.e., $L=zI$, where $z\in\Qe\setminus\{0\}$, and $I\colon\Hi\rightarrow\Hi$ is an isometry in $\Hi$. We thus have that $Z=J_\Psi zI=\overline{z} J_\Psi I$. From this, and owing to~\eqref{eq.cond.norm}, we see that 
\begin{equation}
    \abs{z}=\abs{\overline{z}}=\No{J_\Psi Iz}=\No{Z}=\No{Z^{-1}}=\No{\overline{z}^{-1}I^{-1}J_\Psi^{-1}}=\abs{\overline{z}}^{-1}=\abs{z}^{-1},
\end{equation}
i.e., $\abs{z}=1$ and $Z=\overline{z}J_\Psi I$ is a surjective IP-conjugating anti-isometry, i.e.,~\ref{th.2.17_w6} holds. To complete the proof, it is left to show that~\ref{th.2.17_w6} $\implies$~\ref{th.2.17_w8} and~\ref{th.2.17_w8} $\implies$~\ref{th.2.17_w2}. That~\ref{th.2.17_w8} $\implies$~\ref{th.2.17_w2} is clear. Hence, let us prove~\ref{th.2.17_w6} $\implies$~\ref{th.2.17_w8}. Assume that $Z$ satisfies conditions in~\ref{th.2.17_w6}, i.e., $Z$ is an IP-conjugating surjective anti-isometry. Now, given any orthonormal basis $\Phi\equiv\{\phi_m\}_{m\in\N}$ in $\Hi$, it is readily seen that the set $\{Z\phi_n\}_{n\in\N}$ provides another orthonormal basis in $\Hi$. Indeed, since $W$ is IP-conjugating, we have:
\begin{equation}
    \braket{Z\phi_m}{Z\phi_n}=\overline{\braket{\phi_n}{\phi_m}}=\delta_{mn},
\end{equation}
i.e., $\{Z\phi_n\}_{n\in\N}$ is an IP-orthonormal set;
moreover, for every finite or denumerable set $\{\alpha_n\}_{n\in\N}\subset\Qe$, and every $\eta$ in $\Hi$, we have:
\begin{equation}
    \No{\mbox{$\sum_n z_n Z\phi_m$}}=\No{\mbox{$Z\sum_n \overline{z}_n\phi_m$}}=\max_n \abs{\overline{z}_n}=\max_n\abs{z_n},
\end{equation}
namely, $\{Z\phi_n\}_{n\in\N}$ is a normal set. 
Eventually by observing that 
\begin{align}
\chi=Z(Z^{-1}\chi)&=Z\sum_n \braket{\phi_n}{Z^{-1}\chi}\phi_n=\sum_n\overline{\braket{\phi_n}{Z^{-1}\chi}}Z\phi_n=\nonumber\\
&=\sum_n\braket{Z\phi_n}{Z^{-1}Z\chi}Z\phi_n=\sum_n\braket{Z\phi_n}{\chi}Z\phi_n,
\end{align}
we see that $\{Z\phi_n\}_{n\in\N}$ is dense in $\Hi$ and, hence, an orthonormal basis. 
This then proves that~\ref{th.2.17_w6} $\implies$~\ref{th.2.17_w8} and, therefore, concludes the proof.
\end{proof}
As is well known, in a complex Hilbert space $\mathscr{K}$, every involutive, anti-unitary operator acts as a conjugation operator w.r.t.\ some orthonormal basis in $\mathscr{K}$~\cite{barry2015operator}. Specifically, if $Z$ is anti-unitary and satisfies $Z^2=\Id$, then an orthonormal basis $\mathscr{Y}\equiv\{y_k\}_{k\in\N}$ in $\mathscr{K}$ exists such that $Zy_k=y_k$, for all $k\in\N$ --- i.e., $Z\equiv J_\mathscr{Y}$ is the conjugation operator associated with $\mathscr{Y}\equiv\{y_k\}_{k\in\N}$ (see also Remark~\ref{rem.2.23}). In the next example, we show that this picture does not extend to the $p$-adic framework, namely, that given an involutive anti-unitary operator $Z$ in a $p$-adic Hilbert space $\Hi$, one cannot always come up with an orthonormal basis $\Psi\equiv\{\psi_m\}_{m\in\N}$ in $\Hi$ preserved by $Z$.
\begin{example}\label{exa.2.24}
Let $\Hi$ be a $p$-adic Hilbert space, and let $\Phi\equiv\{\phi_m\}_{m\in\N}$ be an orthonormal basis. Let $Z\in\AU$ be such that $Z^2=\Id$ (i.e., $Z$ is an involutive anti-unitary operator in $\Hi$). Define the set of vectors $\Psi\equiv\{\psi_m\}_{m\in\N}$ in $\Hi$ by setting
\begin{equation}\label{eq.86n}
  \psi_{2m}=z_1\big(Z\phi_m+\phi_m\big),\quad \psi_{2m-1}=\frac{z_2}{\sqrt{\mu}}\big(Z\phi_m-\phi_m\big),
\end{equation}
where $z_1,z_2$ are suitably chosen coefficients in $\Q_p$. Note that, since $Z$ is anti-unitary, $\{Z\phi_m\}_{m\in\N}$ is an orthonormal set in $\Hi$ (see~\ref{th.2.17_w8} in Theorem~\ref{th.2.17}). By construction, $\Psi\equiv\{\psi_m\}_{m\in\N}$ is a $Z$-invariant dense subset of $\Hi$ --- just note that $\phi_m=\frac{1}{2}z_1\psi_{2m}-\frac{\sqrt{\mu}}{2}z_2\psi_{2m-1}$. Moreover, by suitably choosing the coefficients $z_1,z_2$ in $\Q_p$, we can turn $\Psi$ into a normal subset of $\Hi$.
Next, let us see the constraints that $z_1,z_2$ have to satisfy for $\Psi\equiv\{\psi_m\}_{m\in\N}$ to be orthonormal too. Clearly, $\braket{\psi_n}{\psi_m}=0$, for all $n\neq m\in\N$, i.e., $\Psi$ is IP-orthogonal; requiring further the IP-orthonormality we get the following conditions:
\begin{equation}\label{eq.97n}
1=\braket{\psi_m}{\psi_m}=\braket{z_1(Z\phi_m+\phi_m)}{z_1(Z\phi_m+\phi_m)}=2z_1^2,
\end{equation}
for $m$ even, while, for $m$ odd
\begin{equation}\label{eq.98n}
    1=\braket{\psi_m}{\psi_m}=\bigg\langle\frac{z_2}{\sqrt{\mu}}\big(Z\phi_m-\phi_m\big),\frac{z_2}{\sqrt{\mu}}\big(Z\phi_m-\phi_m\big)\bigg\rangle=-\frac{2z_2^2}{\mu}.
\end{equation}
Therefore,~\eqref{eq.97n} and~\eqref{eq.98n} entail that
\begin{equation}\label{eq.92n}
z_1=\sqrt{\frac{1}{2}},\quad\mbox{and},\quad z_2=\sqrt{-\frac{\mu}{2}},
\end{equation}
i.e., $z_1$, $z_2$ do not belong to $\Q_p$. On the other hand, if, in the defining condition of the set $\Psi\equiv\{\psi_m\}_{m\in\N}$ in~\eqref{eq.86n} above, we assume that $z_1,z_2$ are in $\Qe$ --- i.e., $z_1$ and $z_2$ belong to a suitable quadratic extension of $\Q_p$ --- setting $z_1$ and $z_2$ as in~\eqref{eq.92n}, we can turn $\Psi\equiv\{\psi_m\}_{m\in\N}$ into an orthonormal subset of $\Hi$; \emph{but}, in this case, it will not be an invariant set under the action of $Z$.  Concluding, we have the following dichotomy: Either $\Psi\equiv\{\psi_m\}_{m\in\N}$ is a $Z$-invariant, normal, IP-orthogonal dense subset of $\Hi$ --- in which case it is not IP-orthonormal --- or $\Psi\equiv\{\psi_m\}_{m\in\N}$ is an orthonormal dense subset, but not $Z$-invariant.
\end{example}

In example~\ref{exa.2.24}, we constructed an invariant dense subset of $\Hi$, which cannot be turned into an orthonormal set. Actually, this example is a manifestation of a peculiar feature of the conjugation operators in the $p$-adic setting, as will be clarified in the forthcoming Proposition~\ref{prop.2.22}. Before stating this result, let us first note the following. Consider the canonical Hilbert space $\CH$ endowed with the (so-called) canonical basis $\bm{e}\equiv\{e_i\}_{i\in\N}$ (see Example~\ref{exa.2.4}). Define in $\CH$ the  transformation $J_0$ by setting 
\begin{equation}
J_0\coloneqq\isoc^{-1}\con\isoc,
\end{equation}
where $\isoc\colon \CH\rightarrow c_0(\N,\Qe)$ is the natural embedding of $\CH$ into $c_0(\N,\Qe)$ given by
\begin{equation}
\CH\ni\xi=\mbox{$\sum_i \xi_ie_i$}\mapsto\isoc(\xi)\coloneqq\{\xi_i\}_{i\in\N}\in c_0(\N,\Qe),
\end{equation}
while $\con\colon c_0(\N,\Qe)\ni\{\xi_i\}_{i\in\N}\mapsto\{\overline{\xi}_i\}_{i\in\N}\in c_0(\N,\Qe)$ is the conjugate-linear isometry of $c_0(\N,\Qe)$ into itself. It is clear that $J_0$ is anti-unitary and involutive. Moreover, by observing that, for every $\xi=\sum_i\xi_ie_i$ in $\CH$,  
\begin{equation}
J_0\xi=\isoc^{-1}\con\isoc\Big(\mbox{$\sum_i \xi_ie_i$}\Big)=\isoc^{-1}(\{\overline{\xi}_i\}_{i\in\N})=\mbox{$\sum_i\overline{\xi}_ie_i$},
\end{equation}
we can argue that $J_0$ acts as the conjugation operator associated with the canonical basis $\{e_i\}_{i\in\N}$ of $\CH$ (in the following, we will refer to $J_0$ as to the \emph{canonical conjugation} in $\CH$). Let now $\Hi$ be a $p$-adic Hilbert space, and let $\Psi\equiv\{\psi_n\}_{n\in\N}$ be an orthonormal basis. Let $\ihs_\Psi$ be the canonical isomorphism of $\Hi$ into $\CH$ (see~\eqref{eq.12}). It is clear that the (conjugate-linear) operator defined as 
\begin{equation}
J_\Psi\coloneqq\ihs_\Psi^{-1} J_0 \ihs_\Psi,
\end{equation}
provides the conjugation operator associated with the orthonormal basis $\Psi\equiv\{\psi_n\}_{n\in\N}$ in $\Hi$.
Indeed, $J_\Psi$ is anti-unitary and involutive; moreover, for every $\chi=\sum_n\braket{\psi_n}{\chi}\psi_n$ in $\Hi$, we have:
\begin{align}
J_\Psi\chi&=J_\Psi\bigg(\sum_n\braket{\psi_n}{\chi}\psi_n\bigg)=\ihs_\Psi^{-1}J_0\ihs_\Psi\bigg(\sum_n\braket{\psi_n}{\chi}\psi_n\bigg)\nonumber\\
&=\ihs_\Psi^{-1}J_0\bigg(\sum_{n}\braket{\psi_n}{\chi}e_n\bigg)=\ihs_\Psi^{-1}\bigg(\sum_n\overline{\braket{\psi_n}{\chi}}J_0e_n\bigg)\nonumber\\
&=\ihs_\Psi^{-1}\bigg(\sum_n\overline{\braket{\psi_n}{\chi}}e_n\bigg)=\sum_n\overline{\braket{\psi_n}{\chi}}\psi_n,
\end{align}
i.e.,  $J_\Psi\psi_n=\ihs_\Psi^{-1}J_0\ihs_\Psi\psi_n=\psi_n$. We have thus proved that, given any orthonormal basis $\Psi\equiv\{\psi_n\}_{n\in\N}$ in $\Hi$, it is always possible to construct a conjugation operator associated with it; moreover, the so-constructed conjugation operator is equivalent, up to isomorphisms, with  the conjugation $J_0$ in the canonical Hilbert space $\CH$. A natural question to ask is whether the reverse implication holds, namely if, given an involutive anti-unitary operator $Z$ in $\Hi$, there always exists an Hilbert space isomorphism $\mathscr{I}$ such that $\mathscr{I}^{-1}Z\mathscr{I}$ reduces to the canonical conjugation operator $J_0$. The next result shows that, in fact, it is not so. 
\begin{proposition}\label{prop.2.22}
Let $Z$ be an involutive anti-unitary operator in the $p$-adic Hilbert space $\Hi$ --- i.e., $Z\in\AU$, and $Z^2=\Id$. Then, in general, there exists no isomorphism $\mathscr{I}\colon\CH\rightarrow \Hi$, such that $\mathscr{I}^{-1}Z\mathscr{I}=J_0$, where $J_0$ is the conjugation operator associated with the canonical basis $\bm{e}\equiv\{e_i\}_{i\in\N}$ in the coordinate Hilbert space $\CH$.
\end{proposition}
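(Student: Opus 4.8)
The plan is to reduce the statement to a question about orthonormal bases and then to defeat it with an arithmetic counterexample. The decisive first observation is that an isomorphism $\mathscr{I}\colon\CH\to\Hi$ with $\mathscr{I}^{-1}Z\mathscr{I}=J_0$ is essentially the same datum as an orthonormal basis of $\Hi$ consisting of $Z$-fixed vectors. Indeed, if such an $\mathscr{I}$ exists, set $\psi_i\coloneqq\mathscr{I} e_i$. Since $\mathscr{I}$ is an inner-product-preserving surjective isometry it carries the orthonormal basis $\bm{e}\equiv\{e_i\}_{i\in\N}$ onto an orthonormal basis $\{\psi_i\}_{i\in\N}$ of $\Hi$; and since $Z=\mathscr{I} J_0\mathscr{I}^{-1}$ while $J_0 e_i=e_i$, we get $Z\psi_i=\mathscr{I} J_0 e_i=\mathscr{I} e_i=\psi_i$ for every $i$. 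Hence the existence of $\mathscr{I}$ forces the existence of an orthonormal basis of $\Hi$ made of $Z$-fixed vectors, and to prove the proposition it suffices to exhibit one involutive anti-unitary $Z$ admitting \emph{no} such basis.

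Next I would isolate the obstruction. If $Z^2=\Id$ and $Z$ is IP-conjugating, the fixed set $\Hi^Z\coloneqq\{\chi\in\Hi\mid Z\chi=\chi\}$ is a $\Q_p$-linear (not $\Qe$-linear) subspace, and for $\chi,\chi'\in\Hi^Z$ one has $\braket{\chi}{\chi'}=\braket{Z\chi}{Z\chi'}=\overline{\braket{\chi}{\chi'}}$, so the inner product restricts to a $\Q_p$-valued symmetric form $b$ on $\Hi^Z$. A $Z$-invariant orthonormal basis is then exactly a normal basis of $\Hi$ lying inside $\Hi^Z$ on which $b$ is the identity; in particular each such vector $\chi$ must satisfy $\braket{\chi}{\chi}=1$. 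The crux, already visible in Example~\ref{exa.2.24}, is that over $\Qe$ this normalization need not be solvable with $\Q_p$-coefficients: the admissible self-inner-products of $Z$-fixed vectors are pinned to a coset of square classes governed by the norm form of $\Qe/\Q_p$.

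To make this concrete I would twist the canonical conjugation by a scalar. Choose $s\in\Qe\setminus\Q_p$ whose field norm $\overline{s}\,s=x^2-\mu y^2$ (with $s=x+y\sqrt{\mu}$, $y\neq0$) is a non-square in $\Q_p$; this is possible because the norm subgroup $N(\Qe^\times)$ strictly contains $(\Q_p^\times)^2$, so it contains non-squares, necessarily with $y\neq0$. Put $c\coloneqq s/\overline{s}$, so $\overline{c}\,c=1$, and on the one-dimensional $p$-adic Hilbert space $\Hi=\Qe$ (orthonormal basis $\{1\}$, $J_0 a=\overline{a}$) define $Z a\coloneqq c\,J_0 a=c\,\overline{a}$. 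A direct check gives $Z\in\AU$ (with $Z=J_0(\overline{c}\,\Id)$, $\overline{c}\,\Id$ unitary) and $Z^2=\overline{c}\,c\,\Id=\Id$. Its fixed vectors are exactly $a=\lambda s$ with $\lambda\in\Q_p$, whence $\braket{a}{a}=\lambda^2\,\overline{s}\,s\in \overline{s}\,s\,(\Q_p^\times)^2$, a set of non-squares that does not contain $1$. Thus no fixed vector is IP-normalized, $\Hi$ has no $Z$-invariant orthonormal basis, and by the reduction no $\mathscr{I}$ can conjugate $Z$ to $J_0$.

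The main obstacle is to make the obstruction genuinely basis-independent and to control its dependence on dimension. Ruling out \emph{all} invariant orthonormal bases (not merely the one built in Example~\ref{exa.2.24}) means detecting an invariant of the pair $(\Hi^Z,b)$ together with its ultrametric norm and comparing it with the fixed space of $J_0$, namely the standard space with form $\sum_i x_i^2$; here the decisive computation is the square class (discriminant) of $b$, read off through the Hilbert-symbol / norm-subgroup arithmetic of $\Q_p$. I expect this to be the delicate point precisely because the obstruction is sharp only in low dimension: already in dimension $\ge 4$ the self-inner-products of fixed vectors run over sums of squares, which represent all of $\Q_p^\times$, so the naive discriminant obstruction dissolves and the twisted $Z$ becomes conjugate to $J_0$ again. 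This is exactly why the proposition is asserted only \emph{in general}, its truth hinging on $p$, on the chosen quadratic extension $\Qe$, and on keeping the comparison at a dimension where the quadratic-form invariant of $b$ genuinely separates $Z$ from $J_0$.
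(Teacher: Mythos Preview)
Your opening reduction --- that an isomorphism $\mathscr{I}$ intertwining $Z$ with $J_0$ exists iff $\Hi$ admits a $Z$-invariant orthonormal basis --- is exactly what the paper does in its first paragraph, and your observation that the fixed set $\Hi^Z$ carries a $\Q_p$-valued symmetric bilinear form $b$ is the right conceptual tool. Where your argument and the paper's diverge is in the counterexample. The paper never produces a specific $Z$; it merely revisits the candidate basis of Example~\ref{exa.2.24} (built from $Z\phi_m\pm\phi_m$) and observes that the normalizing scalars $\sqrt{1/2}$, $\sqrt{-\mu/2}$ fall outside $\Q_p$. Strictly speaking this only shows that \emph{one} construction fails, not that every $Z$-invariant orthonormal basis is impossible, so the paper's argument is itself heuristic. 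Your route via the square class of $\overline{s}\,s$ is sharper: it is a genuine invariant of the pair $(\Hi^Z,b)$, and in dimension one it rigorously excludes \emph{all} invariant orthonormal bases.

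The gap in your proposal is the one you flag yourself. In this paper $\CH$ is shorthand for $\CH(\N)$ and the ambient convention (Subsection~\ref{subsec.2.2}, carried into Section~\ref{sec.4}) is $\dim(\Hi)=\infty$, so a one-dimensional $\Hi$ satisfies the proposition vacuously: there is no isomorphism $\CH(\N)\to\Qe$ regardless of $Z$. Your discriminant obstruction does not survive the passage to the relevant infinite-dimensional setting, because --- exactly as you say --- once sums of four or more squares are available the form $b$ on $\Hi^Z$ represents $1$ and the diagonal twist $Z(a_i)=(c\,\overline{a_i})$ becomes conjugate to $J_0$. To match the paper's intended statement you would need an infinite-dimensional example in which some finer invariant (for instance, an interaction between the ultrametric norm-orthogonality constraint and the form $b$, not merely the discriminant of $b$) obstructs the existence of a $Z$-invariant orthonormal basis; as it stands, neither your argument nor the paper's actually supplies one.
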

\begin{proof}
The existence of an isomorphism as stated in the proposition is equivalent to requiring the existence of a $Z$-invariant orthonormal basis $\Psi\equiv\{\psi_n\}_{n\in\N}$ in $\Hi$. Indeed, if an isomorphism $\mathscr{I}\colon\CH\rightarrow\Hi$ exists such that $\mathscr{I}^{-1}Z\mathscr{I}=J_0$ then, as one may easily check, the set $\{\mathscr{I}e_n\}_{n\in\N}$ --- where $\bm{e}\equiv\{e_n\}_{n\in\N}$ is the canonical basis in $\CH$ --- provides a $Z$-invariant orthonormal basis in $\Hi$. Conversely, assuming the existence of an orthonormal basis $\Psi\equiv\{\psi_n\}_{n\in\N}$ in $\Hi$ invariant under the action of $Z$, 
we have, for every $\chi$ in $\Hi$, that 
\begin{equation}
    Z\chi=Z\Big(\sum_{n}\braket{\psi_n}{\chi}\psi_n\Big)=\sum_n\overline{\braket{\psi_n}{\chi}}Z\psi_n=\sum_n\overline{\braket{\psi_n}{\chi}}\psi_n,
\end{equation}
and from this, it is clear that $\mathscr{I}\equiv W_\Psi^{-1}$ provides the isomorphism claimed in the proposition. Therefore, the proposition is proved if we can show that it is not always possible to construct a $Z$-invariant orthonormal basis in $\Hi$. To this end, let $\invset$ denote set of all the elements in $\Hi$ invariant under the action of $Z$, i.e., 
\begin{equation}
\invset\coloneqq\{\eta\in\Hi\mid Z\eta=\eta\}.
\end{equation}
Note that $\invset$ is not empty, as it contains at least the null vector; moreover, although it is not linear on $\Qe$, it is a $\Q_p$-linear space since, for any $\lambda_1,\lambda_2\in\Q_p$ and $\eta_1,\eta_2$ in $\invset$, we have that $\lambda_1\eta_1+\lambda_2\eta_2$ belongs to $\invset$ too. It is clear that $\invset$ is everywhere dense in $\Hi$: It suffices to observe that any $\chi$ in $\Hi$ can be expressed as
\begin{equation}\label{eq.100n}
    \chi=\frac{1}{2}\big(\chi_1+\sqrt{\mu}\chi_2\big),
\end{equation}
where we set 
\begin{equation}\label{eq.101n}
\chi_1\equiv Z\chi+\chi,\;\;\mbox{and}\;\; \chi_2=\frac{1}{\sqrt{\mu}}\big(Z\chi-\chi\big),
\end{equation}
with $Z\chi_1=\chi_1$ and $Z\chi_2=\chi_2$, i.e., $\chi_1,\chi_2\in\mathfrak{S}$. Thus, every $\chi$ in $\Hi$ can be expressed as a linear combination of elements in $\invset$. It then follows that a $Z$-invariant orthonormal basis in $\Hi$ exists if and only if we can construct an orthonormal basis in $\invset$. Let $\Phi\equiv\{\phi_n\}_{n\in\N}$ be an orthonormal basis in $\Hi$. A $Z$-invariant dense subset $\Psi\equiv\{\psi_n\}_{n\in\N}$ in $\Hi$ --- i.e., a dense subset in $\invset$ --- can be constructed by setting
\begin{equation}
\psi_{2n}\coloneqq Z\phi_n+\phi_n,\quad \psi_{2n-1}\coloneqq \frac{1}{\sqrt{\mu}}\big(Z\phi_n-\phi_n\big).
\end{equation}
However, we know that this set is normal and IP-orthogonal, but cannot be made orthonormal without violating the $Z$-invariance condition (see Example~\ref{exa.2.24}). Hence, in general, 
an orthonormal basis in $\invset$ --- i.e., a $Z$-invariant basis in $\Hi$ --- cannot be constructed, which is what we want to prove. 
\end{proof}
\begin{remark}\label{rem.2.23}
    A conjugation operator $J$ in a complex Hilbert space $\mathscr{K}$ is defined as an involutive anti-unitary operator. Then, it is not difficult to prove that  there exists an orthonormal basis $\Psi\equiv\{\psi_k\}_{k\in\N}$ in $\mathscr{K}$ which is invariant under the action of $J$, i.e., such that $J\psi_k=\psi_k$, for all $k\in\N$~\cite{barry2015operator}. 
    Indeed, let $\mathfrak{G}$ denote the set of all vectors $\chi$ in $\mathscr{K}$ invariant under the action of $J$, i.e., such that $J\chi=\chi$; this set is easily seen to be dense in $\mathscr{K}$. Since it is non-empty,  it is separable and, thus, contains an everywhere dense subset (see Theorem~$1.18$ in~\cite{stone1932}). Applying to this set the so-called Gram–Schmidt orthonormalization process yields a complete orthonormal subset of $\mathfrak{G}$ and --- since $\mathfrak{G}$ is dense --- a $J$-invariant basis in $\mathscr{K}$. In the $p$-adic setting, what prevents us to repeat a similar general construction is the lack of a $p$-adic counterpart of the Gram–Schmidt process, which precludes the possibility to turn a normal set into an orthonormal basis in the invariant subspace --- thus, an invariant orthonormal basis in the Hilbert space.
\end{remark}
\begin{remark}\label{rem.2.24}
    Recall that, since $\Qe$ is separable (and spherically complete \cite{robert2013course}), every $p$-adic Hilbert space $\Hi$ is of countable-type --- equivalently, separable. In fact, any subspace of $\Hi$ itself is of countable-type (cf.\ Corollary~$2.3.14$ in~\cite{perez2010locally}), and, thus, admits a $t$-orthogonal basis for some $t\in (0,1]$ (see Theorem~$2.3.7$ in~\cite{perez2010locally}). Owing on these observations, the conclusions of Proposition~\ref{prop.2.22} are as expected. Specifically, if $Z$ is an involutive anti-unitary operator in $\Hi$, and $\invset$ a $Z$-invariant dense subspace, since $\invset$ is of countable-type, it always admits a $t$-orthogonal basis, for some $t\in (0,1]$. Then, since $\invset$ is dense in $\Hi$, this basis extends to a $Z$-invariant, $t$-orthogonal basis in the whole Hilbert space $\Hi$. Nevertheless, there is no guarantee that an \emph{orthonormal basis} exists in $\invset$ --- indeed, such a basis exists if and only if $\invset$ is an Hilbert subspace of $\Hi$~\cite{aniello2024quantum} --- and, a fortiori, that a $Z$-invariant orthonormal basis can be constructed in the Hilbert space $\Hi$.
 \end{remark}

\subsection{Construction of the isomorphism}
We now proceed to construct an isomorphism between the $p$-adic tensor product Hilbert space $\Hi\otimes\Ki$ and the space $\T(\Hi,\Ki)$ of the trace/Hilbert-Schmidt class operators. This is not simply an \scom additional result' but, rather, a consistency check that our previous construction of the tensor product of two $p$-adic Hilbert spaces --- as presented in Section~\ref{sec.3} --- provides indeed the correct counterpart to the usual tensor product of the complex setting.

Thus, let $\Phi=\{\phi_i\}_{i\in \N}$ and $\Psi=\{\psi_j\}_{j\in \N}$ be two orthonormal bases in the $p$-adic Hilbert spaces $\Hi$ and $\Ki$, respectively. Recall that $\{{^{ji}}E^{\Phi,\Psi}\}_{j,i \in \mathbb{N}}=\{\ket{\psi_j}\bra{\phi_i}\}_{j,i\in \N}\equiv \{\braket{\phi_i}{\cdot}\psi_j\}_{j,i\in \N}$ is an orthonormal basis in $\T(\Hi,\Ki)$ (see Theorem \ref{th.tracla}); hence, any $T\in \T(\Hi,\Ki)$ can be written, in a unique way, as
\begin{equation}
    T=\sum_{j,i\in \N}T_{ji}\func{\phi_i}\psi_j,\quad \mbox{where} \quad T_{ji}=\hsbraket{{^{ji}}E^{\Phi,\Psi}}{T}=\tr(\func{\phi_i}\psi_jT).
\end{equation}
\begin{proposition}\label{I(T)}
Let $\Hi$ and $\Ki$ be two $p$-adic Hilbert spaces,
    let $J_{\Phi}$ be the conjugation operator associated with the orthonormal basis $\Phi \equiv \{\phi_i\}_{i\in \N}\subset \Hi$ (cf.\ Definition \ref{def.2.13}), and let $\Psi\equiv \{\psi_j\}_{j\in \N}$ be an orthonormal basis in $\Ki$. Define the map
\begin{equation}
     \mathcal{I}\colon\T(\Hi,\Ki)\ni T\mapsto \mathcal{I}(T)\coloneqq\sum_{i,j\in \N} \braket{\psi_j}{T\phi_i}(\phi_i\otimes \psi_j)\in \Hi\otimes \Ki.
\end{equation}
    Then, we have that 
    \begin{equation}\label{eq.117n}
        \mathcal{I}(\func{v}w)=(J_{\Phi}v)\otimes w
    \end{equation}
holds for every vector     
    $v=\sum_{i\in \N}\braket{\phi_i}{v}\phi_i$ in  $\Hi$, and $w=\sum_{j\in \N}\braket{\psi_j}{v}\psi_j$ in $\Ki$, such that $\func{v}w\in \T(\Hi,\Ki)$.
\end{proposition}

\begin{proof}
    The following chain of equations holds
    \begin{align}
        \mathcal{I}(\func{v}w)&=
        \sum_{i,j\in \N} \braket{\psi_j}{w}\braket{v}{\phi_i}(\phi_i\otimes \psi_j)\nonumber \\
        &= \sum_{i,j\in \N} \braket{v}{\phi_i}\braket{\psi_j}{w}(\phi_i\otimes \psi_j)\nonumber \\
        &= \bigg (\sum_{i\in \N} \braket{v}{\phi_i}\phi_i\bigg) \otimes \bigg (\sum_{j\in \N} \braket{\psi_j}{w}\psi_j\bigg)= (J_{\Phi}v) \otimes w.
    \end{align}
\end{proof}
Next result provides the desired isomorphism between the tensor product of $p$-adic Hilbert spaces $\Hi\otimes \Ki$, and $\T(\Hi,\Ki)$.

\begin{theorem}\label{th.4.14n}
    The application $\mathcal{I}$ defined in Proposition \ref{I(T)} is an inner-product-preserving linear isomorphism between the $p$-adic Hilbert spaces $\Hi\otimes \Ki$ and $\T(\Hi,\Ki)$.  Moreover, the map 
    \begin{equation}\label{I^*}
        \mathcal{I^*}\colon\Hi\otimes \Ki \ni u\mapsto\mathcal{I^*}(u)\coloneqq\sum_{i,j\in \N} \braket{\phi_i\otimes \psi_j}{u}\func{\phi_i}\psi_j\in \T(\Hi,\Ki)
    \end{equation}
    is such that $\mathcal{I}^*=\mathcal{I}^{-1}.$
\end{theorem}

\begin{proof}
    Since $\mathcal{I}$ is linear and maps the norm-orthogonal basis $\{\func{\phi_i}\psi_j\}_{j,i\in \N}$ to $\{(J_{\Phi}\phi_i)\otimes \psi_j\}_{i,j\in \N},$ then it is IP-preserving and  injective. To prove it is also surjective, let us take $u\in \Hi\otimes \Ki$, and let $T\in \T(\Hi,\Ki)$ be written as 
    \begin{equation}
        T=\sum_{i,j\in \N} \braket{\phi_j\otimes \psi_i}{u}\func{\phi_j}\psi_i.
    \end{equation}
    Then, we have that
    \begin{equation}\label{I(T)=u}
        \begin{split}
            \mathcal{I}(T)&\coloneqq  \sum_{i,j\in \N} \braket{\psi_j}{T\phi_i}(\phi_i\otimes \psi_j)= \sum_{i,j\in \N} \bigg\langle \psi_j,\sum_{s,t\in \N}\braket{\phi_s\otimes \psi_t}{u} \psi_t\bigg\rangle\braket{\phi_s}{\phi_i}(\phi_i\otimes \psi_j)\\
            & \: =\sum_{s,t\in \N} \braket{\phi_s\otimes \psi_t}{u}\sum_{i,j\in \N} \delta_{jt}\delta_{si}(\phi_i\otimes \psi_j)= \sum_{i,j\in \N} \braket{\phi_i\otimes \psi_j}{u}(\phi_i\otimes \psi_j)=u.
        \end{split}
    \end{equation}
    By Corollary~$6.13$ in~\cite{aniello2023trace}, for an operator $T\in \mathcal{T}(\Hi,\Ki)$ we have $\lim_{i +j}\braket{\psi_j}{T\phi_i}=0.$ This, together with Remark~$6.5$ in~\cite{aniello2023trace}, entails that the double series 
    \begin{equation}
        \sum_{i\in \N}\sum_{j\in \N} \braket{\psi_j}{T\phi_i}=\sum_{j\in \N}\sum_{i\in \N} \braket{\psi_j}{T\phi_i}=\sum_{i,j\in \N} \braket{\psi_j}{T\phi_i}
    \end{equation}
    converges, and we can exchange the order of the sums in the second equality of~\eqref{I(T)=u}. Since~\eqref{I(T)=u} holds for all $u\in \Hi\otimes \Ki$, the application $\mathcal{I}$ is surjective. Moreover, by observing that
    \begin{equation}
        \hsbraket{\mathcal{I}^*\mathcal{I}(T)}{S}=\hsbraket{\mathcal{I}(T)}{\mathcal{I}(S)}=\hsbraket{T}{S},
    \end{equation}
     we can argue that  $\mathcal{I}^\ast\mathcal{I}=\Id$ and  $\mathcal{I}\mathcal{I}^\ast=\Id$, i.e., $\mathcal{I}^\ast=\mathcal{I}^{-1}$.
    
    To prove~\eqref{I^*}, it suffices to note that a straightforward computation yields 
    \begin{equation}
        \bigg\langle T, \sum_{i,j\in\N} \braket{\phi_i\otimes \psi_j}{u}\func{\phi_i}\psi_j\bigg\rangle_{\mbox{\tiny $\T(\Hi,\Ki)$}}=\braket{\mathcal{I}(T)}{u}=\braket{T}{\mathcal{I}^*(u)}.
    \end{equation}
    for every $T\in \T(\Hi,\Ki)$.
\end{proof}
To conclude this section we collect, in the following result, some useful consequences of Theorem~\ref{th.4.14n}
\begin{corollary}
Let $\Hi$ and $\Ki$ be two $p$-adic Hilbert spaces, and let $J_\Phi$ be the conjugation operator associated with the orthonormal basis $\Phi\equiv\{\phi_n\}_{n\in\N}\subset\Hi$. Then, for the operator $\func{v}w\in \T(\Hi,\Ki)$ --- where $v\in \Hi$ and $w\in \Ki$ are any two vectors --- the following equalities hold
\begin{enumerate}[label=(\roman*)]
\item \label{item.i2.cor.4.15} $\func{v}w=\mathcal{I}^*((J_{\Phi}v)\otimes w)$;
\item \label{item.i3.cor.4.15} $\mathcal{I}^*(v \otimes w)=\func{J_{\Phi}v}w$.
\end{enumerate}
\end{corollary}
\begin{proof}
For~\ref{item.i2.cor.4.15}, it suffices to observe  that, by the second assertion in Theorem~\ref{th.4.14n}, $\mathcal{I}^\ast=\mathcal{I}^{-1}$. Hence, applying $\mathcal{I}^\ast$ to both sides of~\eqref{eq.117n}, we have:
\begin{align}
\mathcal{I}^\ast\mathcal{I}(\func{v}w)&=\mathcal{I}^\ast\big((J_\Phi v)\otimes w\big)\nonumber\\
\func{v}w&=\mathcal{I}^\ast\big((J_\Phi v)\otimes w\big).
\end{align}
For~\ref{item.i3.cor.4.15}, exploiting the identity~\eqref{eq.117n} in Proposition~\ref{I(T)}, and recalling that $J_\Phi$ is an involutive operator --- i.e., $J_\Phi^2=\mathrm{Id}$  --- we can argue that
\begin{align}
    \func{J_\Phi v}w&=\mathcal{I}^\ast\big(J_\Phi(J_\Phi v)\otimes w\big)\nonumber\\
    &=\mathcal{I}^\ast\big((J_\Phi^2 v)\otimes w\big)\nonumber\\
    &=\mathcal{I}^\ast(v\otimes w).
\end{align}
\end{proof}

\begin{remark}
The construction of the isomorphism provided in Theorem~\ref{th.4.14n} --- between the tensor product Hilbert space $\Hi\otimes\Ki$ and $\T(\Hi,\Ki)$ --- rests, in a crucial way, on the correspondence between the compact operator $\func{v}w\in\T(\Hi,\Ki)$, $v\in\Hi$, $w\in\Ki$, and the simple tensor $(J_\Phi v )\otimes w$. It is worth emphasizing that this isomorphism involves the conjugation operator $J_\Phi$ --- associated with the orthonormal basis $\Phi$ in $\Hi$ --- thereby mirroring the canonical antilinear isomorphism between the tensor product of complex Hilbert spaces, and the space of the Hilbert-Schmidt operators.
\end{remark}


\section{Tensor product of subspaces}
\label{sec.5}

In this section, we address the tensor product structure of subspaces in a (tensor product) $p$-adic Hilbert space. Once again, the well known results from the complex framework~\cite{ryan2002introduction} will provide us a useful road map, but, when switching to the $p$-adic setting, the emergence of non-trivial and distinctive peculiarities are encountered. 

In what follows, for the sake of notational convenience, we set $c_0(I,\Qe)\equiv c_0(I)$ --- where $I$ is a finite set or $I=\N$. We start by proving the following
\begin{proposition}\label{prop.5.1}
Let $(X,\No{\cdot})$ be a $p$-adic Banach space. Then, there exists a canonical $p$-adic Banach space isomorphism between $c_0(I)\ptpc X$ and $c_0(I,X)$.
\end{proposition}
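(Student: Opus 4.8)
The plan is to write down a canonical linear map from $c_0(I)\atp X$ into $c_0(I,X)$, extend it by continuity to the completion $c_0(I)\ptpc X$, and prove that it is a surjective isometry by exhibiting an explicit two-sided inverse. First I would invoke the universal property of the algebraic tensor product: the map $c_0(I)\times X\ni(a,x)\mapsto\{a_i x\}_{i\in I}$ is bilinear and lands in $c_0(I,X)$, since $\No{a_i x}_X=\abs{a_i}\No{x}_X\to 0$ whenever $a=\{a_i\}_{i\in I}\in c_0(I)$. Hence it factors through a unique linear map $\Theta\colon c_0(I)\atp X\to c_0(I,X)$ satisfying $\Theta\big(\sum_k a^{(k)}\otimes x_k\big)=\big\{\sum_k a^{(k)}_i x_k\big\}_{i\in I}$.

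Next I would show that $\Theta$ is norm-nonexpansive. For any finite representation $u=\sum_k a^{(k)}\otimes x_k$, the strong triangle inequality gives $\No{\sum_k a^{(k)}_i x_k}_X\le\max_k\abs{a^{(k)}_i}\No{x_k}_X$ for each index $i$, whence, using the elementary bound $\sup_i\max_k\le\max_k\sup_i$, one gets $\No{\Theta(u)}_\infty\le\max_k\No{a^{(k)}}_\infty\No{x_k}_X$. Taking the infimum over all representations yields $\No{\Theta(u)}_\infty\le\PNo{u}$, so $\Theta$ extends to a bounded operator $\widehat\Theta\colon c_0(I)\ptpc X\to c_0(I,X)$ of norm at most $1$.

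The crucial step is the construction of the inverse. I would set $\Xi\colon c_0(I,X)\to c_0(I)\ptpc X$, $\Xi(\{y_i\}_{i\in I})\coloneqq\sum_{i\in I}e_i\otimes y_i$, where $\{e_i\}_{i\in I}$ is the canonical basis of $c_0(I)$. This series converges in the completion because $\No{e_i\otimes y_i}_\pi\le\No{y_i}_X\to 0$ and, in a $p$-adic Banach space, a series converges as soon as its terms tend to $0$; moreover $\No{\Xi(\{y_i\})}_\pi\le\max_i\No{y_i}_X=\No{\{y_i\}}_\infty$. A direct computation then shows $\widehat\Theta\,\Xi=\Id$ on $c_0(I,X)$ (the partial sums of $\widehat\Theta\Xi(\{y_i\})$ are $\{y_1,\dots,y_N,0,\dots\}$, whose tails vanish in sup-norm), while on a simple tensor $\Xi\,\widehat\Theta(a\otimes x)=\sum_i e_i\otimes(a_i x)=\sum_i(a_i e_i)\otimes x=\big(\sum_i a_i e_i\big)\otimes x=a\otimes x$, using the continuity of the map $\,\cdot\,\otimes x$ and the expansion $a=\sum_i a_i e_i$; by linearity and continuity $\Xi\,\widehat\Theta=\Id$ on all of $c_0(I)\ptpc X$.

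Finally, I would combine the two bounds: since $\No{\widehat\Theta}\le 1$ and $\No{\Xi}\le 1$ and the two maps are mutually inverse, for every $u\in c_0(I)\ptpc X$ one has $\PNo{u}=\PNo{\Xi(\widehat\Theta u)}\le\No{\widehat\Theta u}_\infty\le\PNo{u}$, which forces equality throughout; hence $\widehat\Theta$ is a surjective isometry, i.e.\ an isomorphism of $p$-adic Banach spaces, and it is canonical in that it is built only from the canonical basis of $c_0(I)$. I expect the main obstacle to be precisely the lower norm bound: the projective-norm infimum cannot be evaluated directly for elements of the completion, so the isometry is obtained indirectly, through the explicit inverse $\Xi$ and the sandwich above. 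The delicate points along the way are the convergence of $\sum_i e_i\otimes y_i$ in $c_0(I)\ptpc X$ and the justification of interchanging the tensor map with infinite (unconditionally convergent) sums, both of which rest on the non-Archimedean nature of the norm.
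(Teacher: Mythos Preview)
Your proof is correct and follows essentially the same route as the paper: both construct the map $\Theta(a\otimes x)=\{a_i x\}_{i\in I}$, both use the candidate inverse $\{y_i\}\mapsto\sum_i e_i\otimes y_i$ built from the canonical basis, and both exploit the non-Archimedean convergence criterion for the latter series. The only organizational difference is that the paper establishes the two-sided norm equality $\PNo{u}=\No{\Theta(u)}_\infty$ directly on the algebraic tensor product (via the identity $u=\sum_i e_i\otimes u_i$) and then extends an isometry, whereas you extend a contraction first and recover the isometry afterwards from the sandwich $\PNo{u}=\PNo{\Xi\widehat\Theta u}\le\No{\widehat\Theta u}_\infty\le\PNo{u}$; the underlying computations are the same.
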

\begin{proof}
To construct the isomorphism, we proceed in two steps: first, we prove that there exists a natural isomorphism between $c_0(I)\ptp X$ and $c_0(I,X)$; then, we show that this map admits a unique extension to the complete tensor product space $c_0(I)\ptpc X$. Specifically, let us consider the map $\Ji_X\colon c_0(I)\ptp X\rightarrow c_0(I,X)$ defined by
\begin{equation}
c_0(I)\ptp X\ni \lambda\hot x\mapsto
\Ji_X(\lambda\hot x)\coloneqq\{\lambda_i x\}_{i\in I}\in c_0(I,X),
\end{equation}
where $\lambda=\{\lambda_i\}_{i\in I}\in c_0(I)$, and $x\in X$. Since $\lambda\in c_0(I)$ it is clear that $\{\lambda_i x\}_{i\in I}$ is an element of $c_0(I,X)$ for every $x$ in $X$, i.e.,
\begin{equation}
\lim_i \lambda_i x=0,\quad
\No{\{\lambda_i x\}_{i\in I}}_{\infty}=\sup_i\abs{\lambda_i}\,\No{x}=\No{\lambda}_{\infty}\,\No{x}.
\end{equation}
We now aim to show that $\Ji_X$ defines a norm-preserving map between $c_0(I)\ptp X$ and $c_0(I,X)$. Indeed, let 
 $u=\sum_{j=1}^m \lambda_j\hot x_j$ be a possible representation of $u\in c_0(I)\ptp X$ where, without loss of generality, we can assume $\{x_j\}_{j=1}^m$ to be a norm-orthogonal system in $X$ (cf.\ Lemma~$10.2.6$ in~\cite{perez2010locally}).
Taking into account that $\lambda_j=\{\lambda_{i,j}\}_{i\in I}$ for $j=1,\ldots,m$, we have:
\begin{equation}\label{eq.94}
    \Ji_X(u)\equiv\{u_i\}_{i\in I}=\bigg{\{}\sum_{j=1}^m \lambda_{i,j}x_j\bigg{\}}_{i\in I}.
\end{equation}
Therefore, it follows that 
\begin{align}\label{eq.95}
\No{\Ji_X(u)}_{\infty}=\bigg{\|}{\bigg{\{} \sum_{j=1}^m \lambda_{
    i,j}x_j\bigg{\}}}_{i\in I}\bigg{\|}_{\infty}=
    \sup_{i\in I}\bigg\|\sum_{j= 1}^m\lambda_{i,j}x_j\bigg\|&\leq \sup_{i\in I}\max_{1\leq j\leq m}\abs{\lambda_{i,j}}\,\No{x_j}\nonumber\\
    &=\max_{1\leq j\leq m}\No{\lambda_j}_{\infty}\,\No{x_j}=\No{u}_\pi.
\end{align}
To prove the reverse inequality, let us fix a representation $\sum_{j=1}^m\lambda_j\hot x_j$ of $u$ in $c_0(I)\ptp X$ --- where, again, we assume $\{x_j\}_{j=1}^m$ to be a norm-orthogonal system in $X$.
Let us consider the series $\sum_{n\in I}e_n\hot u_n$ where $\{u_n\}_{n\in I}=\{\sum_{j=1}^m\lambda_{n,j}x_j\}_{n\in I}=\Ji(u)$ and  $\boldsymbol{e}=\{e_n\}_{n\in I}$ is the so-called canonical basis in $c_0(I)$ (cf.\ Example~\ref{exa.2.4}). Let $\Pi_k\colon c_0(I)\rightarrow c_0(K)$ be the map --- where $K=\{1,\ldots,k\}$, $\card{K}\leq\card{I}$ --- denoting the projection of the elements in $c_0(I)$ onto the first $k$ coordinates. We then have that
\begin{equation}
    \Pi_k(\lambda)=\sum_{n=1}^k \lambda_n e_n,\quad \forall \lambda=\{\lambda_n\}_{n\in I}\in c_0(I).
\end{equation}
It is clear that $\Pi_k(\lambda)\rightarrow \lambda$ for $k\rightarrow \infty$ (if $I$ is a finite set, for $k$ equal to $\text{card}(I)$) and therefore
\begin{align}
    \bigg\|u-\sum_{n=1}^k e_n\hot u_n \bigg\|_\pi&=\bigg\|\sum_{j=1}^m\lambda_j\hot x_j -\sum_{n=1}^k\sum_{j=1}^m e_n \hot \lambda_{n,j} x_j\bigg\|_\pi\nonumber\\
    &=\bigg\|\sum_{j=1}^m\bigg(\lambda_j\hot x_j-\sum_{n=1}^k \lambda_{n,j}e_n\hot x_j\bigg)\bigg\|_\pi\nonumber\\
    &=\bigg\|\sum_{j=1}^m\big(\lambda_j-\Pi_k(\lambda_j) \big)\hot x_j\bigg\|_\pi\nonumber\\
    &\leq \max_{1\leq j\leq m}\No{\lambda_j-\Pi_k(\lambda_j)}_{\infty}\,\No{x_j}.
\end{align}
Thus, for $k\rightarrow \infty$ (or, if $I$ is a finite set, for $k$ equal to $\text{card}(I)$), $\No{u-\sum_{n=1}^k e_n\hot u_n}_\pi=0$ and hence we proved that $\sum_{n=1}^k e_n\hot u_n$ converges to $u$ in the projective norm $\No{\cdot}_\pi$. Now, we have:
\begin{equation}\label{eq.98}
\No{u}_\pi=\bigg\|\sum_{i\in I} e_i\hot u_i\bigg\|_\pi\leq \sup_{i}\No{u_i}\,\No{e_i}_\infty=\sup_i\No{u_i}=\No{\Ji_X(u)}_{\infty}.
\end{equation}
Therefore, taking into account~\eqref{eq.95} and~\eqref{eq.98}, we conclude that $\No{\Ji_X(u)}_{\infty}=\No{u}_\pi$, i.e., $\Ji_X$ is a norm-preserving map (an isometry) between the $p$-adic normed spaces $c_0(I)\ptp X$ and $c_0(I,X)$. On the other hand, since $c_0(I,X)$ is complete, $\Ji_X$ admits a unique bounded (Hahn-Banach) extension to an isometric operator  --- which we still denote with the same symbol --- from the completed projective tensor product $c_0(I)\ptpc X$ into $c_0(I,X)$. 

To prove that $\Ji_X$ is an isomorphism of $p$-adic Banach spaces, we are left to show that it is a surjective map. To this end, let $x=\{x_i\}_{i\in I}$ be a sequence in $c_0(I,X)$. If we consider the series $\sum_{i\in I} e_i\hot x_i$, it is clear that its image under $\Ji_X$ coincides with $x=\{x_i\}_{i\in I}$. Hence, we have only to show that $\sum_{i\in I} e_i\hot x_i$ converges in $c_0(I)\ptpc X$. If $\card{I}<\infty$, this is certainly true. On the other hand, if $I= \N$, since $c_0(I)\ptpc X$ is complete, it suffices to show that $\big\{\sum_{i=1}^n e_i\hot x_i\big\}_{n\in\N}$ is a $p$-adic Cauchy sequence. This simply follows by observing that
\begin{equation}
    \bigg\|\sum_{i=1}^{n+1}e_i\hot x_i-\sum_{i=1}^n e_i\hot x_i \bigg\|_\pi=\No{e_{n+1}\hot x_{n+1}}_\pi=\No{e_{n+1}}_\infty\,\No{x_{n+1}},
\end{equation}
and recalling that, since $\{x_n\}_{n\in I}\in c_0(I,X)$, $\lim_n x_n=0$. Hence, $\Ji_X$ is a surjective isometry --- i.e., an isometric isomorphism --- between the $p$-adic Banach spaces $c_0(I)\ptpc X$ and $c_0(I,X)$.
\end{proof}
Proposition~\ref{prop.5.1} allows us to identify the tensor product $c_0(I)\ptpc X$ with the $p$-adic Banach space $c_0(I,X)$ of zero-convergent sequences in $X$. As a primary implication of this result, we now prove the following  
\begin{corollary}\label{prop.5.2}
Let $(Y,\No{\cdot})$ be a $p$-adic Banach space, and let $X\subset Y$ be a closed subspace of $Y$. Then, $c_0(I)\ptpc X$ is a subspace of $c_0(I)\ptpc Y$.
\end{corollary}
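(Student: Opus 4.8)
The plan is to transport the statement through the identifications furnished by Proposition~\ref{prop.5.1}. Applying that result to both $X$ and $Y$, we obtain isometric isomorphisms of $p$-adic Banach spaces $\Ji_X\colon c_0(I)\ptpc X\to c_0(I,X)$ and $\Ji_Y\colon c_0(I)\ptpc Y\to c_0(I,Y)$, each of which sends a simple tensor $\lambda\hot z$ to the coordinate-wise product $\{\lambda_i z\}_{i\in I}$. It therefore suffices to exhibit $c_0(I,X)$ as a closed subspace of $c_0(I,Y)$, compatibly with the canonical map $c_0(I)\ptpc X\to c_0(I)\ptpc Y$ induced by the inclusion $X\hookrightarrow Y$, and then to pull everything back through $\Ji_X$ and $\Ji_Y$.

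First I would note that, since $X$ is a linear subspace of $Y$, any sequence with entries in $X$ that is zero-convergent in $X$ is a fortiori zero-convergent in $Y$; this yields a natural linear inclusion $\iota\colon c_0(I,X)\hookrightarrow c_0(I,Y)$. This $\iota$ is trivially norm-preserving, because the sup-norm $\No{\cdot}_\infty$ is computed coordinate-wise from $\No{\cdot}$, which on $X$ is exactly the restriction of the norm of $Y$. Next I would verify compatibility with the tensor structure: on a simple tensor $\lambda\hot x$ with $x\in X$ one has $\Ji_Y(\lambda\hot x)=\{\lambda_i x\}_{i\in I}=\iota\big(\Ji_X(\lambda\hot x)\big)$, so that $\Ji_Y$ restricted to $c_0(I)\ptp X$ agrees with $\iota\circ\Ji_X$; by linearity this identity holds on all of $c_0(I)\ptp X$, and by continuity it extends to the whole completion $c_0(I)\ptpc X$. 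Hence the canonical map $c_0(I)\ptpc X\to c_0(I)\ptpc Y$ is precisely $\Ji_Y^{-1}\circ\iota\circ\Ji_X$, and in particular it is an injective, norm-preserving embedding.

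It then remains to check that $c_0(I,X)$ is a \emph{closed} subspace of $c_0(I,Y)$, which is the single point where the closedness of $X$ in $Y$ is genuinely used. Given a sequence $\{x^{(n)}\}_{n\in\N}$ in $c_0(I,X)$ converging in $\No{\cdot}_\infty$ to some $x=\{x_i\}_{i\in I}\in c_0(I,Y)$, for each fixed $i\in I$ one has $\No{x^{(n)}_i-x_i}\le\No{x^{(n)}-x}_\infty\to 0$, so $x^{(n)}_i\to x_i$ in $Y$; since each $x^{(n)}_i$ lies in $X$ and $X$ is closed, the limit $x_i$ lies in $X$ as well. As $x$ already belongs to $c_0(I,Y)$, we conclude $x\in c_0(I,X)$. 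Transporting this back through the isomorphisms $\Ji_X$ and $\Ji_Y$ shows that $c_0(I)\ptpc X$ is isometrically isomorphic to a closed linear subspace of $c_0(I)\ptpc Y$, which is the assertion.

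The genuinely delicate point, and the one I would flag as the main obstacle, is the \emph{isometric} nature of the embedding rather than the closedness argument (which is routine). For Archimedean projective tensor products the inclusion of a subspace into the ambient space generally fails to be isometric, or even injective, so it cannot be taken for granted here; in the present $p$-adic situation the difficulty dissolves exactly because Proposition~\ref{prop.5.1} identifies both completed projective tensor products with sequence spaces carrying the coordinate-wise sup-norm, and the compatibility $\Ji_Y\!\mid_{c_0(I)\ptp X}=\iota\circ\Ji_X$ guarantees that the subspace norm inherited from $c_0(I)\ptpc Y$ coincides with the intrinsic projective norm on $c_0(I)\ptpc X$.
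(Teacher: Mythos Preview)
Your proof is correct and follows essentially the same route as the paper: invoke Proposition~\ref{prop.5.1} to identify $c_0(I)\ptpc X\cong c_0(I,X)$ and $c_0(I)\ptpc Y\cong c_0(I,Y)$, then use that $c_0(I,X)$ sits inside $c_0(I,Y)$. You supply more detail than the paper does---in particular the explicit compatibility check $\Ji_Y\!\mid_{c_0(I)\ptp X}=\iota\circ\Ji_X$ and the verification that $c_0(I,X)$ is closed in $c_0(I,Y)$---and your remark that the isometric nature of the embedding is the genuinely non-automatic point (failing in the Archimedean projective case) is exactly the content of the paper's Remark~\ref{rem.5.5}.
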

\begin{proof}
From Proposition~\ref{prop.5.1}, we know that the map $\Ji_Y\colon c_0(I)\ptpc Y\rightarrow c_0(I,Y)$ (respectively $\Ji_X\colon c_0(I)\ptpc X\rightarrow c_0(I,X)$) provides an isometric isomorphism of $p$-adic Banach spaces; i.e., 
\begin{equation}
    c_0(I)\ptpc Y\cong c_0(I,Y),\quad c_0(I)\ptpc X\cong c_0(I,X).
\end{equation}
On the other hand, $c_0(I,X)$ is a subspace of $c_0(I,Y)$ and, hence, $c_0(I)\ptpc X$ is a subspace of $c_0(I)\ptpc Y$, that is what we wanted to prove.
\end{proof}
\begin{remark}\label{rem.5.5}
 Proposition~\ref{prop.5.2} is a $p$-adic counterpart of a well known result for the projective tensor product of complex Banach spaces. Specifically, let $\X$ and $\Y$ be two (complex) Banach spaces, and let $\W$ be a closed subspace of $\Y$. Denoting by $\X\ptpc\W$ the completion of the projective tensor product of $\X$ and $\W$, it is readily seen that $\X\ptpc\W$ will not be, in general, a subspace of $\X\ptpc\Y$, unless  $\X=\ell_1(I)$ is the Banach space of absolutely convergent complex sequences indexed by $I$ --- in this case, one can indeed show that $\ell_1(I)\ptpc \W$ is a subspace of $\ell_1(I)\ptpc\Y$; see Proposition $2.7$ in~\cite{ryan2002introduction}. Therefore, the result of Proposition~\ref{prop.5.2} is as expected once observed that, in the $p$-adic setting, $\ell_1(I)\equiv c_0(I)$~\cite{natarajan2019sequence}. 
\end{remark}
Our aim is now to investigate how the tensor product operation behaves w.r.t.\ the $p$-adic Hilbert subspace structure. 
We start by recalling that, in the $p$-adic setting, there are two natural notions of subspaces that can be defined for a $p$-adic Hilbert space (see~\cite{aniello2024quantum} for a detailed discussion); specifically, we have:
\begin{definition}
    Let $\Hi$ be a $p$-adic Hilbert space, and let $\Ws\subset \Hi$ be a norm-closed subspace of $\Hi$.
    We say that $\Ws$ is a \emph{Hilbert subspace} of $\Hi$ if it is a $p$-adic Hilbert space itself --- equivalently, if $\Ws$ admits an orthonormal basis $\Psi\equiv\{\psi_i\}_{i\in I}\subset \Ws$, $\card{I}=\dim \Ws$. We say that $\Ws$ is a \emph{regular subspace} of $\Hi$, if it is a Hilbert subspace, and if some (therefore, any) orthonormal basis of $\Ws$ can be extended to an orthonormal basis of $\Hi$. 
\end{definition}
Clearly, a regular subspace is a Hilbert subspace, while the contrary is not generally true.
\begin{remark}
Let $\Ws$ be a norm-closed subspace of a $p$-adic Hilbert space $\Hi$. Denoting by $\Ws^\perp$ the (closed) subspace defined as
\begin{equation}
\Ws^\perp\coloneqq \{x\in\Hi\mid \braket{x}{w}=0,\;\; \forall w\in\Ws\},
\end{equation}
it is possible to prove that if $\Ws$ is a regular subspace, then $\Ws^\perp$ is a regular subspace too. If $\Ws^\perp$ is a Hilbert subspace of $\Hi$, and if the relation
\begin{equation}
    \No{w+v}=\max\{\No{w},\No{v}\}
\end{equation}
holds for every $v\in\Ws$ and $w\in\Ws^\perp$,
then $\Ws$ and $\Ws^\perp$ are regular subspaces of $\Hi$~\cite{aniello2024quantum}.
\end{remark}
\begin{proposition}\label{cor.5.2}
Let $\Hi$ and $\Ki$ be two $p$-adic Hilbert spaces, and let $\Psi\equiv\{\psi_i\}_{i\in I}$ be an orthonormal basis in $\Hi$. Then, the map defined as 
\begin{equation}
\teniso_\Psi\colon \Hi\otimes \Ki\ni u\mapsto \teniso_\Psi(u)\coloneqq\sum_{j=1}^m W_\Psi(x_j)\hot y_j\in\CH(I)\otimes \Ki
\end{equation}
--- where $W_\Psi\colon \Hi\rightarrow \CH(I)$ is the canonical isomorphism of $\Hi$ into the coordinate Hilbert space $\CH(I)$, and $u=\sum_{j=1}^m x_j\hot y_j$, $x_j\in\Hi$, $y_j\in\Ki$, any representation of $u$ --- is an isomorphism between the $p$-adic Hilbert spaces $\Hi\otimes \Ki$ and $\CH(I)\otimes \Ki$.
\end{proposition}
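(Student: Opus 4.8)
The plan is to recognize $\teniso_\Psi$ as the algebraic tensor product $W_\Psi\atp\Id_\Ki$ of the canonical isomorphism $W_\Psi\colon\Hi\to\CH(I)$ (Example~\ref{exa.2.13}) with the identity on $\Ki$, and then to promote it, by density, to an isomorphism of the completed tensor products. First I would check that $\teniso_\Psi$ is well defined and linear: the map $\Hi\times\Ki\ni(x,y)\mapsto W_\Psi(x)\hot y\in\CH(I)\atp\Ki$ is bilinear (since $W_\Psi$ is linear and $\hot$ is bilinear), so the universal property of the algebraic tensor product yields a unique linear map on $\Hi\atp\Ki$ sending $x\hot y$ to $W_\Psi(x)\hot y$; this is precisely $\teniso_\Psi$, whose value is therefore independent of the chosen representation $u=\sum_j x_j\hot y_j$. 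Because $W_\Psi$ is a bijection, the analogous map $W_\Psi^{-1}\atp\Id_\Ki$, sending $\xi\hot z\mapsto W_\Psi^{-1}(\xi)\hot z$, is a two-sided inverse, so $\teniso_\Psi$ is a linear bijection of $\Hi\atp\Ki$ onto $\CH(I)\atp\Ki$.

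Next I would establish that $\teniso_\Psi$ is an isometry for the projective norm $\PNo{\cdot}$. The key observation is that $\teniso_\Psi$ induces a bijection between the representations of $u\in\Hi\atp\Ki$ and those of $\teniso_\Psi(u)\in\CH(I)\atp\Ki$: a representation $u=\sum_j x_j\hot y_j$ produces $\teniso_\Psi(u)=\sum_j W_\Psi(x_j)\hot y_j$, and conversely every representation $\teniso_\Psi(u)=\sum_k\xi_k\hot z_k$ pulls back, via $\teniso_\Psi^{-1}=W_\Psi^{-1}\atp\Id_\Ki$, to $u=\sum_k W_\Psi^{-1}(\xi_k)\hot z_k$. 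Since $W_\Psi$ preserves the norm (property \texttt{(A1)} of the isomorphism $W_\Psi$), one has $\No{W_\Psi(x_j)}_{\CH(I)}=\Noh{x_j}$, whence $\max_j\No{W_\Psi(x_j)}_{\CH(I)}\Nok{y_j}=\max_j\Noh{x_j}\Nok{y_j}$ for each matched pair of representations; taking the infimum over all representations gives $\PNo{\teniso_\Psi(u)}=\PNo{u}$. Thus $\teniso_\Psi$ is a bijective isometry at the algebraic level, and since $\Hi\atp\Ki$ and $\CH(I)\atp\Ki$ are dense in their completions $\Hi\otimes\Ki$ and $\CH(I)\otimes\Ki$, both $\teniso_\Psi$ and its inverse extend uniquely and continuously, yielding a surjective isometry between the two $p$-adic Hilbert spaces, i.e.\ condition \texttt{(A1)}.

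Finally I would verify the inner-product preservation \texttt{(A2)}. On simple tensors, Definition~\ref{defsp} together with property \texttt{(A2)} of $W_\Psi$ gives $\braket{\teniso_\Psi(x_1\hot y_1)}{\teniso_\Psi(x_2\hot y_2)}_\alpha=\braket{W_\Psi x_1}{W_\Psi x_2}_{\CH(I)}\braketk{y_1}{y_2}=\braketh{x_1}{x_2}\braketk{y_1}{y_2}=\braket{x_1\hot y_1}{x_2\hot y_2}_\alpha$, and by sesquilinearity this propagates to all of $\Hi\atp\Ki$; continuity of the inner product (Lemma~\ref{leminn}) then extends it to $\inprd_\pi$ on the completions. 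I expect the only genuinely delicate point to be the isometry step, namely arguing that the bijection $W_\Psi\atp\Id_\Ki$ carries representations to representations in a manner that preserves the quantity under the infimum defining $\PNo{\cdot}$; once this is secured, the remaining verifications are routine. An equally viable alternative route would fix an orthonormal basis $\{\chi_l\}$ in $\Ki$ and invoke Theorem~\ref{th.3.9} to note that $\teniso_\Psi$ maps the normal basis $\{\psi_i\hot\chi_l\}$ of $\Hi\otimes\Ki$ bijectively onto the normal basis $\{e_i\hot\chi_l\}$ of $\CH(I)\otimes\Ki$ (where $e_i=W_\Psi\psi_i$ is the canonical basis vector), so that the coordinate characterization of the norm and of $\inprd_\pi$ yields both \texttt{(A1)} and \texttt{(A2)} at once.
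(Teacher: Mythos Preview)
Your proposal is correct and follows essentially the same route as the paper: both argue that $\teniso_\Psi$ is surjective, then verify norm-preservation via the isometry property of $W_\Psi$ applied to the projective-norm infimum, and finally check inner-product preservation on simple tensors. You are in fact more careful than the paper on two points it glosses over: you invoke the universal property to secure well-definedness, and you make explicit the bijection between representations of $u$ and of $\teniso_\Psi(u)$ (the paper's computation tacitly relies on this but writes the infimum for $\PNo{\teniso_\Psi(u)}$ as ranging over representations of $u$), as well as the density argument needed to pass from $\Hi\atp\Ki$ to the completion $\Hi\otimes\Ki$.
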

\begin{proof}
It is clear that $\teniso_\Psi$ is a surjective map. Moreover, by observing that
\begin{align}
\No{\teniso_\Psi(u)}_\pi&=\bigg\|\sum_{j=1}^m W_\Psi(x_j)\hot y_j\bigg\|\notag\\
&=\inf\bigg\{\max_{1\leq j\leq m}\No{W_\Psi(x_j)}\,\No{y_j}\bmid u=\sum_{j=1}^m x_j\hot y_j,\,\,x_j\in\Hi,y_j\in\Ki\bigg\}\nonumber\\
&=\inf\bigg\{\max_{1\leq j\leq m} \No{x_j}\,\No{y_j}\bmid u=\sum_{j=1}^m x_j\hot y_j\,\,x_j\in\Hi,y_j\in\Ki\bigg\}\nonumber\\
&=\No{u}_\pi,
\end{align}
we can argue that $\teniso_\Psi$ is  norm-preserving. Next, let $u=\sum_{j=1}^m x_j\hot y_j$ and $w=\sum_{k=1}^n f_k\hot g_k$, where $x_j,f_k\in\Hi$, $y_j,g_k\in\Ki$, be two elements of $\Hi\ptpc \Ki$. Then, we have:
\begin{align}\label{eq.149}
    \braket{\teniso_\Psi(u)}{\teniso_\Psi(w)}_\pi&=\Big\langle\sum_{j=1}^m W_\Psi(x_j)\hot y_j,\sum_{k=1}^n W_\Psi(f_k)\hot g_k\Big\rangle_\pi\nonumber\\
    &=\sum_{j=1}^m\sum_{k=1}^n\big\langle W_\Psi(x_j),W_\Psi(f_k)\big\rangle_\Hi\braket{y_j}{g_k}_\Ki\nonumber\\
    &=\sum_{j=1}^m\sum_{k=1}^n \braket{\breve{x}_j}{\breve{f}_k}_{\CH(I)}\braket{y_j}{g_k}_\Ki.
\end{align}
where we denoted by $\inprd_\Hi$, $\inprd_\Ki$, and $\inprd_{\CH(I)}$ the $p$-adic inner product defined on $\Hi$, $\Ki$, and $\CH(I)$ respectively, and where we set $\breve{x}_j\equiv\{\braket{\psi_i}{x_{j,i}}\}_{i\in I}$, $\breve{f}_k\equiv\{\braket{\psi_i}{f_{k,i}}\}_{i\in I}$ (see Example~\ref{exa.2.13}).
Observing that the last line of~\eqref{eq.149} is the natural inner product over $\CH(I)\ptpc \Ki$, we proved that $\teniso_\Psi$ is an inner-product preserving map too. Concluding, $\teniso_\Psi$ is a surjective, norm-preserving, inner-product-preserving map, i.e., it is an isomorphism of $p$-adic Hilbert spaces.
\end{proof}
We now come to the main result of this section
\begin{proposition}\label{prop.5.5}
Let $\Hi$, $\Ki$ be two $p$-adic Hilbert spaces, and let $\Ws$ be a Hilbert subspace of $\Ki$. Then $\Hi\otimes\Ws$ is a Hilbert subspace of $\Hi\otimes \Ki$. Moreover, if $\Ws$ is regular, then $\Hi\otimes\Ws$ is a regular subspace of $\Hi\otimes \Ki$.
\end{proposition}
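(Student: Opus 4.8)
The plan is to realize $\Hi\otimes\Ws$ concretely as the closed linear span, inside $\Hi\otimes\Ki$, of the simple tensors $x\otimes w$ with $x\in\Hi$ and $w\in\Ws$, and then to check that this span is a Hilbert subspace in the precise sense of the definition. First I would fix an orthonormal basis $\Phi\equiv\{\phi_i\}_{i\in I}$ of $\Hi$ and an orthonormal basis $\{\chi_k\}_{k\in K}$ of the Hilbert subspace $\Ws$; since $\Ws$ carries the norm and the inner product inherited from $\Ki$, the family $\{\chi_k\}_{k\in K}$ is automatically a normal (hence norm-orthogonal) and IP-orthonormal system in $\Ki$ as well. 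The universal property of the algebraic tensor product, applied to the bilinear map $(x,w)\mapsto x\otimes w\in\Hi\atp\Ki$, yields a well-defined linear map $\iota\colon\Hi\atp\Ws\to\Hi\atp\Ki$ sending $x\otimes w$ to $x\otimes w$.

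The crucial step is to show that $\iota$ preserves the projective norm, i.e.\ $\PNo{\iota(u)}=\PNo{u}$ for every $u\in\Hi\atp\Ws$. This is not immediate from the definition~\eqref{RNo}, because the infimum defining the norm on $\Hi\atp\Ki$ ranges over a strictly larger set of representations (with second factors in $\Ki$ rather than in $\Ws$), so a priori only $\PNo{\iota(u)}\le\PNo{u}$ is clear. To get equality I would, given $u$, choose a representation $u=\sum_{j}x_j\otimes w_j$ in which $\{x_j\}$ is a \emph{norm-orthogonal system in $\Hi$}, obtained by expanding the $\Hi$-factors over a norm-orthogonal basis of their finite-dimensional span. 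Lemma~\ref{lemmaequiv}, applied once in $\Hi\atp\Ws$ and once in $\Hi\atp\Ki$, then gives $\PNo{u}=\max_j\Noh{x_j}\,\Nok{w_j}=\PNo{\iota(u)}$, where I use that $\Nok{w_j}$ computed in $\Ws$ coincides with its value computed in $\Ki$. Hence $\iota$ is an isometry on the algebraic tensor product and extends uniquely to an isometric embedding $\bar\iota\colon\Hi\otimes\Ws\to\Hi\otimes\Ki$ of the completions.

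Next I would verify that $\bar\iota$ preserves the inner product: on simple tensors $\braket{\phi_i\otimes\chi_k}{\phi_{i'}\otimes\chi_{k'}}_{\pi}=\braket{\phi_i}{\phi_{i'}}_{\Hi}\,\braket{\chi_k}{\chi_{k'}}_{\Ki}=\delta_{ii'}\delta_{kk'}$, which matches the corresponding product in $\Hi\otimes\Ws$; the general case follows by sesquilinearity and continuity of $\inprd_{\pi}$ (Lemma~\ref{leminn}). By Theorem~\ref{maintheo} applied to the pair $\Hi,\Ws$, the family $\{\phi_i\otimes\chi_k\}_{i\in I,\,k\in K}$ is an orthonormal basis of $\Hi\otimes\Ws$; applying the isometry $\bar\iota$, its image (the same family, now viewed in $\Hi\otimes\Ki$) is a normal, IP-orthonormal system whose closed span is exactly $\bar\iota(\Hi\otimes\Ws)$. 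Since $\bar\iota$ is an isometry and $\Hi\otimes\Ws$ is complete, this image is norm-closed in $\Hi\otimes\Ki$ and admits $\{\phi_i\otimes\chi_k\}$ as an orthonormal basis; therefore it is a Hilbert subspace, which proves the first assertion.

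For the regularity statement I would use that when $\Ws$ is regular the basis $\{\chi_k\}_{k\in K}$ extends to an orthonormal basis $\{\psi_j\}_{j\in J}$ of $\Ki$ with $K\subseteq J$ and $\psi_k=\chi_k$ for $k\in K$. By Theorem~\ref{th.3.9} (equivalently Theorem~\ref{maintheo}) the family $\{\phi_i\otimes\psi_j\}_{(i,j)\in I\times J}$ is an orthonormal basis of $\Hi\otimes\Ki$, and the subfamily indexed by $I\times K$ is precisely the orthonormal basis $\{\phi_i\otimes\chi_k\}$ of $\Hi\otimes\Ws$. Thus an orthonormal basis of $\Hi\otimes\Ws$ extends to one of $\Hi\otimes\Ki$, which is exactly the definition of a regular subspace. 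I expect the genuine obstacle to be the norm-preservation step: establishing that the projective norm is unchanged under the restriction $\Ki\rightsquigarrow\Ws$ despite the enlargement of the admissible representations, which is the reason one must invoke Lemma~\ref{lemmaequiv} through a carefully chosen norm-orthogonal representation rather than argue directly from the infimum.
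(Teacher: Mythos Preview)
Your argument is correct, and it takes a genuinely different route from the paper's. The paper establishes the result by a detour through sequence spaces: it first proves that $c_0(I)\ptpc X\cong c_0(I,X)$ as $p$-adic Banach spaces (Proposition~\ref{prop.5.1}), deduces that $c_0(I)\ptpc\Ws\subset c_0(I)\ptpc\Ki$ whenever $\Ws$ is closed in $\Ki$ (Corollary~\ref{prop.5.2}), and then transports this inclusion back via the coordinate isomorphism $\teniso_\Psi\colon\Hi\otimes\Ki\to\CH(I)\otimes\Ki$ of Proposition~\ref{cor.5.2}. You bypass this machinery entirely: by rewriting $u\in\Hi\atp\Ws$ with norm-orthogonal $\Hi$-factors and invoking Lemma~\ref{lemmaequiv} twice (once in $\Hi\atp\Ws$, once in $\Hi\atp\Ki$), you obtain directly that the canonical map $\iota$ is isometric, and the Hilbert-subspace structure then follows from Theorem~\ref{maintheo}. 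Your approach is more elementary and self-contained; the paper's approach has the advantage of yielding the identification $c_0(I)\ptpc X\cong c_0(I,X)$ as a result of independent interest (cf.\ Remark~\ref{rem.5.5}). For the regularity part the two arguments coincide.
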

\begin{proof}
Let $\Psi\equiv\{\psi_i\}_{i\in I}$ be an orthonormal basis in $\Hi$ ($\dim\Hi=\card{I}$). The map  $\teniso_\Psi$ in Proposition~\ref{cor.5.2} provides an isomorphism between the $p$-adic Hilbert spaces $\Hi\otimes\Ki$ and $\CH(I)\otimes\Ki$. In particular, this also entails that $\Hi\ptpc\Ki\cong c_0(I)\ptpc\Ki$ are isomorphic $p$-adic Banach spaces (recall that, we denoted with $c_0(I)$ the carrier Banach space of $\CH(I)$, see Example~\ref{exa.2.4}). In a similar way, if $\Ws$ is a Hilbert subspace of $\Ki$, we have the ($p$-adic Banach space isomorphism)  $\Hi\ptpc\Ws\cong c_0(I)\ptpc \Ws$. By Proposition~\ref{prop.5.1}, we know that $c_0(I)\ptpc \Ws$ is a subspace of $c_0(I)\ptpc\Ki$ --- and, therefore, that $\CH(I)\otimes\Ws$ is a subspace of $\CH(I)\otimes\Ki$, as well. Then, we have the following inclusion
\begin{equation}
    \Hi\otimes\Ws\cong \CH(I)\otimes\Ws\subset\CH(I)\otimes\Ki\cong\Hi\otimes\Ki,
\end{equation}
and since $\Hi\otimes \Ws$ is a Hilbert space itself --- and norm-closed in $\Hi\otimes \Ki$ --- it is a Hilbert subspace of $\Hi\otimes\Ki$.  
Assume now $\Ws\subset\Ki$ be a regular subspace. This entails that there exists an orthonormal basis $\Phi_K\equiv\{\phi_k\}_{k\in K}$ of $\Ki$ such that, for some subset $J$ of $K$ --- putting $\Phi_J\equiv\{\phi_j\}_{j\in J}$ --- $\Phi_J$ is an orthonormal basis of $\Ws$. Clearly, $\Ws$ is a Hilbert subspace of $\Ki$; thus, $\Hi\otimes \Ws$ is a Hilbert subspace of $\Hi\otimes\Ki$. Moreover, from Theorem~\ref{th.3.9}, we know that the set $\{\psi_i\hot \phi_j\}_{i\in I, j\in J}$ --- where $\Psi_I\equiv\{\psi_i\}_{i\in I}$ is an orthonormal basis in $\Hi$ --- provides an orthonormal basis in $\Hi\otimes \Ws$. Then, the set $\{\psi_i\hot\phi_k\}_{i\in I, k\in K}$ --- obtained by extending the basis $\Phi_J\subset\Ws$ to the basis $\Phi_K$ of $\Ki$ --- is, in a natural way, an orthonormal basis of $\Hi\otimes\Ki$. Concluding, we have constructed an orthonormal basis in $\Hi\otimes\Ws$ which can be extended to an orthonormal basis of $\Hi\otimes\Ki$, i.e., $\Hi\otimes\Ws$ is a regular Hilbert subspace of $\Hi\otimes\Ki$.
\end{proof}


\section{Conclusions}
\label{sec.6}
This paper is a natural continuation of our previous investigations on 
$p$-adic quantum mechanics~\cite{aniello2023states,aniello2023trace,aniello2024quantum,aniello2024}. In particular, in~\cite{aniello2023states,aniello2023trace,aniello2024quantum}, we introduced a model of quantum mechanics where the carrier vector space of physical states is an abstract Hilbert space over a quadratic extension $\Qe$ of $\Q_p$. When dealing with composite systems, it is natural to conjecture that --- analogously to what happens in the usual (complex) formulation of quantum mechanics ---
the fundamental mathematical framework should involve the tensor product of two Hilbert spaces. It is precisely this issue that motivates us to introduce, in this work, a notion of tensor product tailored to our specific $p$-adic setting.

Let us briefly summarize our main achievements: 
\begin{itemize}
    \item We have first defined the algebraic tensor product $\Hi\hot_\alpha \Ki$ of two $p$-adic Hilbert spaces --- which involves solely the linear structure of these spaces. Next, we have endowed $\Hi\hot_\alpha \Ki$ with a suitable non-Archimedean norm, thereby achieving a $p$-adic normed space $\Hi\ptp\Ki$ out of the initial algebraic tensor product. Here, we have noted that this norm may be regarded as the ultrametric analogue of the standard \emph{projective norm} associated with the algebraic tensor product of two real or complex Banach spaces --- and, accordingly, we denoted the associated Banach space completion by $\Hi\ptpc\Ki$. As a last step of our construction, we have endowed $\Hi\ptpc\Ki$ with the inner product $\inprd_\pi$, and formed a new orthonormal basis $\Xi\equiv\{\xi_k\}_{k\in K}\equiv\{\phi_i\otimes \psi_j\}_{i\in I,\, j\in J}$ out of the orthonormal bases $\Phi\equiv\{\phi_n\}_{n\in\N}$ and $\Psi\equiv\{\psi_m\}_{m\in\N}$ in the initial Hilbert spaces $\Hi$ and $\Ki$. Eventually, this has led us to the ($p$-adic Hilbert space) tensor product $\Hi\otimes\Ki\equiv\big(\Hi\ptpc \Ki, \PNo{\cdot}, \inprd_\pi,\bm{\Xi}\equiv\{\xi_k\}_{k\in K}\equiv\{\phi_i\otimes \psi_j\}_{i\in I,\, j\in J}\big)$. 
    \item As a next step, we have characterized the set of anti-unitary operators $\AU$ in a $p$-adic Hilbert space $\Hi$. To this end, we have first introduced a suitable notion of a conjugation operator $J_\Phi$ --- associated with any orthonormal basis $\Phi\equiv\{\phi_n\}_{n\in\N}$ in $\Hi$ --- defined by requiring that $J_\Phi$ is precisely that conjugate-linear map which leaves invariant the subset $\Phi\subset\Hi$, i.e., such that $J_\Phi \phi_n=\phi_n$, $\forall n\in\N$. An anti-unitary operator has then been defined as the composition of a unitary operator in $\Hi$, and the conjugation operator $J_\Phi$. After having characterized the set $\AU$, we have highlighted a key difference between $p$-adic and complex conjugation operators. In fact, whereas, in the complex setting, a conjugation operator in a Hilbert space $\mathscr{H}$ can be (equivalently) defined as an involutive anti-unitary operator $Z\colon \mathscr{H}\rightarrow \mathscr{H}$ --- in which case one can prove there always exits an orthonormal basis in $\mathscr{H}$ preserved by the action of $Z$ --- in the $p$-adic setting, it is not always possible to construct an invariant orthonormal basis $\Phi$ in $\Hi$, such that $Z\equiv J_\Phi$ is the associated conjugation operator.
    \item As a \scom consistency check' of our previous construction of the $p$-adic tensor product Hilbert space $\Hi\otimes \Ki$, we have exploited the conjugation operator $J_\Phi$ to define a conjugate-linear isomorphism between $\Hi \otimes \Ki$, and the $p$-adic Hilbert space $\T(\Hi,\Ki)$ of trace/Hilbert-Schmidt class operators from $\Hi$ to $\Ki$. This establishes a clear analogy with the familiar tensor product structure of complex Hilbert spaces.
    \item Lastly, we have studied how the tensor product $\Hi\otimes\Ki$ behaves w.r.t.\ the subspaces of the original $p$-adic Hilbert spaces $\Hi$ and $\Ki$. Here, we have first proved that, similarly to the complex setting --- where one can show there exists a canonical isomorphism between a complex or real tensor product Banach space $\mathscr{X}\ptpc \ell_1$ and the sequence space $\ell_1(\mathscr{X})$ --- given a $p$-adic Banach space $(X,\No{\cdot})$, it is possible to construct the isomorphism $c_0(I)\ptpc X\cong c_0(I,X)$. However, unlike the complex framework, we have proved here a much stronger result: We showed that the $p$-adic tensor product operation preservers the subspace structure of a $p$-adic Hilbert space. Namely, that if $\Hi$ and $\Ki$ are two $p$-adic Hilbert spaces, and if $\mathcal{W}\subset \Ki$ is an Hilbert (or a regular) subspace of $\Ki$, then $\Hi\otimes \mathcal{W}$ is an Hilbert (resp.\ regular) subspace of $\Hi\otimes \Ki$.
\end{itemize}
Let us now provide a \dcom bird’s-eye view" of some possible continuations of this work. Indeed, a first interesting question one can address at this point, is what \emph{reflexivity} and \emph{dual properties} characterize the tensor product $\Hi \otimes \Ki$. In fact, we already know that the dual of a $p$-adic Hilbert space $\Hi$ is not a $p$-adic Hilbert space itself --- if fact, it is a $p$-adic Banach space --- and we observed that $\Hi$ is, in general, not reflexive, but only pseudo-reflexive~\cite{aniello2023trace,aniello2024quantum}. Hence, a detailed study of the dual and the bidual of $\Hi\otimes\Ki$ could unveil new peculiarities of $p$-adic Hilbert spaces not observed in the standard complex setting.

Once the $p$-adic tensor product Hilbert space $\Hi\otimes \Ki$ is constructed, a further natural step would be to investigate the tensor product structure of the set of adjointable, and trace/Hilbert-Schmidt class operators in $\Hi\otimes \Ki$. In particular, the former will provide a suitable notion of an observable, and the latter, that of a state, of a composite quantum system. This will also allow us to introduce a natural notion of \emph{entangled state} in the $p$-adic setting and, more in general, to investigate the separability properties of $p$-adic quantum systems.

\appendix

\section{\texorpdfstring{Proofs of Theorems~\ref{th.2.5},~\ref{th.2.7} and~\ref{th.tracla}}{}}\label{app.a1}
In this appendix, we present detailed proofs of the theorems stated in Subsection~\ref{subsec.2.2}. Since their arguments closely mirror those of the single Hilbert space setting (see, in particular, Theorems~$4.2$,~$4.4$ and~$6.38$ in~\cite{aniello2023trace}), and only require minor modifications, we opted to include them here, leaving the main text to the more original material. We start from
\begin{proof}[Proof of Theorem \ref{th.2.5}]
Let $\Hi$ and $\Ki$ be $p$-adic Hilbert spaces with orthonormal bases $\Phi\equiv\{\phi_n\}_{n\in \N}$, and $\Psi\equiv\{\psi_m\}_{m\in \N}$, respectively. From relations~\eqref{eq.bondop},~\eqref{eq.bondop2}, 
it is clear that $\Ba{\Hi}{\Ki}\subset (\Hi,\Ki)_{\Phi,\Psi}$. To prove the reverse inclusion, and, thus, the first equality in~\eqref{eq.20}, it suffices to observe that if $B\equiv\op_{\Phi,\Psi}(B_{mn})\in (\Hi,\Ki)_{\Phi,\Psi}$ then, for any $\zeta$ in $\Hi$, we have:
\begin{equation}\label{eq.22}
\No{B\zeta}=\sup_{m}\bigg|\sum_nB_{mn}\braket{\phi_n}{\zeta}\bigg|\leq\sup_{m,n}\abs{B_{mn}}\,\abs{\braket{\phi_n}{\zeta}}\leq \No{\zeta}\sup_{m,n}\abs{B_{mn}},
\end{equation}
i.e., $B\equiv\op_{\Phi,\Psi}(B_{mn})\in\Ba{\Hi}{\Ki}$. The second equality in~\eqref{eq.20} simply follows by relations~\eqref{eq.bondop}.
To prove~\eqref{eq.21}, one can notice that if $B$ is a bounded operator, then
\begin{equation}\label{eq.28napp}
\No{B\phi_n}=\bigg\|\bigg(\sum_m\sum_s\braket{\psi_m}{B\phi_s}\braket{\phi_s}{\cdot}\psi_m\bigg)(\phi_n)\bigg\|=\bigg\|\sum_m\sum_s\braket{\psi_m}{B\phi_s}\delta_{sn}\psi_m\bigg\|=\sup_m\abs{B_{mn}},
\end{equation}
where, in the first equality, we used the first series expansion in~\eqref{eq.matop} of the bounded operator $B$ --- converging in the strong operator topology --- while the last quality follows from the Parseval identity (cf.~\eqref{eq.11n}). Hence, from~\eqref{eq.22}, and taking into account relation~\eqref{eq.28napp},  we can conclude that $\No{B}=\sup_{m,n}\abs{B_{mn}}=\sup_n\No{B\phi_n}$.
\end{proof}
We next prove the main result providing the characterization of adjointable operators acting between two $p$-adic Hilbert spaces $\Hi$ and $\Ki$.
\begin{proof}[Proof of Theorem \ref{th.2.7}]
Let $B$ be a bounded operator. Owing to the second series expansion in~\eqref{eq.matop} we have, for all $\zeta\in\Ki$ and $\chi\in\Hi$, that
\begin{equation}\label{eq.31napp}
    \braket{\zeta}{B\chi}=\sum_n\sum_m B_{mn}\braket{\zeta}{\psi_m}\braket{\phi_n}{\chi}=\sum_n\bigg(\overline{\sum_m\overline{B_{mn}}\braket{\psi_m}{\zeta}}\bigg)\braket{\phi_n}{\chi}.
\end{equation}
Since $\chi$ is an arbitrary vector in $\Hi$, we can argue that  
\begin{equation}\label{eq.36napp}
\{\xi^\zeta_n\equiv\sum_m\overline{B_{mn}}\braket{\psi_m}{\zeta}\}_{n\in\N}\in\ell^\infty,
\end{equation}
i.e., $\{\xi_n^\zeta\}_{n\in\N}$ 
is a bounded sequence. Using the isomorphism $\mathcal{E}_{\Phi}\colon\ell^\infty\rightarrow\Hi^\prime$ (see Remark~\ref{rem.2.4}), we have that
\begin{equation}
    \braket{\zeta}{B\chi}=\sum_n\sum_m\overline{\xi_n^\zeta}\braket{\phi_n}{\chi}=\big(\mathcal{E}_\Phi(\overline{\xi^\zeta})\big)(\chi).
\end{equation}
Assume now that $B\in\Bad{\Hi}{\Ki}$. Then, for some $\eta=B^\ast\zeta\in\Hi$, the following relation holds 
\begin{equation}
    \big(\mathcal{E}_\Phi(\overline{\xi^\zeta})\big)(\chi)=\braket{\zeta}{B\chi}=\braket{B^\ast\zeta}{\chi},\quad \forall\chi\in\Hi;
\end{equation}
from this, and exploiting Proposition~$3.36$ in~\cite{aniello2023trace}, we can argue that $\{\xi_n^\zeta\}_{n\in\N}\in c_0$ and, moreover, that
\begin{equation}
\sum_n\xi_n^\zeta\phi_n=B^\ast\zeta\in\Hi.
\end{equation}
Using the explicit form of $\xi_n^\zeta$ in~\eqref{eq.36napp}, we can retrieve the matrix representation of the operator $B^\ast$, i.e.,
\begin{equation}
B^\ast=\op_{\Psi,\Phi}(B^\ast_{mn})=\op_{\Psi,\Phi}(\overline{B_{nm}}).
\end{equation}
Concluding, the adjoint $B^\ast$ of $B$ is a bounded (see~\eqref{eq.30}), all-over matrix operator and, in the light of Theorem~\ref{th.2.5}, its matrix coefficients $B^\ast_{mn}=\overline{B_{nm}}$ have to satisfy conditions~\ref{cond.ad.i}--\ref{cond.ad.iii}. By further taking into account that $\sup_{m,n}\abs{\overline{B_{nm}}}=\sup_{m,n}\abs{B_{mn}}$, relation~\eqref{norm.adj} is also proved.

Conversely, let us suppose that $B=\op_{\Phi,\Psi}(B_{mn})$ is such that its matrix elements $B_{mn}$ satisfy conditions~\ref{cond.ad.i}-\ref{cond.ad.iii}. Then, from~\ref{cond.ad.i} and~\ref{cond.ad.ii}, it follows that $B$ is an all-over bounded matrix operator, and expansion~\eqref{eq.31napp} holds for every $\zeta$ in $\Ki$ and $\chi$ in $\Hi$. Moreover if, for every $m\in\N$, $\lim_n B_{mn}=0$ then, setting $B^\ast=\op_{\Psi,\Phi}(B^\ast_{mn})=\op_{\Psi,\Phi}(\overline{B_{nm}})$, we have that the series $\sum_m\sum_n B^\ast_{mn}\braket{\psi_n}{\zeta}\phi_m$ must converge to some vector $\eta=\op_{\Psi,\Phi}(B_{mn}^\ast)\zeta\in\Hi$. Therefore, for every $\zeta$ in $\Ki$, there is some $\eta=\op_{\Psi,\Phi}(B_{mn}^\ast)\zeta$ in $\Hi$, such that
\begin{equation}\label{eq.33app}
\braket{\zeta}{B\chi}=\braket{B^\ast\zeta}{\chi}=\braket{\eta}{\chi}.
\end{equation}
Since relation~\eqref{eq.33app} determines $B^\ast$ uniquely, this concludes the proof.
\end{proof} 
We conclude with 
\begin{proof}[Proof of Theorem~\ref{th.tracla}]
Let $\Hi$ and $\Ki$ be $p$-adic Hilbert spaces over the same quadratic extension $\Qe$ of $\Q_p$, and let $\Phi\equiv\{\phi_n\}_{n\in\N}$ and $\Psi\equiv\{\psi_m\}_{m\in\N}$ be two orthonormal bases in $\Hi$ and $\Ki$ respectively. To prove that $\big\langle \T(\Hi,\Ki),\No{\cdot},\HSprod, \{\E{jk}{\Phi,\Psi}{}\}_{j,k\in\N}\big\rangle$ is a $p$-adic Hilbert space, we first show that $\HSprod$ is a (non-degenerate)  $p$-adic inner product. Indeed, let $T\in\T(\Hi,\Ki)$; we have:
\begin{equation}
\hsbraket{\E{jk}{\Phi,\Psi}{}}{T}=\tr\big(\ketbra{\phi_k}{\psi_j}T\big)=T_{jk},
\end{equation}
where, in the first equality, we used the fact that $\big(\E{jk}{\Phi,\Psi}{}\big)^\ast=\func{\psi_j}\phi_k=\ketbra{\phi_k}{\psi_j}$. This then entails that 
\begin{equation}
\hsbraket{\E{jk}{\Phi,\Psi}{}}{T}=0,\;\;\forall j,k\in\N\quad\implies\quad T=0,
\end{equation}
i.e., $\HSprod$ is non-degenerate. Moreover, by observing that 
\begin{equation}
\abs{\hsbraket{S}{T}}=\abs{\tr(S^\ast T)}=\Big|\sum_n\braket{\phi_n}{S^\ast T\phi_n}\Big|\leq\No{S}\,\No{T},
\end{equation}
we see that $\HSprod$ satisfies the Cauchy-Schwarz inequality too, i.e., it is a non-Archimedean (or $p$-adic) inner product on $\T(\Hi,\Ki)$ (see Subsection~\ref{subsec.2.1}). We now prove that $\{\E{jk}{\Phi,\Psi}{}\}_{j,k\in\N}$ is an orthonormal basis in $\T(\Hi,\Ki)$. Indeed, it is clear that this set is orthonormal:
\begin{equation}
\hsbraket{\E{mn}{\Phi,\Psi}{}}{\E{jk}{\Phi,\Psi}{}}=\tr\big(\ketbra{\phi_n}{\psi_m}\,\ketbra{\psi_j}{\phi_k}\big)=\delta_{mj}\delta_{nk},
\end{equation}
and, for every finite subset $I\subset\N\times\N$ and $\{\alpha_{ij}\}_{i,j\in I}\subset\Qe$,
\begin{equation}
\Big\|\sum_{i,j\in I}\alpha_{ij}\E{ij}{\Phi,\Psi}{}\Big\|=\max_{i,j\in I}\abs{\alpha_{ij}}.
\end{equation}
Let us now introduce the family of matrix operators $\Top{l}$, $l\in\N$, defined, for every trace class operator  
$T=\op_{\Phi,\Psi}(T_{mn})$, as
\begin{equation}
\Top{l}\coloneqq\sum_{\max\{m,n\}\leq l}T_{mn}\E{mn}{\Phi,\Psi}{}.
\end{equation}
Then, $\forall l\in\N$, $\Top{l}\in\B_{\Phi,\Psi}$, and $\lim_l\Top{l}$ will converge to an operator in $\overline{\B_{\Phi,\Psi}}^{\No{\cdot}}(=\T(\Hi,\Ki))$. We now want to prove that $\Top{l}$ converges precisely to $T$. Indeed, noting that
\begin{equation}
\No{T-\Top{l}}=\Bigg\|\sum_{\max\{m,n\}>l}T_{mn}\E{mn}{\Phi,\Psi}{}\Bigg\|=\sup\big\{\abs{T_{mn}}\mid \max\{m,n\}>l\big\},
\end{equation}
and since $T\in\T(\Hi,\Ki)$ --- which entails that, $\forall\epsilon>0$, there exists $j_\epsilon\in\N$ such that $\max\{m,n\}>j_\epsilon$ $\implies$ $\abs{T_{mn}}<\epsilon$ --- we have, for $l\geq j_\epsilon$
\begin{equation}
\No{T-\Top{l}}=\sup\big\{\abs{T_{mn}}\mid \max\{m,n\}>l\big\}<\sup\big\{\abs{T_{mn}}\mid \max\{m,n\}>j_\epsilon\big\}<\epsilon,
\end{equation}
i.e., $\lim_l\Top{l}=T$. Concluding, $\{\E{jk}{\Phi,\Psi}{}\}_{j,k\in\N}$ is an orthonormal system whose span is dense in $\T(\Hi,\Ki)$, i.e., it is an orthonormal basis in $\T(\Hi,\Ki)$. This concludes the proof as we have proven that the $p$-adic Banach space $\T(\Hi,\Ki)$ can be endowed with a $p$-adic inner product --- i.e., $\HSprod$ --- and the orthonormal basis $\{\E{jk}{\Phi,\Psi}{}\}_{j,k\in\N}$.
\end{proof}

\end{document}